\newtheorem{lem}{Lemma}
\newtheorem{conj}{Conjecture}
\DeclareSymbolFont{Letters} {U}{zeur}{m}{n}% Euler
\DeclareMathSymbol{\talpha}  {\mathalpha}{Letters}{"0B}
\DeclareMathSymbol{\tbeta}   {\mathalpha}{Letters}{"0C}
\DeclareMathSymbol{\tgamma}  {\mathalpha}{Letters}{"0D}
\DeclareMathOperator{\Coker}{Coker}
\newcommand{\bea}{\begin{eqnarray}}
\newcommand{\eea}{\end{eqnarray}}
\newcommand{\be}{\begin{equation}}
\newcommand{\ee}{\end{equation}}
\newcommand{\Z}{{\mathbb Z}}
\newcommand{\R}{{\mathbb R}}
\newcommand{\C}{{\mathbb C}}
\def\Tr{{\rm Tr \,}}
\def\frak{\mathfrak}
\def\tilde{\widetilde}
\def\hat{\widehat}
\def\bar{\overline}
\def\CB{{\mathcal B}}
\def\CC{{\mathcal C}}
\def\CF{{\mathcal F}}
\def\CH{{\mathcal H}}
\def\CI{{\mathcal I}}
\def\CL{{\mathcal L}}
\def\CM{{\mathcal M}}
\def\CN{{\mathcal N}}
\def\CS{{\mathcal S}}
\def\CT{{\mathcal T}}
\def\CW{{\mathcal W}}
\renewcommand{\bar}{\overline}
\renewcommand{\hat}{\widehat}
\DeclareMathAlphabet      {\mathbi}{OML}{cmm}{b}{it}
\definecolor{ao(english)}{rgb}{0.0, 0.5, 0.0}
\preprint{CALT-2019-048}
\title{3d-3d correspondence for mapping tori}
\author[1]{Sungbong Chun}
\author[2,3]{Sergei Gukov}
\author[2]{Sunghyuk Park}
\author[2]{Nikita Sopenko}
\date{\today}
\affiliation[1]{New High Energy Theory Center, Rutgers University, Piscataway, NJ 08854, USA}
\affiliation[2]{California Institute of Technology, Pasadena, CA 91125, USA}
\affiliation[3]{Max-Planck-Institut f\"ur Mathematik, Vivatsgasse 7, D-53111 Bonn, Germany}
\emailAdd{winnety99@gmail.com}
\emailAdd{gukov@theory.caltech.edu}
\emailAdd{spark3@caltech.edu}
\emailAdd{niksopenko@gmail.com}
\abstract{One of the main challenges in 3d-3d correspondence is that no existent approach offers a complete description of 3d $\CN=2$ SCFT $T[M_3]$ --- or, rather, a ``collection of SCFTs'' as we refer to it in the paper --- for all types of 3-manifolds that include, for example, a 3-torus, Brieskorn spheres, and hyperbolic surgeries on knots. The goal of this paper is to overcome this challenge by a more systematic study of 3d-3d correspondence that, first of all, does not rely heavily on any geometric structure on $M_3$ and, secondly, is not limited to a particular supersymmetric partition function of $T[M_3]$. In particular, we propose to describe such ``collection of SCFTs'' in terms of 3d $\CN=2$ gauge theories with ``non-linear matter'' fields valued in complex group manifolds. As a result, we are able to recover familiar 3-manifold invariants, such as Turaev torsion and WRT invariants, from twisted indices and half-indices of $T[M_3]$, and propose new tools to compute more recent $q$-series invariants $\hat Z (M_3)$ in the case of manifolds with $b_1 > 0$. Although we use genus-1 mapping tori as our ``case study,'' many results and techniques readily apply to more general 3-manifolds, as we illustrate throughout the paper.}
\begin{document}
\setcounter{tocdepth}{3}
\maketitle

\section{Introduction and motivation}

3d-3d correspondence, originally proposed in \cite{Dimofte:2010tz}, relates (quantum) topology of 3-manifolds to physics of 3-dimensional supersymmetric gauge theories in various backgrounds.

In particular, to an arbitrary 3-manifold $M_3$ and a choice of ADE type group $G$ it assigns a topological invariant $T[M_3,G]$ valued in 3d $\CN=2$ superconformal field theories. Over the years, it has been shown that many numerical and homological 3-manifold invariants that admit an indepenent mathematical definition factor through $T[M_3,G]$, in a sense that they can be computed as partition functions and spaces of BPS states of the theory $T[M_3,G]$. Among the invariants discussed in \cite{Dimofte:2010tz} are the moduli space of a theory on a circle (with KK modes included) and the so-called ``K-theory version'' of the vortex partition function. Both will be the key players here, albeit the latter will be replaced by its close cousin $\hat Z_a (q) : = \hat Z_{S^1 \times_q D^2} (\CB_a)$. More recently, $T[M_3]$ has been studied in many other backgrounds, with and without additional defects, boundaries, {\it etc.}:
$$
\xymatrixcolsep{4pc}\xymatrix{
	& & ~~~ \text{SW} \\
	\boxed{~{\text{3-manifold} \atop M_3}~} \ar[r] &
	\boxed{~{\text{3d $\CN=2$ theory} \atop T[M_3]}~}
	\ar[ur] \ar[r] \ar[dr] \ar[ddr] & ~~ \CM_{\text{flat}} (M_3, G_{\C}) \\
	& & ~~\vdots \\
	& & ~~~ \hat Z_a (q)}
$$
Here, and mostly throughout this paper, we assume $M_3$ is closed. If the boundary $\partial M_3 \ne \emptyset$, then $T[M_3]$ is, in fact, a boundary condition in 4d $\CN=2$ theory \cite{Dimofte:2011ju,Cecotti:2011iy}. And, similarly, if $M_3$ is the boundary of a 4-manifold, then $T[M_3]$ comes equipped with a 2d $\CN=(0,2)$ boundary condition \cite{Gadde:2013sca}. This leads to far-reaching implications and ensures that the QFT$_3$-valued invariant $T[M_3]$ is functorial.

Among other things, the functoriality requires the vacua of 3d $\CN=2$ theory $T[M_3]$ on a small circle to contain {\it all} flat connections on $M_3$, reducible and irreducible, abelian and non-abelian\footnote{The precise role of $\CM_{\text{vacua}} ( T[M_3,G] )$ will be explained in section \ref{sec:Mflat}; for now the reader can think of it as a target space of the effective two-dimensional theory.}
\be
\CM_{\text{vacua}} \left( T[M_3,G] \right)
\; \supset \;
\CM_{\text{flat}} \left( M_3 , G_{\C} \right)
\label{Mvacua}
\ee
The same conclusion follows from \cite{Cordova:2013cea,Chung:2014qpa} and can also be seen via exchanging the order of compactification on $S^1$ and $M_3$. Of course, on each branch of vacua \eqref{Mvacua} the theory can be simpler than the full theory $T[M_3]$. For example, the analysis in section~\ref{sec:Mflat} suggests that, for $G=SU(2)$, the Coulomb branch theory of $T[M_3,G]$ is determined by its abelian version $T[M_3,U(1)]$, and
\be
\CM_{\text{Coulomb}} \left( T[M_3,SU(2)] \right)
\; = \;
\frac{\C \times \CM_{\text{flat}} \left( M_3 , U(1)_{\C} \right)}{\Z_2}
\label{MCoulomb}
\ee
Unfortunately, no similar proposal for the ``Higgs branch'' of $T[M_3]$ is known at present. In the special case of 3-manifolds with non-empty toral boundaries, the construction \cite{Dimofte:2011ju} offers the best candidate for the Higgs branch of the 3d $\CN=2$ theory $T[M_3]$. This construction does not include abelian flat connections, which are crucial for computing WRT invariants and Floer homology of $M_3$ from $T[M_3]$. Mistaking this Higgs branch theory for $T[M_3]$ would be analogous to mistaking the Higgs branch theory \cite{Moore:2011ee} of ``class S theory'' for the class S theory itself \cite{Gaiotto:2008cd,Gaiotto:2009we}. Yet, in the literature on 3d-3d correspondence, this distinction is sometimes overlooked and $T[M_3]$ is (incorrectly) replaced by its Higgs branch theory.

One goal of the present paper is to correct this fallacy. In fact, abelian flat connections \eqref{MCoulomb} will play a crucial role in all aspects of the story. For example, they will enter our study of new $q$-series invariants $\hat Z_a (M_3;q)$ that, at roots of unity, are supposed to be related to more traditional Witten-Reshetikhin-Turaev (WRT) invariants \cite{witten1989quantum,reshetikhin1991invariants}. How these invariants, old and new, are encoded in the data of 3d $\CN=2$ theory $T[M_3]$ is by now fairly well understood for 3-manifolds with $b_1 = 0$ ({\it i.e.} for rational homology spheres), and one of our main goals is to explore the novelties of 3d-3d correspondence for 3-manifolds with $b_1 > 0$. A new feature of 3-manifolds with $b_1 > 0$ is that their moduli spaces of flat connections have positive dimension.
The simplest class of such manifolds consists of mapping tori, obtained by identifying two boundaries of the mapping cylinder $\Sigma \times I$ via an orientation-preserving homeomorphism $\varphi \in \text{MCG} (\Sigma)$. Specifically,
\be
M_3 \; = \; \Sigma \times [0,1] / \sim \,,
\qquad (x,0) \sim (\varphi (x),1)
\label{mappingtori}
\ee
where, in general, $\Sigma$ can be any orientable surface, possibly with punctures. It is well known \cite{MR3109869} that even among mapping tori of genus 1 --- with $\Sigma = T^2$ and $M_3$ labeled by a choice of $\varphi \in SL(2,\Z)$ --- there are examples of pairs, $M_3$ and $M_3'$, that can not be distinguished by WRT invariants (see also \cite{MR1204405}). It therefore raises a natural questions whether such pairs can be distinguished by the $q$-series invariants $\hat Z_a$. Our results indicate that the answer to this question may, surprisingly, be ``yes'' in a rather interesting way that involves abelian flat connections and labels ``$a$'' of $\hat Z_a$ as key ingredients.\footnote{In particular, we expect theories $T[M_3^+]$ and $T[M_3^-]$ to be different, where $M_3^{\pm}$ are mapping tori with monodromies $\pm \varphi$.}

Another motivation and another aspect of 3d-3d correspondence where abelian flat connections are unavoidable has to do with twisted indices and the Heegaard Floer homology $HF^+ (M_3)$. It makes a subtle and surprising appearance in the study of $q$-series invariants $\hat Z_a (q)$, for reasons which are not completely clear at present. One possible explanation could involve a relation between $HF^+ (M_3)$ and homology groups categorifying $\hat Z_a (q)$, similar to the relation between the corresponding knot homologies, namely the knot Floer homology and the Khovanov homology. We hope that our study, focused on 3-manifolds with $b_1 > 0$, will serve as a useful step in future developments and better understanding of such questions.

An added benefit of discussing in parallel rather different looking invariants $\hat Z_a (M_3)$ and $HF^+ (M_3)$ in the context of 3d-3d correspondence is that it sheds light on the following important question: What does a 4d TQFT categorifying $\hat Z_a (M_3)$ associate to $\Sigma$? By general rules of extended TQFT it should be a category, in which 3-manifolds bounded by $\Sigma$ define objects, and Hom spaces correspond to gluing along $\Sigma$. So, the question really is: Concretely, what is this category associated to $\Sigma$? This is precisely where mapping tori \eqref{mappingtori} are helpful. The category in question has $\text{MCG} (\Sigma)$ as a group of autoequivalences, {\it cf.} \cite{Gukov:2007ck,MR3319701}, and the trace functor associated with $\varphi$ should return $Q$-cohomology of $T[M_3]$ on $\R \times D^2$. When $\varphi = {\bf 1}$, this trace (decategorification) map is simply the Grothendieck group.

The paper is organized as follows. We start in section~\ref{sec:Mflat} by analyzing more carefully what role the moduli space of $G_{\C}$ flat connections on $M_3$ plays in 3d $\CN=2$ theory $T[M_3,G]$. In particular, we ask whether quantum corrections can modify the semi-classical picture \eqref{MCoulomb} by lifting some of the vacua, and how the vacua of 3d theory (not compactified on a circle) differ from \eqref{MCoulomb}. The first approximation to the answer can be gleaned from the fact that, when $T[M_3,G]$ is compactified on a circle of a finite but small radius, the effective theory is two-dimensional and explores all of its vacua due to long-range fluctuation. Therefore, $\CM_{\text{vacua}}$ is better viewed as a ``target space'' than the ``space of vacua.'' More interestingly, for $G=U(1)$ and also in many cases for $G=SU(2)$, we find that $\CM_{\text{vacua}}$ has the interpretation directly in three dimensions, as a target space of 3d $\CN=2$ theory $T[M_3,G]$ itself, with non-trivial interacting SCFTs residing at singularities of $\CM_{\text{vacua}}$. A better and more accurate version of this 3d interpretation is that $T[M_3,G]$ should be viewed not as a single SCFT, but rather as a collection of SCFTs parametrized by points of the moduli space of $G_{\C}$ flat connections on $M_3$. We refer to this collection of SCFTs as a {\it ``sheaf'' of SCFTs}.

Then in section~\ref{sec:Zhat} we explore $\hat{Z}_b(q)$ invariants for some class of 3-manifolds with positive first betti number, namely plumbings with loops and 0-surgery on knots. One important lesson we learn is that in order to reproduce WRT invariants for such three-manifolds it is not enough to use abelian $\hat{Z}_b$ that were introduced in \cite{Gukov:2018iiq}, but rather, we need more blocks that we call \emph{almost abelian}. Moreover for 0-surgery on knots we observe a nice interplay between these $\hat{Z}_b$ and $F_K$, a two-variables series recently introduced in \cite{Gukov:2019mnk}. 

In section~\ref{sec:HF} we discuss twisted indices of $T[M_3]$ on different manifolds and their relation to known topological invariants such as Turaev torsion and Heegaard Floer homology. 
We conclude in secton~\ref{sec:discuss} with a brief summary of possible generalizations and future directions.

\begin{figure}[ht]
	\centering
	\includegraphics[width=2.7in]{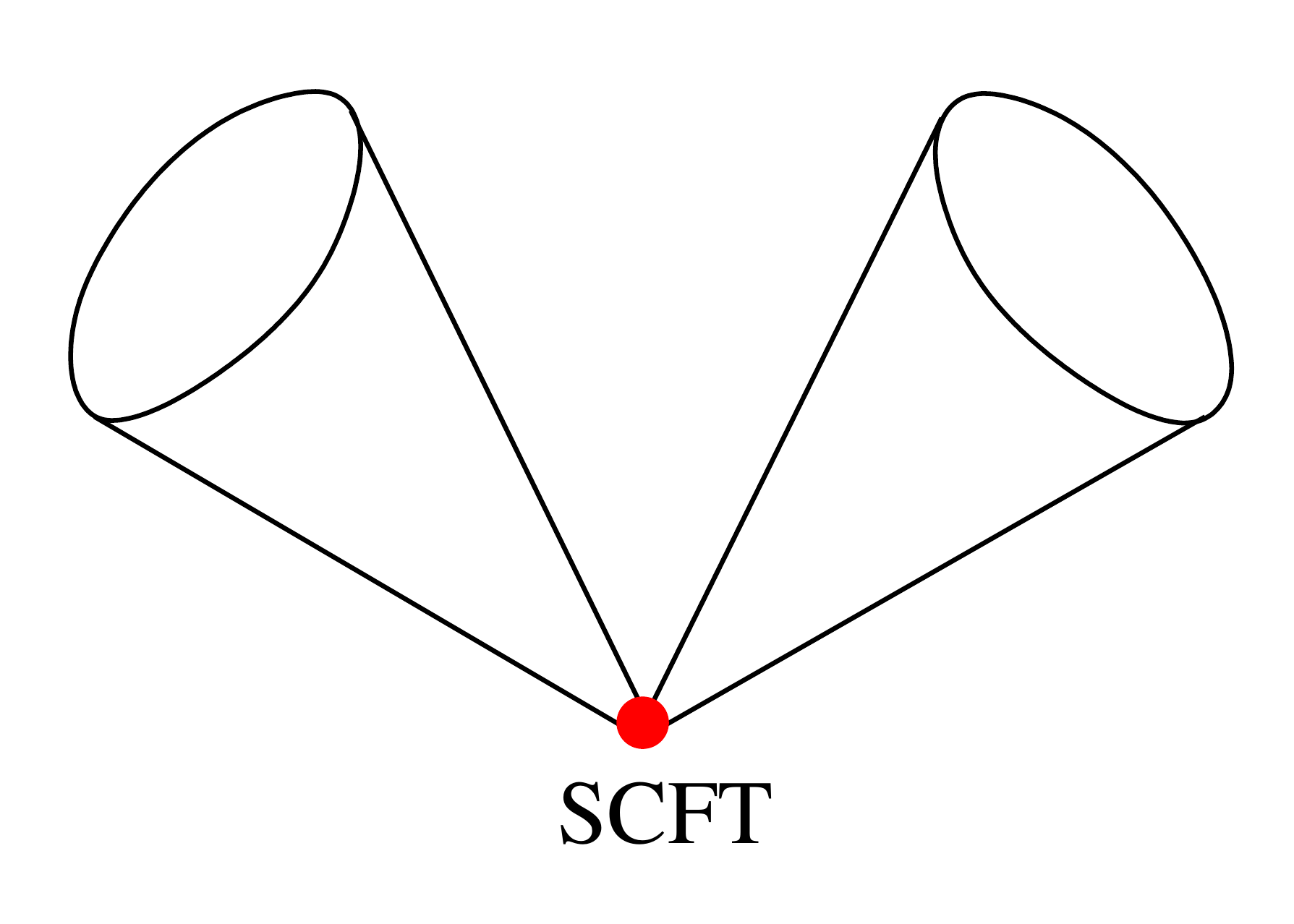}
	\caption{A cartoon illustrating different branches of vacua in a SCFT. Aside from the original SCFT, one can also speak of a theory on each given branch.}
	\label{fig:branches}
\end{figure}

\section{``Coulomb'' and ``Higgs'' branches}
\label{sec:Mflat}

In standard terminology \cite{Intriligator:1995au}, a Higgs branch usually refers to vacua parametrized by scalar fields in matter multiplets where gauge group is completely broken. In contrast, a Coulomb branch is parametrized by scalars in vector multiplets which leave a non-trivial abelian part of the gauge symmetry unbroken. A natural adaptation of this terminology to 3d-3d correspondence, at least for $G$ of rank 1, would mean that branches of the moduli space \eqref{Mvacua} where flat connections have a stabilizer of positive dimension ({\it i.e.} reducible flat connections) play the role of Coulomb branches, while irreducible flat connections should be viewed as analogues of Higgs branches. This terminology is consistent with the uses of ``Coulomb'' and ``Higgs'' in theories with small amounts of supersymmetry, see {\it e.g.} \cite{Aharony:1997bx} for the context of 3d $\CN=2$ physics relevant to us here.

In general, the space of flat connections on any manifold can be described by homomorphisms from its fundamental group to the gauge group, modulo conjugations. In particular,
\be
\CM_{\text{flat}} \left( M_3 , G_{\C} \right) \; = \; \text{Hom} \big( \pi_1 (M_3), G_{\C} \big) \, / \, \text{conj.}
\label{Mflat}
\ee

\subsection{$T[M_3]$ as a ``sheaf'' of SCFTs}

In string theory and in quantum field theory, one of the standard tools is a compactification on a circle $S^1$ or, more generally, on a manifold $M_n$ of dimension $n$. This tool is very useful because it relates the physics of theories in dimension $d$ and $d-n$.
In particular, for $d-n>2$, field configurations of a $d$-dimensional theory that satisfy the required equations of motion along $M_n$ become classical vacua of the effective $(d-n)$-dimensional theory.

With sufficiently large amount of supersymmetry, one can be sure that all such vacua are present in the quantum theory as well, and each choice of vacuum, {\it i.e.} each choice of the background on $M_n$, then flows to a SCFT$_{d-n} (v)$ parametrized by $v \in \CM_{\text{vacua}}$. Depending on the context, $\CM_{\text{vacua}}$ can be either a discrete set of points ({\it e.g.} different flux sectors) or a continuum (parametrized by moduli), or a union of components of both types.

Typically, generic points on the continuum part of $\CM_{\text{vacua}}$ correspond to relatively simple SCFTs, which become more interesting interacting SCFTs at special loci of this moduli space. {\it A priori}, all such SCFTs can be quite different and each SCFT only captures the physics of its local neighborhood on $\CM_{\text{vacua}}$. In order to describe the global structure of the effective $(d-n)$-dimensional theory one therefore needs a collection of SCFTs glued together in a way reminiscent of a sheaf. In the physical problem at hand, the ``stalk'' at a given point $v \in \CM_{\text{vacua}}$ provide a local description that is captured by the SCFT at this point, but its global structure describes the way these local patches are glued together.\footnote{Note, that the data of a SCFT at a given point $v \in \CM_{\text{vacua}}$ is encoded in an algebra, namely its OPE algebra. It also knows about the SCFTs in the neighborhood of  $v$, but cannot reach points far in the space of vacua $\CM_{\text{vacua}}$; for this one needs to use several charts which agree on overlaps. Although this structure is reminiscent of the mathematical notion of a sheaf, the analogy is not quite precise since in the present physical setup it is not clear what algebraic structure is assigned to a general open set on $\CM_{\text{vacua}}$. For this reason, we use the word ``sheaf'' in quotes when applied to the collection of SCFTs parametrized by points of $\CM_{\text{vacua}}$ and glued together.}

This way of looking at the physics of the partially compactified theory is, in fact, rather common, although perhaps not phrased in exactly this language. Consider for example, successive compactifications of the six-dimensional $(2,0)$ theory on circles. At first step, one finds a maximally supersymmetric Yang-Mills theory, with a dimensionful gauge coupling constant determined by the size of the circle. Then, a further compactification on another $S^1$ gives a maximally supersymmetric ($\CN=4$) theory in four dimensions. Unless the size of the circle(s) is taken to zero, one of the scalar fields in the effective 4d theory is $G$-valued, parametrized by holonomy of the 5d gauge field $A$. (The other five scalars are $\text{Lie} (G)$-valued, as in the original 5d theory.) As a result, the moduli space of the effective 4d theory is $\C^* \times \C^2$ when $G=U(1)$ and
\be
\frac{\C^* \times \C^2}{\Z_2}
\label{Mvac4d}
\ee
when $G=SU(2)$. The neighborhood of each orbifold point here is well described by 4d $\CN=4$ super-Yang-Mills (SYM) theory which has the moduli space of vacua $\C^3 / \Z_2$. The latter, clearly, does not capture the low-energy physics at arbitrary points of \eqref{Mvac4d} and one needs two copies of 4d $\CN=4$ SYM to describe the effective physics of 5d theory compactified on a circle of a finite size.

Similarly, the next step, {\it i.e.} a further compactification of the 6d $(2,0)$ theory on a third circle of small but finite size gives a collection of SCFTs with very rich BPS spectra. By definition, this data is $T[M_3,G]$ with $M_3 = T^3$. When $G=U(1)$, the moduli space of this three-dimensional theory (in flat space) is $\C \times (\C^*)^3$, where two copies of $\C^*$ can be understood as holonomies of 5d gauge field along the two cycles of $T^2 = S^1 \times S^1$ paired up with 5d scalars, whereas the third copy of $\C^*$ is parametrized by the dual 3d photon (which is also a periodic scalar in three dimensions) paired with a non-compact scalar \cite{Aharony:1997bx}. This field content can be equivalently summarized in a form of four 3d $\CN=2$ chiral multiplets, three of which are $\C^*$-valued and one is $\C$-valued. This is the same as the moduli space of vacua --- or, rather, the target space --- of the 3d theory on a circle, {\it cf.} \eqref{Mabelian}.

As we explain later in this section, in the non-abelian case too, namely for $G=SU(2)$, the moduli space of 3d theory in flat space is the same as, {\it cf.} \eqref{MCoulomb},
\be
\CM_{\text{vacua}} \; = \; \frac{(\C^*)^3 \times \C}{\Z_2}
\label{T3vacua}
\ee
It has 8 singular point, each described by a copy of 3d $\CN=8$ ABJM theory~\cite{Aharony:2008ug}. Again, each ABJM theory only provides a local description of its local neighborhood, modeled on a space of vacua~$\C^4/\Z_2$. If we restrict our attention to only one copy of ABJM theory, we loose the fact that there are additional singular points. In order to recover them all, we need to patch eight copies of the ABJM theory into a ``sheaf'' of SCFTs~$T[M_3]$.

This example of $T[M_3]$ with $M_3 = T^3$ will be a useful prototype for what to expect in the case of general 3-manifolds. In particular, our next goal is to show that $\CM_{\text{vacua}} \left( T[M_3,U(1)] \right)$ can be always understood as a space of vacua of 3d theory in flat space ({\it i.e.} on $\R^3$), for any $M_3$.

\begin{figure}[ht]
	\centering
	\includegraphics[width=3.0in]{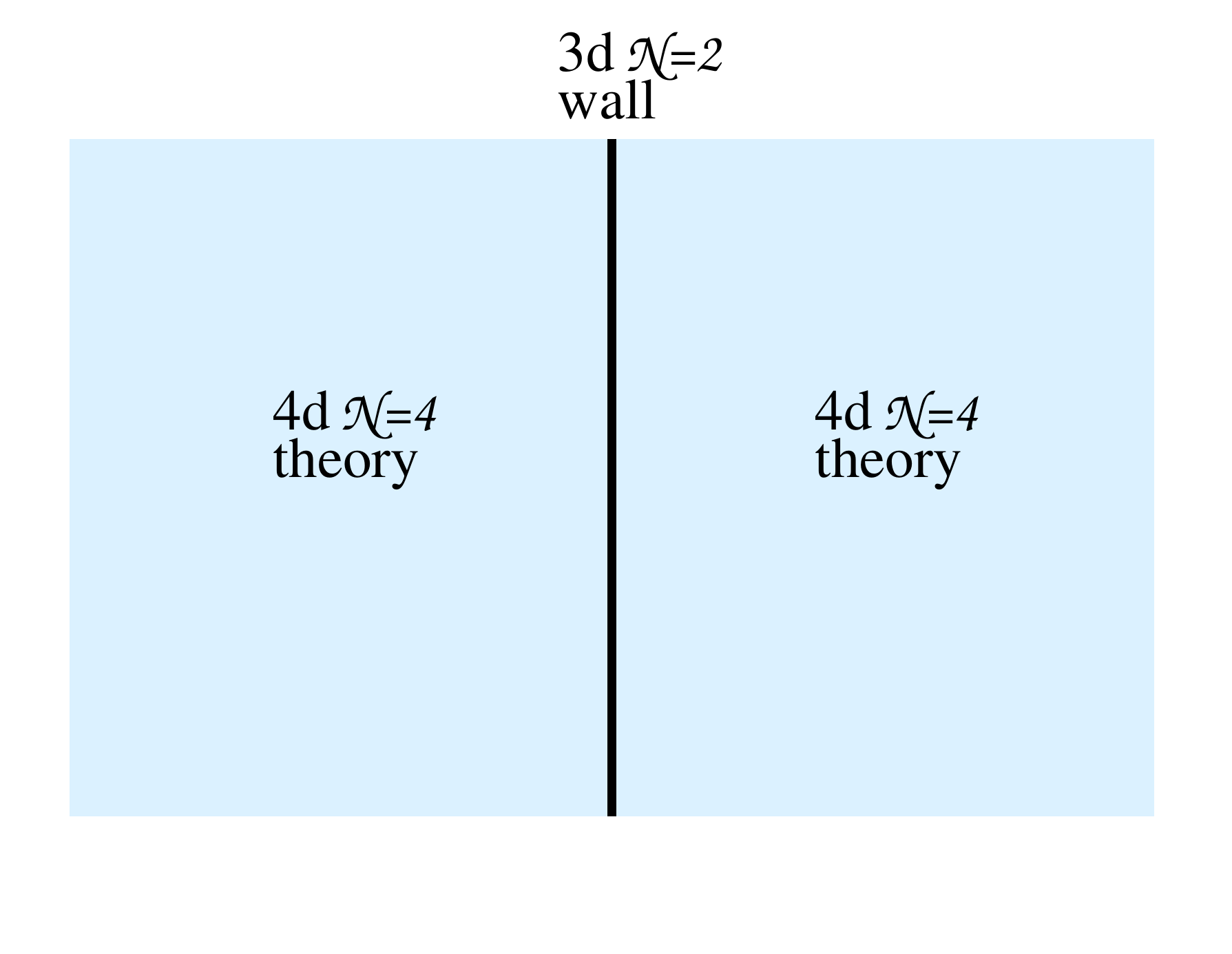}
	\caption{For genus-1 mapping tori, $T[M_3]$ can be constructed from $\frac{1}{4}$-BPS walls in 4d $\CN=4$ theory. Note, the 4d $\CN=4$ theory in question is 5d super-Yang-Mills on a circle of {\it finite} radius, and 3d walls preserve only 4 out of 16 supercharges, {\it i.e.} $\CN=2$ supersymmetry in three dimensions.}
	\label{fig:BPSwall}
\end{figure}

\subsection{Moduli spaces for $G=U(1)$}
\label{sec:abelian}

Abelian version of 3d-3d correspondence is a useful guide to the general case. In particular, it can often shed light on what algebraic structures in 3d $\CN=2$ theory $T[M_3,G]$ one should expect for different constructions of 3-manifolds, how much supersymmetry the basic ingredients preserve, {\it etc.}

For $G=U(1)$, the theory $T[M_3]$ is defined as the effective 3d theory on a single five-brane, partially twisted along the 3-manifold $M_3$. Since in this case the 6d theory on the five-brane world-volume is Lagrangian --- namely, it is a theory of a free 6d $(0,2)$ tensor multilet --- the resulting 3d theory $T[M_3]$ can be simply obtained by the usual rules of the Kaluza-Klein (KK) reduction.

In particular, due to the partial topological twist along $M_3$, three out of five real scalars in 6d theory turn into a 1-form on $M_3$; its KK modes contain $b_1$ scalars and become supersymmetric partners of $b_1$ vector fields which are KK modes of the tensor field in six dimensions. The other two scalars of the 6d theory can be combined into one complex scalar. It has only one obvious KK mode which is a complex scalar field $\phi_0$ in a 3d chiral multiplet that we denote $\Phi_0$. The field $\phi_0$ has a simple geometric interpretation; it describes the displacement of the five-brane along $\R^2 \subset \R^5$ transverse to the five-brane world-volume $\R^3 \times M_3 \subset \R^5 \times T^* M_3$.

To summarize, we conclude that the spectrum of light dynamical fields in the 3d $\CN=2$ theory $T[M_3]$ includes $b_1$ abelian vector multiplets (without Chern-Simons couplings) and a free chiral multiplet $\Phi_0$. When 3d theory is considered on a circle, {\it i.e.} on $S^1 \times \R^2$, these fields contribute factors $(\C^*)^{b_1}$ and $\C$ to the space of vacua \eqref{Mvacua}. Or, directly in 3d, one can dualize $U(1)$ gauge fields into compact ({\it i.e.} periodic, circle-valued) scalars via $dA_i = * d \phi_i$. One then finds $b_1$ copies of the dual photon multiplet, each of which is a 3d $\CN=2$ chiral multiplet with $\C^*$-valued complex scalar \cite{Aharony:1997bx}.

Another, less obvious factor in $\CM_{\text{vacua}}$ is the set of discrete vacua that comes from the torsion part of $H_1 (M_3,\Z)$. One way to see it is to note that, for $G = U(1)$, all representations of $\pi_1 (M_3)$ into $G_{\C}$ factor through its abelianization, $H_1 (M_3, \Z)$. Therefore, in this case \eqref{Mflat} is the complexification of the Pontryagin dual of $H_1 (M_3, \Z)$,
\be
\CM_{\text{flat}} \left( M_3 , \C^* \right) \; = \;
\left( \C^* \right)^{b_1} \, \times \, \text{Tor} H_1 (M_3,\Z)^{\vee}
\label{Mflatabelian}
\ee
Note, that it properly accounts for all vacua of $T[M_3]$, except the ones parametrized by vevs of $\Phi_0$ (which, of course, is not too surprising since $\Phi_0$ is not affected by the topological twist and represents a different sector of the theory, independent of complex $G_{\C}$ connections). Taking into account $\Phi_0$, we get the complete moduli space
\be
\CM_{\text{vacua}} \; = \; \C \, \times \, \left( \C^* \right)^{b_1} \, \times \, \text{Tor} H_1 (M_3,\Z)^{\vee}
\label{Mabelian}
\ee
Since the fields parametrizing various components are independent, the resulting set of vacua is simply a product. The derivation of \eqref{Mabelian} and the final result are completely general; they apply to an arbitrary closed oriented 3-manifold $M_3$. In particular, they illustrate the assertion made earlier, namely that $\CM_{\text{vacua}}$ can be viewed as the space of vacua of 3d theory on a circle as well as in flat space-time. To be more precise, these two moduli spaces are related by T-duality or mirror symmetry, which relates dual photons $\phi_i$ and holonomies of the abelian gauge fields $A_i$ on a circle.\footnote{See {\it e.g.} \cite{Karch:2018mer}.} It also helps to anticipate the subtleties of such relation in the non-abelian case.

Now, let us specialize to the case of genus-1 mapping tori and see how \eqref{Mabelian} can be reproduced from the structure of a torus bundle. First, we can reduce 5d maximally supersymmetric gauge theory with gauge group $G$ on a torus $\Sigma = T^2$, the fiber of $M_3$. We obtain a maximally supersymmetric 3d gauge theory with the same gauge group, where two scalar fields are compact, parametrized by $G$-valued holonomies along the generators of $H_1 (\Sigma) \cong \Z^2$. They naturally combine with two non-compact, $\text{Lie} (G)$-valued scalars to form a bosonic content of 3d $\CN=4$ hypermultiplet with values in
\be
\C^* \times \C^* \; \cong \; \CM_{\text{flat}} \left( \Sigma , G_{\C} \right)
\label{Tabelian}
\ee
on which $\text{MCG} (\Sigma) = SL(2,\Z)$ acts in an obvious way. The rest of the fields comprise 3d $\CN=4$ vector multiplet.

In order to make contact with the space of vacua \eqref{Mabelian}, we need to compactify this 3d $\CN=4$ theory further on a circle $S^1$ with a monodromy (``duality wall'') by the element $\varphi \in \text{MCG} (\Sigma)$. The reduction of the 3d $\CN=4$ vector multiplet contributes to the moduli space of vacua a factor $\C \times \C^*$, independent of the choice of $\varphi$. On the other hand, the target space of the hypermultiplet is projected onto the fixed point set of the map $\varphi$ acting on \eqref{Tabelian}.

This is in perfect agreement with \eqref{Mabelian}. Indeed, the homology $H_1 (M_3)$ has three obvious generators, namely the generators of $H_1 (\Sigma) \cong \Z^2$ and $H_1 (S^1) \cong \Z$, with a matrix of relations given by $\varphi - {\bf 1}$. They fit into the following long exact sequence (see {\it e.g.} \cite{MR1867354}):
\be
\ldots \longrightarrow H_n(\Sigma) \xrightarrow{~\varphi_{*}-{\bf 1}~} H_n(\Sigma) \longrightarrow H_n(M_3) \longrightarrow H_{n-1}(\Sigma) \xrightarrow{~\varphi_{*}-{\bf 1}~} \ldots
\label{Hexsequence}
\ee
from which it follows that, for genus-1 mapping tori,
\be
H_1(M_3) = \mathbb{Z} \oplus \text{coker}(\varphi_*-{\bf 1}) \quad \Rightarrow \quad b_1(M_3) = \begin{cases} 1 &\text{if} \ \ \text{tr}(\varphi) \neq 2 \\ 2 &\text{if} \ \  \text{tr}(\varphi) = 2, \ \varphi \neq {\bf 1} \\ 3 &\text{if} \ \ \varphi = {\bf 1}. \end{cases}
\ee
In other words, there are three possible cases:
\begin{itemize}

\item $b_1 = 1$. This is the generic case.

\item $b_1 = 2$. In this case, $\varphi \in SL(2,\Z)$ is conjugate to $T^p = \bigl(\begin{smallmatrix} 1 & p \\ 0 & 1 \end{smallmatrix} \bigr)$, with $p \ne 0$, and $M_3$ can be also viewed as a degree-$p$ circle bundle over $T^2$. In latter presentation, its homology $H_1 (M_3) = \Z \oplus \Z \oplus \Z_p$ can be also computed via the Leray-Serre spectral sequence.

\item $b_1 = 3$. In this very special case $M_3 = T^3$.

\end{itemize}

\noindent
Here we used the standard notations for generators of $SL(2,\Z)$, which satisfy $S^2 = - {\bf 1} = (ST)^3$ and can be represented by matrices\footnote{Hopefully, this will not cause a confusion between elements $T^p = \bigl(\begin{smallmatrix} 1 & p \\ 0 & 1 \end{smallmatrix} \bigr)$ and $p$-dimensional tori, denoted in a similar way.}
\be
S \; = \; \begin{pmatrix}0 & -1 \\ 1 & 0\end{pmatrix}
\quad,\quad
T \; = \; \begin{pmatrix}1 & 1 \\ 0 & 1\end{pmatrix}
\label{SandT}
\ee
In what follows, we also denote by $U$ a $2 \times 2$ matrix representing $\varphi \in SL(2,\Z)$. It can always be expressed as a word in $S$ and $T$:
\be
U \; = \; S T^{a_1} S T^{a_2} \ldots S T^{a_n}
\label{UviaST}
\ee
for some $a_1, \ldots, a_n \in \Z$.
Using the following simple rule
\be\label{mappingtoridict} \begin{array}{cl@{\qquad}l}
	T^a & \quad = \quad \overset{\displaystyle{a}}{\bullet}
	& (\text{vertex with framing}~a) \\
	S & \quad = \quad \frac{\phantom{xxx}}{\phantom{xxx}}
	& (\text{edge})
\end{array}
\ee
we can graphically represent a genus-1 mapping torus associated with \eqref{UviaST} by a {\it plumbing graph}:
%\begin{center}
%	\includegraphics[scale=0.5]{}
%\end{center}
%where an extra component with framing coefficient $0$ is added for reasons that will become clear in section~\ref{sec:plumbings}. For now, one can keep in mind a simple rule that an extra component with framing 0 should always be linked to every loop of the plumbing graph.
\be
%\begin{center}
	\includegraphics[scale=1.2]{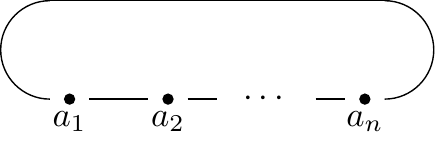}
	\label{aaaloop}
%\end{center}
\ee
The meaning of such graphs will be explained in section~\ref{sec:plumbings}, where we also consider generalizations. For now, we only note that, in terms of the linking matrix of this graph,
\be
Q \; = \;
\begin{pmatrix}
	a_1 & -1   & 0 & \cdots  & -1 \\
	-1   & a_2 & -1        &   & \vdots \\
	0 & -1 &  & \ddots  &   0 \\
	\vdots  & & \ddots & ~\ddots~  &  ~-1~  \\
	-1 &  \cdots  & 0 & -1 & a_n
\end{pmatrix}
\label{Qmatrix}
\ee
we have
\be
H_1 (M_3,\Z) \; = \; \Z \times \Z^n / Q \Z^n
\label{HmaptoriviaQ}
\ee
In particular, when $Q$ is nondegenerate, $\text{Tor} H_1(M_3,\Z)=\Z^n/Q\Z^n$.

Comparing \eqref{HmaptoriviaQ} with our discussion around \eqref{Mabelian}, we obtain a concrete description of the 3d $\CN=2$ theory $T[M_3, U(1)]$:
\be
T[M_3, U(1)]
\quad = \quad
\boxed{\begin{array}{c}
		~U(1)^{n+1} \text{ gauge theory with a matrix of Chern-Simons} \\
		~\text{levels } Q \text{ and a free chiral multiplet } \Phi_0
\end{array}}
\label{quiverCS}
\ee
Another useful description, that follows from $\Z^n/Q\Z^n \cong \Z^2 / (U - {\bf 1}) \Z^2$, is
\be
T[M_3, U(1)]
\quad = \quad
\boxed{\begin{array}{c}
		~U(1)^{3} \text{ gauge theory with a matrix of Chern-Simons} \\
		~\text{levels } U-{\bf 1} \text{ and a free chiral multiplet } \Phi_0
\end{array}}
\label{UquiverCS}
\ee
According to \cite{Gukov:2015sna,Pei:2015jsa}, the R-charge of the chiral multiplet $\Phi_0$ is
\be
R (\Phi_0) \; = \; 2
\label{RPhi0}
\ee

\subsection{Moduli spaces for $G=SU(2)$}
\label{sec:sutwomoduli}

In this case, the representations of $\pi_1 (M_3)$ into $G_{\C}$ no longer factor through $H_1 (M_3, \Z)$, and the suitable analogue of \eqref{Hexsequence} is the following exact sequence:
\be
1 \longrightarrow \pi_1 (\Sigma) \longrightarrow \pi_1 (M_3) \longrightarrow \Z \longrightarrow 1
\ee

\begin{figure}[ht]
	\centering
	\includegraphics[width=3.7in]{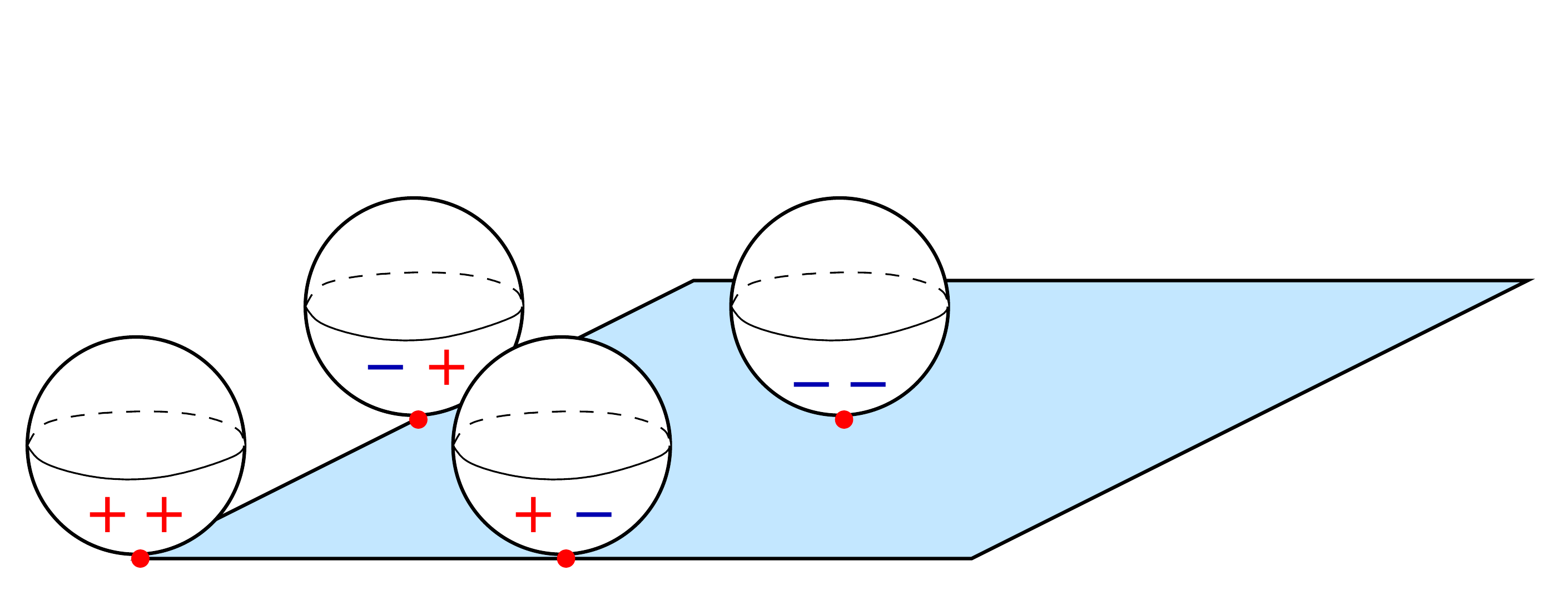}
	\caption{The structure of the moduli space of flat $G_{\C} = SL(2,\C)$ flat connections on $\Sigma = T^2$. Over each of the four orbifold points of the ``semisimple'' branch \eqref{MTsutwo}, with $x = \pm 1$ and $y = \pm 1$, one finds the corresponding ``unipotent'' branch.}
	\label{fig:T2moduli}
\end{figure}

The moduli space of $G_{\C} = SL(2,\C)$ flat connections on $\Sigma = T^2$ consists of several components: the main ``semisimple'' component and four ``unipotent'' components $D_i$, $i=1,\ldots,4$. The semisimple component is simply a quotient of \eqref{Tabelian}:
\be
\frac{\C^* \times \C^*}{\Z_2}
\label{MTsutwo}
\ee
by the Weyl group $W = \Z_2$, which acts on $(x,y) \in \C^* \times \C^*$ via $(x,y) \mapsto (x^{-1},y^{-1})$. Here, $x$ and $y$ are to be understood as holonomy eigenvalues of the $SL(2,\C)$ flat connection along $A$ and $B$ cycles of $\Sigma = T^2$. When $x = \pm 1$ and $y = \pm 1$ (with independent signs), the quotient space \eqref{MTsutwo} is singular. It has four orbifold points of the type $\C^2 / Z_2$, {\it i.e.} $A_1$ Kleinian singularities. Over each of these orbifold points, a new branch of complex flat connections opens up, as illustrated in Figure~\ref{fig:T2moduli}.

Namely, when $x = \pm 1$ and $y = \pm 1$, one can deform the $SL(2,\C)$ holonomies $A = \pm {\bf 1}$ and $B = \pm {\bf 1}$ into an upper-triangular form
\be
A \; = \;
\begin{pmatrix}
	\pm 1 & u \\
	0 & \pm 1
\end{pmatrix}
\qquad,\qquad
B \; = \;
\begin{pmatrix}
	\pm 1 & v \\
	0 & \pm 1
\end{pmatrix}
\label{ABunipotent}
\ee
which we shall call ``unipotent'' whenever $|u|^2 + |v|^2 \ne 0$. Since the two commuting holonomies $A$ and $B$ are defined only up to conjugation by $SL(2,\C)$ matrices, it is easy to see that the two complex off-diagonal values $u$ and $v$ are defined only up to simultaneous multiplication by a non-zero complex number, $(u,v) \sim (\lambda u, \lambda v)$ with $\lambda \in \C^*$. Therefore, for each choice of (independent) signs in \eqref{ABunipotent}, the corresponding unipotent branch is
\be
D_i \; = \; \frac{\C^2 - \{ 0 \} }{\C^*}
\; \cong \; {\mathbb{C}}{\mathbf{P}}^1
\qquad\qquad i = 1, \ldots, 4
\ee
To summarize, the moduli space of flat $G_{\C} = SL(2,\C)$ flat connections on $\Sigma = T^2$ has the following structure:
\be
\CM_{\text{flat}} \left( \Sigma , G_{\C} \right) \; \cong \; \frac{\C^* \times \C^*}{\Z_2}
\, \cup \, \bigcup_{i=1}^{4} D_i
\label{MTsutwototal}
\ee
Note, the unipotent components $D_i$ are \emph{not} disjoint from the semisimple component. This is clear before modding out by the $G_{\C}$ symmetry, when the two branches simply touch each other as Coulomb and Higgs branches often do. And, even after the conjugation by $G_{\C}$, the four components $D_i$ can also be interpreted as exceptional divisors associated with resolution of four $\C^2 / \Z_2$ singularities.

The moduli space of $G_{\C} = SL(2,\C)$ flat connections on the mapping cylinder $\Sigma \times I$ of the element $U = \bigl(\begin{smallmatrix} a & b \\ c & d \end{smallmatrix} \bigr) \in SL(2,\Z)$ is the correspondence
\be
\rho(U) \; : \qquad
\CM_{\text{flat}} \left( \Sigma , G_{\C} \right)
\; \to \;
\CM_{\text{flat}} \left( \Sigma , G_{\C} \right)
\ee
which acts on the main (semisimple) component \eqref{MTsutwo} in the obvious way
\be
\rho(U) \; : \qquad
\begin{array}{rcl}
	\left( \C^* \times \C^* \right) / \Z_2
	& \to & \left( \C^* \times \C^* \right) / \Z_2 \\
	 (x,y) & \mapsto & (x^a y^b , x^c y^d)
\end{array}
\label{UonCC}
\ee
Therefore, apart from the holonomy around the base circle, the moduli space of flat $G_{\C}$ connections on the mapping torus \eqref{mappingtori} is given by the intersection of the graph of function $\rho(U)$ with the diagonal $\Delta : \CM_{\text{flat}} \left( \Sigma , G_{\C} \right) \to \CM_{\text{flat}} \left( \Sigma , G_{\C} \right)$. These are the fixed points of the map $\rho(U)$. In particular, according to \eqref{UonCC}, the moduli space of flat $G_{\C}$ connections on the mapping torus coming from $\left( \C^* \times \C^* \right) / \Z_2 \subset \CM_{\text{flat}} \left( \Sigma , G_{\C} \right)$ has the form
\be
\frac{\CM_+ \times \C^*}{\Z_2} \, \bigcup \, \frac{\CM_-}{\Z_2}
\quad \subset \quad
\CM_{\text{flat}} \left( M_3 , G_{\C} \right)
\label{sspartofM}
\ee
where $\CM_{\pm}$ is defined to be the fixed point set of the simpler version of \eqref{UonCC}:
\be
\rho(U)_{\pm} \; : \qquad
\begin{array}{rcl}
	\C^* \times \C^*
	& \to & \C^* \times \C^* \\
	(x,y) & \mapsto & (x^{\pm a} y^{\pm b} , x^{\pm c} y^{\pm d})
\end{array}
\ee
Note, that $\CM_+ \times \C^*$ is simply the moduli space of complex abelian flat connections \eqref{Mflatabelian} discussed earlier, and $\CM_- $ is its close cousin comprised of flat connections twisted by the generator of the Weyl group
\be
\CM_{\pm} \; = \;
\left \{ (x,y) \in \C^* \times \C^* \left|
\begin{array}{rcl}
x^{\pm a - 1} y^{\pm b} & = & 1 \\
x^{\pm c} y^{\pm d - 1} & = & 1
\end{array}
\right.
\right \}
\label{eigeneqs}
\ee
Therefore, the contribution of the first component $\CM_+ \times \C^*$ to the moduli space of vacua \eqref{Mvacua} correspond to the ``Coulomb'' branch of the 3d $\CN=2$ theory $T[M_3]$ on a small circle \eqref{MCoulomb}, while the points of $\CM_- $ correspond to Higgs branch vacua. Note, the latter are ``almost abelian''; if not for the twist by the non-trivial Weyl group element, they would represent reducible complex flat connections on $M_3$.

The action of the mapping class group $\text{MCG} (\Sigma) \cong SL(2,\Z)$ on the other components $D_i$ in \eqref{MTsutwototal} parametrized by ``unipotent'' holonomies is slightly more involved and depends on the choice of signs in \eqref{ABunipotent}. We summarize the result of this action in Table~\ref{tab:unipotent}, where the fixed points of the map $\rho(U)$ are listed. It is easy to see that, for generic $U$ with $\Tr U \ne 2$, the solutions for $u/v$ and $z$ are indeed isolated points. These are non-abelian flat connections on $M_3$.

\begin{table}[htb]
	\centering
	\renewcommand{\arraystretch}{1.3}
	\begin{tabular}{|@{~~}c@{~~}|@{~~}c@{~~}|@{~~}c@{~~}|@{~~}c@{~~}|}
		\hline {\bf component} & ~~{\bf signs in \eqref{ABunipotent}} & ~~{\bf parity constraints} & {\bf fixed points of $\rho (U)$}
		\\
		\hline
		\hline
		\multirow{2}{*}{$D_1$} & \multirow{2}{*}{$(+,+)$} & \multirow{2}{*}{none} & \multirow{2}{*}{\Bigg(
			$\begin{matrix}
			a & b \\
			c & d
			\end{matrix}
			\Bigg)
			\Bigg(
			\begin{matrix}
			u \\
			v
			\end{matrix}
			\Bigg)
			\; = \;
			z^2
			\Bigg(
			\begin{matrix}
			u \\
			v
			\end{matrix}$
			\Bigg)} \\		
		{} & {} & {} & {} \\
		\hline
\multirow{2}{*}{$D_2$} & \multirow{2}{*}{$(+,-)$} & $b = \text{even}$ & \multirow{2}{*}{\Bigg(
	$\begin{matrix}
	a & -b \\
	-c & d
	\end{matrix}
	\Bigg)
	\Bigg(
	\begin{matrix}
	u \\
	v
	\end{matrix}
	\Bigg)
	\; = \;
	z^2
	\Bigg(
	\begin{matrix}
	u \\
	v
	\end{matrix}$
	\Bigg)} \\		
{} & {} & $d = \text{odd}$ & {} \\
		\hline
\multirow{2}{*}{$D_3$} & \multirow{2}{*}{$(-,+)$} & $a=\text{odd}$ & \multirow{2}{*}{\Bigg(
	$\begin{matrix}
	a & -b \\
	-c & d
	\end{matrix}
	\Bigg)
	\Bigg(
	\begin{matrix}
	u \\
	v
	\end{matrix}
	\Bigg)
	\; = \;
	z^2
	\Bigg(
	\begin{matrix}
	u \\
	v
	\end{matrix}$
	\Bigg)} \\		
{} & {} & $c=\text{even}$ & {} \\
		\hline
\multirow{2}{*}{$D_4$} & \multirow{2}{*}{$(-,-)$} & $a+b = \text{odd}$ & \multirow{2}{*}{\Bigg(
	$\begin{matrix}
	a & b \\
	c & d
	\end{matrix}
	\Bigg)
	\Bigg(
	\begin{matrix}
	u \\
	v
	\end{matrix}
	\Bigg)
	\; = \;
	z^2
	\Bigg(
	\begin{matrix}
	u \\
	v
	\end{matrix}$
	\Bigg)} \\		
{} & {} & $c+d = \text{odd}$ & {} \\
		\hline
	\end{tabular}
	\caption{Fixed points of the map $\rho (U)$ on the ``unipotent'' components $D_i$. The fixed points exist only when suitable parity constraints are satisfied. The system of equations in the last column is the analogue of \eqref{eigeneqs}. Here, $z$ denotes the holonomy eigenvalue of $SL(2,\C)$ flat connection along the base circle in \eqref{mappingtori}.}
	\label{tab:unipotent}
\end{table}

Also note that, prior to conjugation by $SL(2,\C)$, the values of $(u,v)$ parametrize points on the (``nilpotent'') cone $\C^2$, which is enhanced to $\C^3$ once we include the nilpotent part of the center of mass chiral multiplet $\Phi_0$. These vacua are connected to the corresponding points on the semisimple part of the moduli space \eqref{sspartofM}, where parity constraints of Table~\ref{tab:unipotent} also apply and have a clear meaning. Conjugation by $SL(2,\C)$, however, acts very differently on semisimple and unipotent vacua. In particular, it turns each $\C^3$ cone of unipotent/nilpotent values of $A$, $B$, and $\Phi_0$ into $\C^3 / \C^* \cong {\mathbb{C}}{\mathbf{P}}^2$. Even though such factors can appear for manifolds with $\Tr{U}=2$ as we will see later, for generic $U$ there is a constraint on $u/v$ given in Table~\ref{tab:unipotent}. And, when the holonomy along the base of the mapping torus is diagonal with eigenvalues $z,z^{-1}$, $\Phi_0$ vanishes. Thus, we are still left with isolated points. Finally, when $A$ and $B$ are $\pm1$ with unipotent holonomy along the base, together with nilpotent $\Phi_0$ it leads to $\C^2 / \C^* \cong {\mathbb{C}}{\mathbf{P}}^1$.

For now, let us summarize the structure of the moduli space of vacua of $T[M_3]$ with $G=SU(2)$ that we deduced so far for generic $\Tr{U} \neq 2$. Following the general rule explained earlier, we refer to the vacua represented by abelian (resp. non-abelian) flat connections on $M_3$ as Coulomb (resp. Higgs) branches. With this nomenclature, the Coulomb branch vacua correspond to brane configurations in which individual fivebranes can be separated in transverse directions, whereas turning on vevs of the off-diagonal field components corresponds to Higgs branch vacua of the fivebrane system, in which individual fivebranes are bound to each other:
\be
\CM_{\text{vacua}} \left( T[M_3,SU(2)] \right) \; = \; 
\underbrace{\frac{\CM_+ \times \C^* \times \C}{\Z_2}}_{\text{``Coulomb''}} 
 \; \bigcup \; \underbrace{\frac{\CM_-}{\Z_2} \; \cup \; \{ \bullet , \cdots , \bullet \} \; \cup \; {\mathbb{C}}{\mathbf{P}}^1\;\cup \cdots \cup \;{\mathbb{C}}{\mathbf{P}}^1 }_{\text{``Higgs''}}
\label{MSU2}
\ee
Similarly to the case of $T^2$, the unipotent components can be thought of as resolutions of the orbifold singularities. Since this space is K\"ahler, it can be used as a target space of a sigma-model with $\CN=2$ supersymmetry. In a variety of questions, this provides a good approximation to the 3d $\CN=2$ theory $T[M_3]$, at least on a circle. Could it be the full story?

The answer is ``yes'' in the abelian case, where $T[M_3]$ is indeed equivalent to a sigma-model on \eqref{Mabelian}, either via reduction on a circle or via dualizing 3d photons into compact scalars. These two ways of completing $\R$-valued scalars in 3d vector multiplets into $\C^*$-valued scalars of twisted chiral multiplets are equivalent (T-dual) in the abelian case. In the non-abelian case, only the first option is readily available, and moreover the sigma-model on \eqref{MSU2} can not be the full story since this space has singularities.

Luckily, the singularities in \eqref{MSU2} have a clear origin and a familiar form: they arise from conjugation action by $G_{\C} = SL(2,\C)$ and appear in ABJM theory of type $G$ (= IR limit of 3d maximally supersymmetric gauge theory with group $G$) \cite{Aharony:2008ug}. This suggests a non-abelian model that we describe next.

\subsubsection{3d $\CN=2$ Skyrme-like models}

We wish to realize a 2d $\CN=(2,2)$ sigma-model with target \eqref{MSU2} as a reduction of some 3d theory associated to $\Sigma$, namely $T[\Sigma \times S^1]$, on a circle, with 2d interfaces ($S$ and $T$ duality walls) inserted at points on a circle in a periodic arrangement \eqref{UviaST}:
\begin{center}
\includegraphics[scale=0.4]{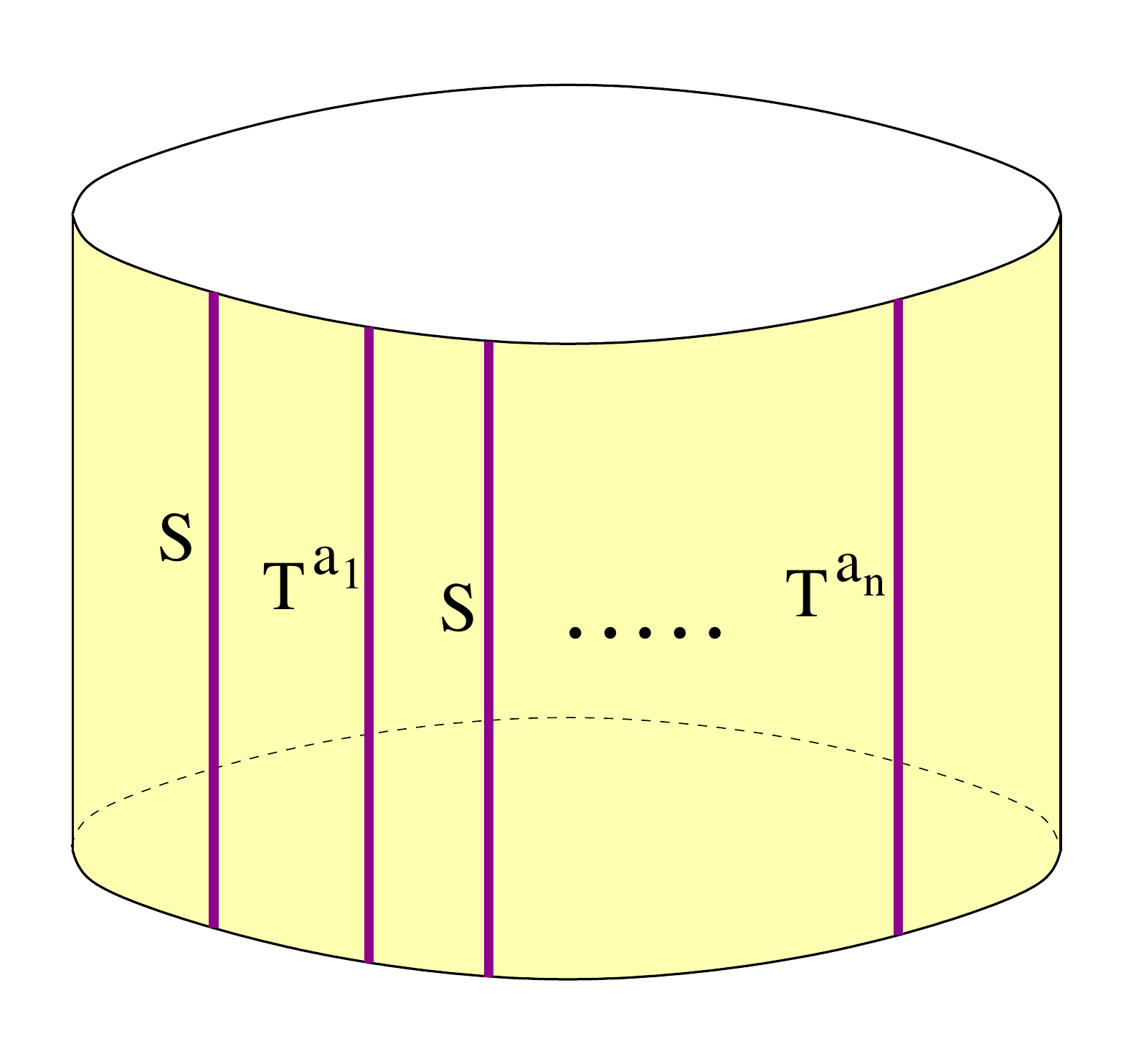}
\end{center}
The 3d theory $T[\Sigma \times S^1]$ in question is supposed to be a reduction of 5d maximally supersymmetric Yang-Mills theory on $\Sigma$ of finite size, {\it i.e.} with Kaluza-Klein modes included. In particular, it should enjoy a duality symmetry $\text{MCG} (\Sigma)$ realized by BPS duality walls.

For the moment, let us assume that $\Sigma$ has arbitrary genus $g$. Then a compactification with a partial topological twist along $\Sigma$ turns two adjoint 5d scalars into a 1-form on $\Sigma$, which complexifies $G$-valued holonomies of the gauge field. As a result, we get $2g$ three-dimensional chiral multiplets with values in $G_{\C}$, on which the gauge symmetry $G$ acts in a natural way. The 3d theory $T[\Sigma \times S^1]$ also has a gauge-invariant superpotential that imposes a constraint 
\be
\prod_{i=1}^g A_i B_i A_i^{-1} B_i^{-1} \; = \; {\bf 1}
\label{highergenusW}
\ee
In addition, there are three adjoint scalars of the original 5d theory, which in 3d become superpartners in a $\CN=4$ vector multiplet. This 3d $\CN=4$ vector multiplet is equivalent to a 3d $\CN=2$ vector multiplet and an adjoint chiral multiplet that we call $\Phi_0$.

In principle, this provides a potential candidate for the 3d $\CN=2$ theory $T[\Sigma \times S^1,G]$, with $\Sigma$ of arbitrary genus $g$ (and arbitrary non-abelian $G$ of ADE type).
As far as we know, the IR physics of such 3d $\CN=2$ gauged sigma-models, with target spaces that involve products of several copies of $G_{\C}$, have not been studied so far. Such models are supersymmetric gauged analogues of 3d Skyrme, WZW-like, and principal chiral models~\cite{Zahed:1986qz}.
Since the complex group manifold $G_{\C}$ admits a K\"ahler metric \cite{MR1083440}, one should expect that a sigma-model with this target admits a supersymmetric extension with $\CN=2$ supersymmetry. This is indeed the case, as can be verified by a direct calculation \cite{Gudnason:2015ryh}.

Relegating a more systematic study of such 3d $\CN=2$ theories with multiple $G_{\C}$-valued chiral multiplets to future work, in what follows we mainly focus on the case of genus $g=0$ (for which the candidate $T[\Sigma \times S^1]$ has no $G_{\C}$-valued multiplets) and on the case of genus $g=1$, for which the candidate $T[\Sigma \times S^1]$ has two $G_{\C}$-valued matter multiplets, $A$ and $B$:
\be
T[T^2 \times S^1]
\quad = \quad
\boxed{\begin{array}{c}
		~3d~\CN=2 \text{ gauged sigma-model with gauge group } G \\
		\text{ and target space } \\
		\left \{  (A,B,\Phi_0) \in G_{\C} \times G_{\C} \times {\frak g}_{\C} \left|
		\begin{array}{rcl}
			\left[A,B\right] & = & 0 \\
			\left[A,\Phi_0\right] & = & 0 \\
			\left[B,\Phi_0\right] & = & 0
		\end{array}
		\right.
		\right \}
\end{array}}
\label{TofSigma}
\ee
The mapping class group $\text{MCG} (\Sigma) = SL(2,\Z)$ acts only on $A$ and $B$ by sending $A \mapsto A^a B^b$ and $B \mapsto A^c B^d$.
Note, that $G_{\C} \times G_{\C}$ here is a non-abelian analogue of $\C^* \times \C^*$ from \eqref{Tabelian}.
Since $T^2 \times S^1$ is simply a 3-torus $T^3$, the manifest $SL(2,\Z)$ symmetry of \eqref{TofSigma} should be enhanced to $SL(3,\Z)$ symmetry in the infra-red.

Indeed, independently of the proposal \eqref{TofSigma}, we know that $T[M_3 = T^3]$ should have maximal ($\CN=8$) supersymmetry and enjoy an $SL(3,\Z)$ self-duality symmetry. We can also easily describe its moduli space. For simplicity and concreteness, let us return to the case of $G=SU(2)$. Then, asymptotically, when the vevs of non-compact scalars are large, the moduli space looks like
\be
\frac{(\C^*)^3 \times \C}{\Z_2}
\label{TTTvacua}
\ee
Here, as in the discussion around \eqref{T3vacua}, we use the fact that $T[M_3 = T^3]$ is a 5d SYM on a 2-torus $T^2$. Holonomies of the 5d gauge field along 1-cycles of the $T^2$ were precisely the motivation for \eqref{TofSigma} and account for two out of three $\C^*$ factors in \eqref{TTTvacua}. These $\C^*$ factors can be thought of as Cartan components of the fields $A$ and $B$ enforced by commutators of the 5d theory in a background of large scalar vevs. Large vevs of non-compact scalars not only mean that all matter fields are simultaneously diagonalizable, but also break the $SU(2)$ gauge symmetry to a $U(1)$ subgroup. Therefore, in asymptotic region of the moduli space we can dualize the $U(1)$ photon into a compact (periodic) scalar, which then accounts for the third $\C^*$ factor in \eqref{TTTvacua}.

So far, this part of the analysis follows closely \cite{Seiberg:1996nz}, where moduli spaces of various rank-one 3d $\CN=4$ gauge theories were studied. The next step, however, is where $T[M_3 = T^3]$ differs from a typical 3d $\CN=4$ theory. Indeed, while the metric on the moduli spaces of the latter in general is quantum-corrected, the $\CN=8$ supersymmetry of $T[M_3 = T^3]$ prevents such corrections. Therefore, we conclude that \eqref{TTTvacua} is the actual quantum moduli space of 3d theory $T[M_3,G]$ with $M_3 = T^3$ and $G=SU(2)$. Note, unlike $\CM_{\text{vacua}}$ that features prominently throughout the paper and refers to the moduli space of a 3d theory on a circle, here we consider the moduli space of 3d theory in flat space-time, {\it i.e.} on $\R^3$.

\begin{figure}[ht]
	\centering
	\includegraphics[width=3.1in]{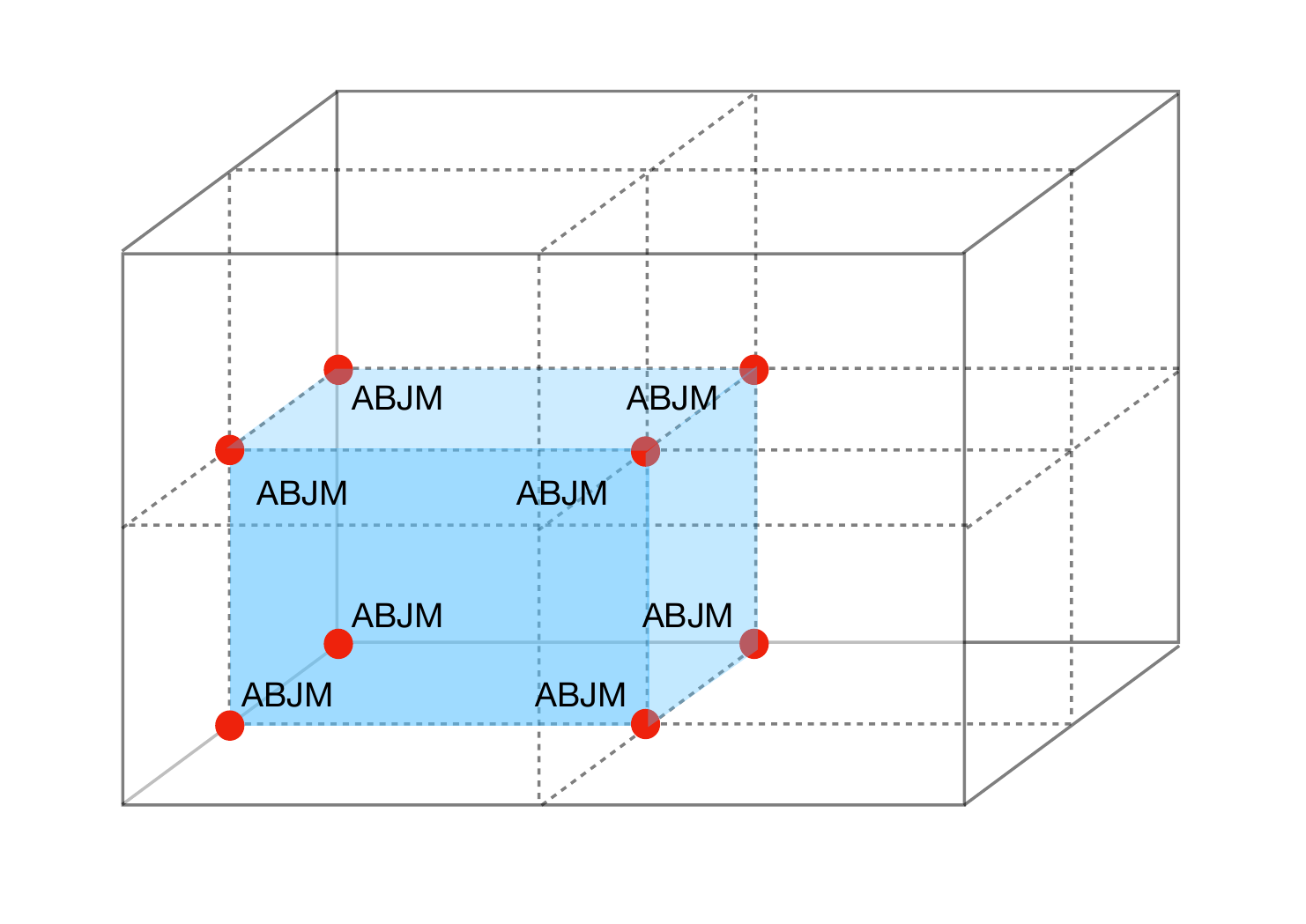}
	\caption{The moduli space of $T[M_3,G]$ with $M_3 = T^3$ and $G=SU(2)$. Shown here is the real slice of the space $\frac{\C^* \times \C^* \times \C^*}{\Z_2}$, a three-dimensional ``pillowcase.'' The eight ABJM theories are at the eight corners of the pillowcase.}
	\label{fig:ABJM}
\end{figure}

In fact, now is a good time to show that the moduli space of 3d theory $T[M_3 = T^3]$ on a circle {\it also} has the form \eqref{TTTvacua}, thus justifying an earlier claim \eqref{T3vacua}. Indeed, a bonus feature of our above analysis of the moduli space is the evidence for the effective theory \eqref{TofSigma} in the asymptotic region of the moduli space. Therefore, in order to describe $\CM_{\text{vacua}} \left( T[M_3] \right)$, we can analyze the moduli space of this theory on $S^1$ and make use of $\CN=8$ supersymmetry, as in the previous discussion.
Upon compactification on $S^1$, the 3d vector multiplet of the theory \eqref{TofSigma} gives rise to a standard 2d $\CN=(2,2)$ vector multiplet and a 2d chiral multiplet with values in $G_{\C}$, parametrized by the holonomy of $A + i \sigma$ along the $S^1$. If we denote this chiral multiplet by $C$, then
\begin{multline}
\CM_{\text{vacua}} \big( T[T^3,SU(2)] \big) \; = \; \\
\; = \; \left \{  (A,B,C,\Phi_0) \in G_{\C} \times G_{\C} \times G_{\C} \times {\frak g}_{\C} \left|
\begin{array}{rcccl}
	\left[A,B\right] & = & 0 = & \left[A,\Phi_0\right] \\
	\left[A,C\right] & = & 0 = & \left[B,\Phi_0\right] \\
	\left[B,C\right] & = & 0 = & \left[C,\Phi_0\right]
\end{array}
\right.
\right \} \Big/ G_{\C} \; \cong \; \\
\; \cong \; \frac{\C^* \times \C^* \times \C^* \times \C}{\Z_2}
\label{MforT3}
\end{multline}
This is the same as the moduli space \eqref{TTTvacua} of 3d theory $T[M_3 = T^3]$ in flat space. (To be more precise, the two moduli spaces are related by T-duality along one of the $S^1$'s.) This provides partial evidence for the proposed UV description \eqref{TofSigma} of $T[M_3,G]$ with $M_3 = T^3$ and $G = SU(2)$.
The IR description, according to the above analysis, consists of eight ABJM theories that share the same moduli space \eqref{TTTvacua}. This is also consistent with the proposed UV description \eqref{TofSigma}, where 8 copies of the ABJM theory reside at the fixed points of the $\Z_2$ Weyl symmetry of $G=SU(2)$.
Without loss of generality, we can consider the singular point at $A = B  = {\bf 1}$. Near this point we can replace $G_{\C}$-valued fields $A=e^{\alpha}$ and $B=e^{\beta}$ by the Lie algebra valued fields $\alpha$ and $\beta$. Therefore, near each singular point, the physics of \eqref{TofSigma} is well approximated by 3d $\CN=2$ gauge theory with three adjoint chiral multiplets $\alpha$, $\beta$, and $\Phi_0$, which in the IR indeed flows to the ABJM fixed point \cite{Aharony:2008ug}.

It would be interesting to test the proposal \eqref{TofSigma} further by computing various supersymmetric indices and partition functions in the UV and in the IR. We expect the special functions of \cite{Felder:2000mq,MR1800253} to play a role.

Note, the UV Lagrangian that we proposed here for the theory $T[M_3]$ with $M_3 = T^3$ is based on the flat metric on $T^3$. However, since 6d fivebrane theory is patrially twisted along $M_3$, we could use any other metric on $M_3 = T^3$. Since such perturbations are $Q$-exact, with respect to the topological supercharges on $M_3$, they do not affect the IR fixed point $T[M_3]$, which still should be a maximally supersymmetric SCFT in three dimensions. However, such perturbations modify the UV description of the theory $T[M_3]$, breaking SUSY down to $\CN=2$ for a generic choice of metric on $M_3$. Some of such 3d $\CN=2$ theories may provide a more convenient description of $T[M_3]$ for $M_3 = T^3$ and would be well worth investigating. They could also lead to new examples of RG flows with supersymmetry enhancement.

The moduli spaces \eqref{TTTvacua} and \eqref{MforT3} describe, respectively, the vacua on $\R^3$ and on $S^1 \times \R^2$ of the ``sheaf'' of SCFTs $T[M_3]$, with $M_3 = T^3$, {\it i.e.} a genus-1 mapping torus with $U = {\bf 1}$. In order to describe the analogue of \eqref{MforT3} for a more general mapping torus with $U = \bigl(\begin{smallmatrix} a & b \\ c & d \end{smallmatrix} \bigr) \in SL(2,\Z)$, one needs to introduce into 3d theory on $S^1 \times \R^2$ a periodic arrangement of $S$ and $T$ duality walls located at points on a circle. As a result, the fields $A$ and $B$ undergo non-trivial monodromies when we go around the base circle, and the relations $[A,C] = 0 = [B,C]$ in \eqref{MforT3} are replaced by
\be
C A C^{-1} \; = \; A^a B^b
\,, \qquad
C B C^{-1} \; = \; A^c B^d
\label{generaltriple}
\ee
with all other details unchanged.
These equations for $(A,B,C)$ are, then, simply the relations in the fundamental group of a genus-1 mapping torus $M_3$:
\be
\pi_1 (M_3) \; = \;
\left \langle
x,y,z \; \left| \;
xy = yx, \;
zxz^{-1} = x^a y^b, \;
zyz^{-1} = x^c y^d
\right.
\right \rangle
\label{mappingpione}
\ee
represented by $G_{\C}$-valued matrices:
\be
\begin{array}{cccc}
\rho: \quad & \pi_1 (M_3) & ~\longrightarrow~ & G_{\C} \\
& x & \mapsto & A \\
& y & \mapsto & B \\
& z & \mapsto & C
\end{array}
\label{ABChols}
\ee

Note, the case $U = \bigl(\begin{smallmatrix} 1 & p \\ 0 & 1 \end{smallmatrix} \bigr) = T^p$ is somewhat special. As pointed out earlier, this is precisely when $b_1 (M_3) \ge 2$. (In fact, $b_1 = 2$ for $p \ne 0$.) Moreover, in this case, according to \eqref{MforT3}--\eqref{generaltriple}, $B$ commutes with both $A$ and $C$.
On one hand, when matrices can be simultaneously diagonalized ({\it cf.} ``semisimple'' component) the eigenvalues of $A$, $B$, and $C$ take values in $\C^*$, $\Z_p$, and $\C^*$, respectively. On the other hand, for unipotent components $B = \pm 1$. Therefore, we conclude that, for a mapping torus of $U = T^p$,
\be
\CM_{\text{flat}} \left( M_3 , SL(2,\C) \right) \; = \; \frac{\Z_p \times \C^* \times \C^*}{\Z_2}\; \bigcup \;  {\mathbb{C}}{\mathbf{P}}^1\;\cup \cdots \cup \;{\mathbb{C}}{\mathbf{P}}^1
\ee
Moreover, when we take into account the adjoint scalar with diagonal values for abelian connections and nilpotent values for non-abelian connection we get
\be
\CM_{\text{vacua}} \big( T[M_3,SU(2)] \big) \; = \; \frac{\Z_p \times \C^* \times \C^* \times \C}{\Z_2}\; \bigcup \; {\mathbb{C}}{\mathbf{P}}^2\;\cup \cdots \cup \;{\mathbb{C}}{\mathbf{P}}^2
\ee
Comparing this result with the derivation of \eqref{MforT3} suggests that the theory $T[M_3]$ for the mapping torus of $U = T^p$ is a deformation of \eqref{TofSigma} by 3d $\CN=2$ supersymmetric Chern-Simons term at level $p$.

Note, in the case of mapping tori with $U = T^p$, including $p=0$, we have $\CM_- \subset \CM_+$ in \eqref{MSU2}. This is no longer the case for more general choice of the monodromy $U$, {\it i.e.} for mapping tori with $b_1 (M_3) = 1$. Indeed, writing
\be
A \; = \;
\begin{pmatrix}
	x & * \\
	0 & x^{-1}
\end{pmatrix}
\quad,\qquad
B \; = \;
\begin{pmatrix}
	y & * \\
	0 & y^{-1}
\end{pmatrix}
\ee
we see that equations \eqref{generaltriple} become \eqref{eigeneqs} for the holonomy eigenvalues $x$ and $y$.\footnote{Hopefully, $x$ and $y$ that denote holonomy eigenvalues here will not be mistaken for the generators of $\pi_1 (M_3)$ denoted by the same letters in \eqref{mappingpione}. They are related, of course, so that there is little danger in the confusion. And, in this paper, the primary use of $x$ and $y$ is to denote holonomy eigenvalues.}

Another useful remark is that the role of $\CM_+$ and $\CM_-$ is exchanged under $U \mapsto - U$. Since $\CM_+$ is part of the Coulomb branch and $\CM_-$ is part of the Higgs branch, {\it cf.} \eqref{MSU2}, it is natural to call this map a ``mirror symmetry,'' by analogy with \cite{Intriligator:1996ex}. Using \eqref{mappingpione} we can prove the following

\begin{lem}\label{thm:mirrormap}
	For $G=SU(2)$, there is a canonical bijection\footnote{Here, we use notation $\pi_0$ to denote the space of \emph{connected} components, not the space of path-connected components. The distinction is due to the fact that $\mathcal{M}_{flat}(M_3,G_{\mathbb{C}})$ is in general not Hausdorff, at least naively, due to the unipotent components.}
	\be
	\sigma: \quad \pi_0 \left( \CM_{\text{flat}} ( M_3 (U) , G_{\C}) \right)
	\; \longrightarrow \;
	\pi_0 \left( \CM_{\text{flat}} ( M_3 (-U) , G_{\C}) \right)
	\label{MMmap}
	\ee
such that
	\be
	\text{CS} (\rho) \; = \; \text{CS} (\sigma (\rho))
	\quad \mod~1
	\label{MMmapCSvalues}
	\ee
\end{lem}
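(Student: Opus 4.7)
The plan is to define $\sigma$ componentwise using the explicit decomposition of $\CM_{\text{flat}}(M_3,G_\C)$ worked out above, and then to check the Chern--Simons identity by realizing $\sigma$ as a central $\Z_2$-twist at the level of representations. Using \eqref{MSU2} and Table~\ref{tab:unipotent}, I would first enumerate $\pi_0$ on both sides: the connected components of $\CM_{\text{flat}}(M_3(U),G_\C)$ on the semisimple side are indexed by $\CM_+(U)/\Z_2$ (``Coulomb,'' since the $\C^*$-factor coming from the base-circle holonomy is connected) and $\CM_-(U)/\Z_2$ (``Higgs''), while each unipotent family $D_i$ contributes a discrete set of components whenever the parity constraints of Table~\ref{tab:unipotent} are satisfied. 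The crucial observation is that comparing~\eqref{eigeneqs} for $U$ and for $-U$ gives $\CM_+(U)=\CM_-(-U)$ and $\CM_-(U)=\CM_+(-U)$, and that both the parity constraints of Table~\ref{tab:unipotent} and the fixed-point equations for the unipotent components (modulo $z^2\mapsto -z^2$, which is a bijection on $\mathbb{CP}^1$-eigenvectors) are invariant under $(a,b,c,d)\mapsto(-a,-b,-c,-d)$. Together these identifications define $\sigma$ on $\pi_0$: it swaps semisimple Coulomb with semisimple Higgs and sends $D_i(U)$ to $D_i(-U)$ bijectively.

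To make the Chern--Simons comparison well-posed I would next lift $\sigma$ to actual representations. Given $(A,B,C)$ satisfying the $U$-relations, set $\sigma(\rho)=(A,B,FC)$ where $F\in SL(2,\C)$ is chosen so that $FAF^{-1}=A^{-1}$ and $FBF^{-1}=B^{-1}$. Such an $F$ always exists: on the semisimple part one takes any representative of the nontrivial Weyl element, while on the unipotent part, after simultaneously upper-triangularizing $A$ and $B$, one may take $F=\mathrm{diag}(i,-i)$, which inverts every upper-triangular matrix with diagonal entries $\pm 1$. A direct substitution using $F(A^aB^b)F^{-1}=A^{-a}B^{-b}$ then shows that $(A,B,FC)$ satisfies the $-U$-relations; different admissible $F$ differ by elements of the centralizer of $\{A,B\}$, and the corresponding shifts conjugate the triple or move continuously within a single connected component, so $\sigma$ descends consistently to $\pi_0$.

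For the Chern--Simons identity I would then exploit $F^2=-I$. Projecting to $PSL(2,\C)$, $\rho$ and $\sigma(\rho)$ become the same flat connection on the common underlying $T^2$-bundle (the monodromies $U$ and $-U$ agree modulo $2$), and the two $SL(2,\C)$-lifts differ by the $\Z_2$-cocycle $\epsilon$ pulled back from the generator of $H^1(S^1;\Z_2)$ via the base-circle projection. Since $\epsilon\smile\epsilon=0$ already in $H^2(S^1;\Z_2)=0$, the standard shift formula for the $SL(2,\C)$ Chern--Simons functional under a central $\Z_2$-twist produces an integer, establishing~\eqref{MMmapCSvalues}. The main obstacle is precisely making this last step rigorous: the domain and codomain of $\sigma$ live over \emph{different} 3-manifolds $M_3(U)$ and $M_3(-U)$, so the ``twist by $\epsilon$'' identification requires a careful compatibility statement between the CS functionals on the two. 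A safe fallback, should the cohomological argument prove delicate, is an explicit computation of CS on each component using the quadratic formula in $(\log x,\log y,\log z)$ on the semisimple side and the plumbing presentation~\eqref{aaaloop} (whose linking form encodes CS for abelian flat connections) on the unipotent side.
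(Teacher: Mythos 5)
Your construction of the bijection on $\pi_0$ is essentially the paper's: you twist the base-circle holonomy $C$ by an element $F$ (the paper uses $j\in SU(2)$) conjugating $A,B$ to their inverses, observe that the ambiguity in $F$ ranges over the centralizer of $\{A,B\}$ so the map fails to descend to conjugacy classes but does descend to connected components, and your explicit matching of the pieces ($\CM_+(U)=\CM_-(-U)$, parity constraints and unipotent eigenvector equations invariant under $U\mapsto -U$ up to $z^2\mapsto -z^2$) is a fine supplement to the paper's shortcut of noting that every unipotent component contains an $SU(2)$ point with $A=\pm{\bf 1}$, $B=\pm{\bf 1}$.

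The Chern--Simons step, however, has a genuine gap, and it is exactly where your proposal stops being a proof. The picture of $\rho$ and $\sigma(\rho)$ as two $SL(2,\C)$ lifts of one $PSL(2,\C)$ connection, differing by a central $\Z_2$-cocycle pulled back from the base circle, does not apply here for two reasons. First, $M_3(U)$ and $M_3(-U)$ are in general non-homeomorphic torus bundles ($-U$ is typically not $GL(2,\Z)$-conjugate to $U^{\pm1}$), so there is no ``common underlying $T^2$-bundle'' carrying both connections; you cannot even pose the central-twist comparison on a single manifold. Second, $\sigma$ is not a central twist at all: $F$ is a Weyl element, not an element of the center, and the two triples satisfy genuinely different relations ($CAC^{-1}=A^aB^b$ versus $A^{-a}B^{-b}$) even after projecting to $PSL(2,\C)$, so the standard shift formula for central $\Z_2$ twists (and the $\epsilon\smile\epsilon$ vanishing) is simply not the relevant mechanism. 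The missing idea, which is how the paper closes this step, is to pass to the common double cover $\widetilde{M}_3:=M_3(U^2)=M_3((-U)^2)$ of both mapping tori: the pullbacks of $\rho$ and $\sigma(\rho)$ to $\pi_1(\widetilde{M}_3)$ (generated by $x,y,z^2$) are related by a null-homotopic gauge transformation, hence define the same flat connection upstairs, and multiplicativity of CS under the degree-2 cover gives $2\,\text{CS}(\rho)=\text{CS}(\tilde\rho)=2\,\text{CS}(\sigma(\rho))$, which is the desired identity. Your ``fallback'' of computing CS component by component from the plumbing presentation could in principle verify the statement case by case, but as written it is an unexecuted plan rather than an argument, so the lemma's second claim remains unproved in your proposal.
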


\begin{proof}
First, following the same analysis that led us to unipotent $SL(2,\C)$ flat connections in Table~\ref{tab:unipotent}, one can quickly see that all such complex flat connections on $M_3 (U)$ are, in fact, in the same connected components of $\CM_{\text{flat}} ( M_3 (U) , G_{\C})$ as $SU(2)$ flat connections with holonomies $A = \pm {\bf 1}$ and $B = \pm {\bf 1}$. In the notations \eqref{ABChols}, these $SL(2,\C)$ flat connections have holonomy eigenvalue $z$ along the base circle such that
\begin{equation}\label{alexander}
z^4 - \mathrm{tr}(U) z^2 + 1 \; = \; 0
\end{equation}

Then, it suffices to prove the statement for $SU(2)$ flat connections. If the fundamental group of the mapping torus $M_3 (U)$ is given by \eqref{mappingpione}, then in the same conventions that of $M_3 (-U)$ is
\be
\pi_1 (M_3(-U)) \; = \;
\left \langle
x,y,z \; \left| \;
xy = yx, \;
zxz^{-1} = x^{-a} y^{-b}, \;
zyz^{-1} = x^{-c} y^{-d}
\right.
\right \rangle
\ee
For any $\rho \in \text{Hom} \left( \pi_1(M_3 (U)),SU(2) \right)$, we can think of $\rho' \in \text{Hom} \left( \pi_1 (M_3(-U),SU(2) \right) / \text{conj.}$
determined by
\[\rho'(x) \; = \; \rho(x) \quad,\quad
\rho'(y) \; = \; \rho(y) \quad,\quad
\rho'(z) \; = \; j \rho (z)
\]
where $j \in SU(2)$ is chosen such that $j\rho(x) j^{-1} = \rho(x)^{-1}$ and $j\rho(y)j^{-1} = \rho(y)^{-1}$.
Generically, there will be a $U(1)$-family of such matrices, and they all are conjugate to each other. For instance, if we take $\rho(x)$ and $\rho(y)$ to be diagonal, then $j$ can be $\Bigl( \begin{smallmatrix} 0 & e^{i\theta} \\ -e^{-i\theta} & 0\end{smallmatrix} \Bigr)$, and the corresponding homomorphisms are conjugate to each other. The non-generic case occurs when $\rho(x)$ and $\rho(y)$ are both $\pm {\bf 1}$, and in that case $j$ can be an arbitrary matrix. For example, we can pick $j={\bf 1}$ in that case. 
We have just defined a map
\[\text{Hom} \left( \pi_1(M_3(U)),SU(2) \right) \quad \rightarrow \quad \text{Hom} \left( \pi_1(M_3 (-U)),SU(2) \right) / \text{conj.}
\]
This map does {\it not} descend to a map
\[\text{Hom} \left( \pi_1(M_3(U)),SU(2) \right) / \text{conj.} \quad \rightarrow \quad \text{Hom} \left( \pi_1(M_3 (-U)),SU(2) \right) / \text{conj.}
\]
because, unlike $\Bigl( \begin{smallmatrix} 0 & e^{i\theta} \\ -e^{-i\theta} & 0\end{smallmatrix} \Bigr)$, the matrices $\Bigl( \begin{smallmatrix} e^{i\theta} & 0 \\ 0 & e^{-i\theta} \end{smallmatrix} \Bigr)$ are all non-conjugate to each other. Nevertheless, once we pass to connected components, we arrive at the desired canonical one-to-one correspondence~\eqref{MMmap}:
\be
\pi_0 \left( \CM_{\text{flat}} ( M_3 (U) , G_{\C}) \right)
\; \simeq \;
\pi_0 \left( \CM_{\text{flat}} ( M_3 (-U) , G_{\C}) \right)
\ee

What is even better, the correspondence holds at the level of the Chern-Simons functional. That is, for a pair of flat connections $A\in \CM_{\text{flat}} ( M_3 (U) , G_{\C})$ and $A'\in \CM_{\text{flat}} ( M_3 (-U) , G_{\C})$ related by the above correspondence, the values of the Chern-Simons functionals coincide:
\[CS(A) \; = \; CS(A')\]
This is because of the fact that $\tilde{M_3} := M_3 (U^2)$ is a double cover of both $M_3 (U)$ and $M_3 (-U)$, and $A$ and $A'$ lift to the same connection, say, $\tilde{A}$ on $\tilde{M_3}$. Hence,
$2 CS(A) = CS(\tilde{A}) = 2 CS(A')$.

The reason $A$ and $A'$ lift to the same connection on $M_3 (U^2)$ is that they induce, respectively, a representation of the fundamental groupoid $\pi_1(M_3 (U))$ and $\pi_1(M_3 (-U))$ to $SU(2)$.
The corresponding representations of $\pi_1(M_3 (U^2))$ are just their pull-backs. It is easy to see that these two representations are related by a gauge transformation that is null-homotopic. Therefore, $A$ and $A'$ lift to the same connection on $M_3 (U^2)$.
\end{proof}

\medskip

It is a good problem to generalize the analysis in this section to groups of higher rank, {\it cf.} \cite{Chung:2018rea,Park}. One novel feature that will play an interesting role in such generalizations is the existence of non-trivial commuting triples \cite{Witten:1997bs,deBoer:2001wca}.

\subsubsection{0-surgeries on knots}

Perhaps the simplest 3-manifold with $b_1 (M_3) > 0$ is $M_3 = S^2 \times S^1$. As such, it should then be our central example, if not the starting point, in the study of 3d-3d correspondence and $q$-series invariants of 3-manifolds with $b_1 > 0$. However, $M_3 = S^2 \times S^1$ is not a genus-1 mapping torus. Why do we mention it here, then?

One reason is that $M_3 = S^2 \times S^1$ is indeed a more basic and fundamental example\footnote{As we will see shortly, the simplicity of this example is very deceptive.} since it is a genus-0 mapping torus, {\it i.e.} it is part of the family~\eqref{mappingtori} with $\Sigma = S^2$ and $\varphi = {\bf 1}$. Another, less obvious reason is that $M_3 = S^2 \times S^1$ has something in common with the following genus-1 mapping tori:
\be
M_3 \; = \;
\begin{cases}
	S^3_0 ({\bf 3_1}) &\text{if} \ \ U = - STST =
	\begin{pmatrix}
		1 & 1 \\ -1 & 0
	\end{pmatrix} \\
	S^3_0 ({\bf 4_1}) &\text{if} \ \  U = - STST^{-1} =
	\begin{pmatrix}
		1 & -1 \\ -1 & 2
	\end{pmatrix} 
\end{cases}
\label{UUsurgeries}
\ee
which also can be realized as 0-surgeries on knots (see {\it e.g.} \cite{MR1283727}). Indeed,
\be
S^3_0 (\text{unknot})
\; = \; S^2 \times S^1
\label{SoneStwo}
\ee
is a part of this family too. In fact, all 0-surgeries on knots, $M_3 = S^3_0 (K)$, have the property $H_1 (M_3) = \Z$ and, therefore, also provide many non-trivial examples of 3-manifolds with $b_1 > 0$.

\begin{figure}[ht]
	\centering
	\includegraphics[scale=0.5]{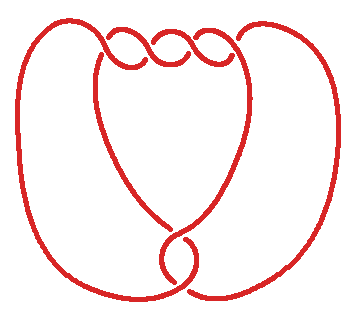}
	\caption{The $\mathbf{5}_2$ knot.}
	\label{fig:52knot}
\end{figure}

In Table~\ref{tab:knots} we summarize flat connections on $M_3 = S^3_0 (K)$ for various simple examples of $K$. Besides the unknot, the trefoil knot ${\bf 3_1}$ and the figure-eight knot ${\bf 4_1}$ mentioned earlier, included in this list is a hyperbolic knot ${\bf 5_2}$, whose 0-surgery is not a mapping torus (of any genus). All these knots, however, are examples of {\it twist knots} which form a natural generalization of the examples listed in Table~\ref{tab:knots}.

\begin{table}[htb]
	\centering
	\renewcommand{\arraystretch}{1.3}
	\begin{tabular}{|@{~~}c@{~~}|@{~~}c@{~~}|@{~~}r@{~~}@{~~}l@{~~}|}
		\hline {\bf knot} & {\bf ~~Alexander polynomial} & {\bf ~~flat connection} & {\bf CS value}
		\\
		\hline
		\hline
		\multirow{2}{*}{unknot} & \multirow{2}{*}{$\Delta_{\text{unknot}} (x) = 1$} & \multirow{2}{*}{abelian:} & \multirow{2}{*}{$0$} \\		
		{} & {} & {} & {} \\
		\hline
		\multirow{2}{*}{${\bf 3_1}$} & \multirow{2}{*}{$\Delta_{{\bf 3_1}} (x) = x^{-1} - 1 + x$} & abelian: & $0$ \\		
		{} & {} & almost abelian: & $0$, $-\frac{1}{3}$ \\
		\hline
		\multirow{2}{*}{${\bf 4_1}$} & \multirow{2}{*}{$\Delta_{{\bf 4_1}} (x) = - x^{-1} + 3 - x$} & abelian: & $0$ \\		
		{} & {} & almost abelian: & $0$, $\frac{1}{5}$, $\frac{4}{5}$ \\
		\hline
		\multirow{2}{*}{${\bf 5_2}$} & \multirow{2}{*}{$\Delta_{{\bf 5_2}} (x) = 2 x^{-1} - 3 + 2x$} & abelian: & $0$ \\		
		{} & {} & almost abelian: & $0$, $-\frac{1}{7}$, $-\frac{2}{7}$, $-\frac{4}{7}$ \\
		\hline
	\end{tabular}
	\caption{Alexander polynomials and flat connections for 0-surgeries on some simple knots. The 0-surgeries on knots ${\bf 3_1}$, ${\bf 4_1}$, and ${\bf 5_2}$ have 2, 3, and 4 non-abelian flat connections, represented here by their Chern-Simons values, that are ``almost abelian'' in the sense briefly mentioned below \eqref{eigeneqs} and explained more fully in section \ref{sec:Zhat}. In addition, in all three cases, there are 2 complex flat connections at the singularities of the abelian branch.}
	\label{tab:knots}
\end{table}

A convenient tool in studying complex flat connections on knot surgeries comes from polynomial knot invariants, such as the Alexander polynomial and the A-polynomial (see {\it e.g.} \cite{MR1054574,Gukov:2003na,Gukov:2016njj}). In particular, it allows to deduce the results of Table~\ref{tab:knots} in our simple examples. (However, in general, a more direct analysis of \eqref{Mflat} is required.) For any knot $K$, the A-polynomial has the form
\be
A(x,y) \; = \; (y-1) \, A_{\text{nab}} (x,y)
\ee
where $y-1=0$ is the locus of abelian flat connections on the knot complement, $S^3 \setminus K$, whereas $A_{\text{nab}} (x,y)$ accounts for other, non-abelian flat connections. Performing a 0-surgery on the knot $K$ restricts the value of the holonomy along longitude to the identity, {\it i.e.} $y=1$. This preserves the entire abelian branch, $y-1=0$, and also the non-abelian flat connections with the meridian holonomy eigenvalue $x$ that solves
\be
A_{\text{nab}} (x,1) \; = \; 0
\label{Anabzero}
\ee
For example, for the trefoil knot $K = {\bf 3_1}$ we have $A_{\text{nab}} (x,y) = y + x^6$, so that on $(\C^* \times \C^*) / \Z_2$ parametrized by $(x,y)$ the equation \eqref{Anabzero} has three solutions. One of these solutions, namely $x = \pm i$, is precisely the solution to \eqref{eigeneqs} for $\CM_- (S^3_0 ({\bf 3_1}))$, with $U$ given in \eqref{UUsurgeries}. In other words, this is what we call an ``almost abelian'' flat connection. The other two solutions to \eqref{Anabzero} are complex non-abelian flat connections on $M_3 = S^3_0 ({\bf 3_1})$, which are connected to the abelian branch. In fact, such non-abelain flat connections (or, rather, the corresponding values of $x$) are precisely the roots of the Alexander polynomial ({\it cf.} \eqref{alexander}):
\be
\Delta_K (x^2) \; = \; 0
\ee
These properties hold true for 0-surgeries on arbitrary twist knots, including ${\bf 4_1}$ and ${\bf 5_2}$ listed in Table~\ref{tab:knots}. Namely, the roots of the Alexander polynomial are in 1-to-1 correspondence with the solutions to \eqref{Anabzero} that represent complex non-abelian flat connections on $S^3_0 (K)$ connected to the abelian branch, whereas all almost abelian flat connections have $x = \pm i$. The union of these two disjoint sets accounts for all solutions to \eqref{Anabzero} when $K$ is a twist knot. In fact, for twist knots, the multiplicity of the root $A_{\text{nab}} (\pm i,1) = 0$ determines the total number of (branches of) almost abelian flat connections on $M_3 = S^3_0 (K)$, which equals
\be
1 \, + \, (\text{multiplicity of } i) \; = \; \# \text{ of almost abelian flat connections}
\ee

In the case of the 0-surgery on the unknot \eqref{SoneStwo}, the moduli space of $SL(2,\C)$ flat connection can be obtained directly from \eqref{Mflat}. Indeed, since for $M_3 = S^2 \times S^1$ the fundamental group $\pi_1 (M_3) = \Z$ is abelian, one might expect the space of vacua to consist entirely of the Coulomb branch, {\it cf.} \eqref{MSU2},
\be
\CM_{\text{vacua}} \left( T[S^2 \times S^1,SU(2)] \right) \; \stackrel{?}{=} \; \frac{\C^* \times \C}{\Z_2}
\label{SSvacua}
\ee
While in this simple example it is clear that $T[S^2 \times S^1]$ should be a 3d theory with $\CN=4$ supersymmetry --- because the holonomy of $M_3 = S^2 \times S^1$ is reduced --- it is less clear what this theory is. A simple 3d $\CN=4$ theory that has Coulomb branch \eqref{SSvacua} is 3d $\CN=4$ SQED with gauge group $G=SU(2)$ and $N_f = 2$ flavors \cite{Seiberg:1996nz}. This theory, however, also has a Higgs branch for which there is no room in $T[S^2 \times S^1]$.

A more promising candidate for $T[S^2 \times S^1]$ --- that can be justified either via string dualities or by extrapolating the family of Lens space theories $T[L(k,1)]$ all the way to $k=0$ --- is a 3d $\CN=4$ pure super-Yang-Mills (SYM) with gauge group $G=SU(2)$. At low energies, it reduces to a 3d $\CN=4$ sigma-model on the Atiyah-Hitchin manifold $\CM_{\text{AH}}$ \cite{MR934202}, which is the Coulomb branch of this gauge theory \cite{Seiberg:1996nz}. Indeed, by viewing $S^2$ as a (M-theory) circle fibration over an interval, one can reduce $N$ M5-branes wrapped on $S^2 \times S^1$ to a 5d SYM on D4-branes on an interval \cite{Assel:2016lad} or, upon further reduction and T-duality on $S^1$, to a D3-D5 brane system that describes $N$ monopoles in $SU(2)$ gauge theory \cite{Diaconescu:1996rk,Hanany:1996ie}. Either way, for $N=2$, one finds a 3d $\CN=4$ sigma-model on $\CM_{\text{AH}}$.

Note, however, that $\CM_{\text{AH}}$ is very different from \eqref{SSvacua}, even asymptotically. This, by itself, is not necessarily a problem since \eqref{Mvacua} involves the moduli space of vacua of 3d theory on a circle of small but finite size. In fact, since the resulting theory is basically a 2d theory (with a tower of KK modes), the proper interpretation of $\CM_{\text{vacua}} \left( T[M_3] \right)$ should be as a target space of 2d $\CN=(2,2)$ sigma-model. And, it is entirely conceivable for two 2d $\CN=(2,2)$ sigma-models with different target manifolds to be equivalent as quantum theories. Therefore, one might expect that, after putting 3d $\CN=4$ SYM on a small circle,\footnote{For comparison, the semi-classical moduli space of vacua in 2d $\CN=(4,4)$ super-Yang-Mills with gauge group $G=SU(2)$ is $\C^2 / \Z_2$.} either quantum corrections change it into a theory equivalent to the sigma-model on \eqref{SSvacua}, or this is not the right candidate for $T[S^2 \times S^1]$ altogether.

While the answer to this question is not clear at present, we will try to shed light on it by considering various partition functions of $M_3 = S^2 \times S^1$ in sections \ref{sec:Zhat} and \ref{sec:HF}.

%%%%%%%%%%%%%%%%%%%%%%%%%%%%%%%%%%%%%%%%%%%%%%%%%%

\section{WRT invariants and $q$-series $\hat Z_a (M_3)$}
\label{sec:Zhat}

Given a choice of the gauge group $G$ the invariants of Witten \cite{witten1989quantum} and Reshetikhin-Turaev~\cite{reshetikhin1991invariants} assign a complex number $\text{WRT} (M_3,k) \in \C$ to a closed 3-manifold and an integer $k \in \Z$, called the ``level'' (or, equivalently, a root of unity $q = e^{2\pi i /k}$). These invariants form a TQFT, {\it i.e.} they can be constructed via cutting and gluing. Since the early days of WRT invariants, one of the long-standing problems --- especially important for categorification --- was to find an extension of this TQFT to generic values of $q$.

Recent developments suggest an answer to this question in the form of a TQFT denoted $\hat Z$, which is essentially a concrete realization of complex, analytically continued Chern-Simons theory with gauge group $G_{\C}$ to generic values of $|q|<1$. The resulting theory has all the desired properties of a TQFT; it allows surgery operations as well as gluing 3-manifolds with boundary into closed ones, and it fits into the framework of 3d-3d correspondence via the so-called ``half-index'' partition functions:
\be
\Tr_{\CH_{D^2}} (-1)^F q^{R/2 + J_3} \; = \; \text{ partition function on } S^1 \times_q D^2
\label{halfindex}
\ee
The half-index of a 3d $\CN=2$ theory, introduced in \cite{Gadde:2013wq}, is a combined index of a 3d theory together with a choice of 2d $\CN=(0,2)$ boundary condition, {\it cf.} Figure~\ref{fig:halfplane}. It basically is a 3d analogue of the elliptic genus of 2d theories. In particular, it enjoys hidden modularity properties of the type that first appeared in Ramanujan's ``lost'' notebook and are commonplace in logarithmic CFTs.

Our main goal in this section is to explore the properties of $\hat Z$ for 3-manifolds with $b_1 > 0$. As we shall see, there are many new features, which include a surprising role played by ``almost abelian'' flat connections.

\begin{figure}[ht]
	\centering
	\includegraphics[width=3.7in]{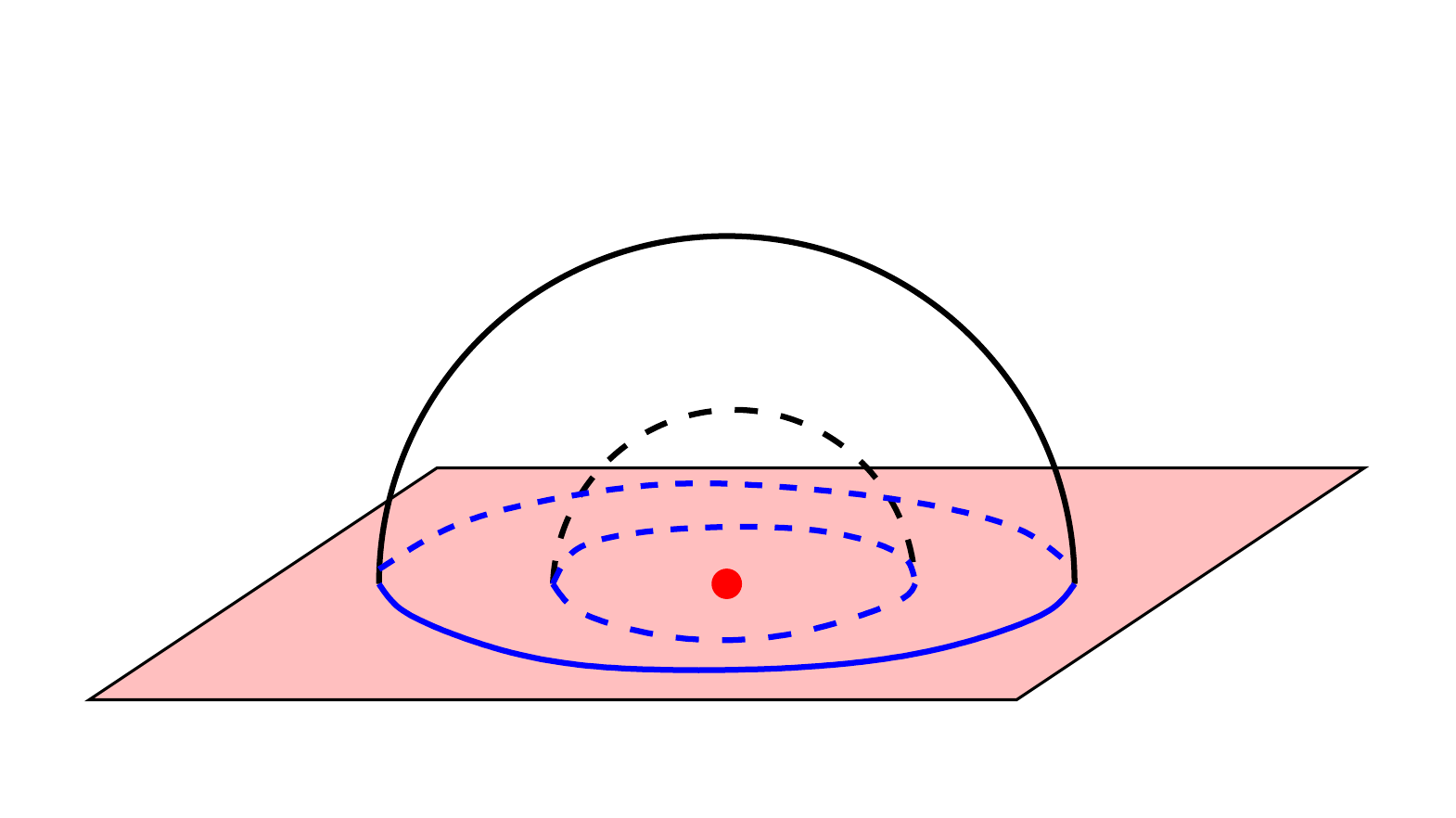}
	\caption{Counting operators on the boundary of half-space $\R^2 \times \R_+$ leads to combined 2d-3d half-index on $S^1 \times D^2$. In radial quantization, the boundary is foliated by concentric circles, which form $T^2$ boundary of $S^1 \times D^2$.}
	\label{fig:halfplane}
\end{figure}

What does the half-index \eqref{halfindex} count?
In general, the answer to this question depends in a highly non-trivial way on both 3d $\CN=2$ bulk theory as well as the choice of 2d $(0,2)$ boundary condition.
However, as argued in \cite{Gukov:2017kmk} (based on an earlier work \cite{Gukov:2016njj}),
several nice things happen for a special (finite!) set of 2d boundary conditions that
correspond to maximally degenerate vacua of 3d $\CN=2$ theory.
When 3d $\CN=2$ theory has a gauge theory description, these are the vacua where at least
the Cartan part of gauge symmetry is unbroken, {\it i.e.} the ``Coulomb branches,''
and the corresponding boundary conditions impose Neumann boundary conditions on
3d $\CN=2$ vector multiplets~\cite{Gadde:2013sca,Gukov:2017kmk} or, equivalently, Dirichlet boundary conditions on dual photon multiplets.

{}From our discussion in section \ref{sec:Mflat}, we know that in 3d-3d correspondence
the torsion part of $H_1 (M_3)$ labels different components of the Coulomb branch of $T[M_3]$,
whereas the free part of $H_1 (M_3)$ controls the dimension of the Coulomb branch, which for $G=SU(2)$ is indeed parametrized by $b_1$ dual photons.
These dual photons contribute to the half-index \eqref{halfindex} through the fermion modes (and their derivatives), whereas the discrete data that label the components of the Coulomb branch become the labels of the special 2d $(0,2)$ boundary conditions $\CB_a$.
Therefore, with these special boundary conditions, the half-index \eqref{halfindex}
roughly counts the Coulomb branch operators.
A simple illustration is the half-index of the abelian theory \eqref{quiverCS} with Neumann boundary conditions for 3d $\CN=2$ vector multiplets and for the chiral multiplet $\Phi_0$, a simple prototype for many calculations in this section.

Of course, compared to theories with larger
amount of supersymmetry, {\it e.g.} \cite{Gadde:2011uv}, where one can be very precise about Higgs or Coulomb branch operators contributing to different types of SUSY indices, in 3d theories with only $\CN=2$ supersymmetry this characterization can only be approximate at best.
And, as we explain in this section, not only Higgs vacua of 3d $\CN=2$ theory contribute to the half-index \eqref{halfindex}, but they in fact also give rise to special 2d $(0,2)$ boundary conditions akin to those arising from Coulomb branch vacua.

\subsection{``Almost abelian'' flat connections}\label{almostabelianflatconnections}

In section \ref{sec:Mflat}, we already encountered a particular class of complex flat connections which, for $G=SU(2)$, under the map $U \mapsto - U$ correspond to abelian flat connections on a ``mirror'' mapping torus, {\it cf.} \eqref{MMmap} and Proposition~\ref{thm:mirrormap}. Concretely, the components of such ``almost abelian'' flat connections on a mapping torus \eqref{mappingtori} form a set
\be
\pi_0 \, \CM_{\text{flat}}^{\text{almost ab.}} \left( M_3 , SL(2,\C) \right)
\; = \;
\pi_0 \, \CM_-
\; \cong \;
\text{coker} \, \left( - U - {\bf 1} \right)
\label{almostabviaU}
\ee
where, as usual, $U$ denotes a $2 \times 2$ matrix representing $\varphi \in SL(2,\Z)$. This has to be compared with the set
\be
\pi_0 \, \CM_{\text{flat}}^{\text{ab.}} \left( M_3 , SL(2,\C) \right)
\; = \;
\pi_0 \, \CM_+
\; \cong \;
\text{coker} \, \left( U - {\bf 1} \right)
\label{abviaU}
\ee
which, for 3-manifolds with $b_1=0$, is the labeling set of the invariants $\hat Z_b (M_3,q)$, that is the set where labels $a$ take their values.

The main result of this section (if not of the entire paper!) is to provide evidence that, for manifolds with $b_1 > 0$, the set of values of the label $a$ should be extended to include certain very special non-abelian flat connections that we call ``almost abelian'' --- namely, \eqref{almostabviaU} in the case of genus-1 mapping tori --- if we wish to construct $\hat Z_b (M_3,q)$ with the following properties:

\begin{itemize}

\item cutting/gluing operations that are necessary for defining a 3d TQFT ({\it i.e.} building closed 3-manifolds from open ones), 

\item topological invariance ({\it e.g.} invariance under 3d Kirby moves), 

\item relation to WRT invariants, 

\item integrality of the coefficients, 

\item convergence (as a $q$-series) inside the unit disk, $|q|<1$.

\end{itemize}

\noindent
In particular, we expect the relation between $\hat Z_b (M_3)$ and $\text{WRT} (M_3)$ to have the following structure:

\begin{conj}\label{conjWRT}
For any 3-manifold $M_3$
\be
\text{WRT} (M_3,k) \; = \; \Big( \frac{-i}{\sqrt{2k}}  \Big)^{1-b_1} \sum_{a} e^{2 \pi i k \text{CS} (a)} \sum_{b} \CS_{ab} \, \hat Z_b (q) \Big|_{q \to e^{\frac{2 \pi i}{k}}}
\label{WRTconj}
\ee
where $a$ and $b$ run over (almost) abelian flat connections on $M_3$.\footnote{More precisely, the labels $a$ and $b$ run over sets of the same cardinality, yet of possibly different nature, in view of \cite{Gukov:2019mnk}.}
\end{conj}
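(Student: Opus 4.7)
The plan is to prove Conjecture~\ref{conjWRT} by exhibiting both sides as two different but equivalent evaluations of the path integral of analytically continued Chern--Simons theory on $M_3$ with gauge group $G_{\C}=SL(2,\C)$. On the left, the WRT invariant is the partition function at level $k$ defined by the skein/quantum-group construction, and for a surgery presentation (in our case, either a plumbing graph as in \eqref{aaaloop} or a $0$-surgery on a knot) it admits a finite-dimensional contour integral representation. On the right, each $\hat Z_b(q)$ will be interpreted, via the half-index \eqref{halfindex} of $T[M_3]$, as the Borel resummation of the perturbative expansion of Chern--Simons theory around the (almost) abelian flat connection labeled by $b$. The strategy, then, is to perform the $k\to\infty$ saddle point / Picard--Lefschetz decomposition of the WRT contour integral and match it, term by term, against the sum on the right.

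First, I would establish the labeling set. Using the description \eqref{quiverCS} of $T[M_3,U(1)]$ and its ``almost abelian'' counterpart identified via \eqref{abviaU}--\eqref{almostabviaU}, I would show that the set of distinguished Coulomb-like vacua of $T[M_3]$ supporting the special Neumann boundary conditions $\CB_b$ is exactly $\pi_0\CM_+\sqcup\pi_0\CM_-$ for genus-$1$ mapping tori (and its obvious analog for $0$-surgeries on knots, read off from Table~\ref{tab:knots}). Second, I would define $\hat Z_b(q)$ intrinsically from $T[M_3]$ as the half-index with boundary condition $\CB_b$, then verify by direct residue calculation on the plumbing presentation \eqref{Qmatrix}--\eqref{HmaptoriviaQ} that: (i) each $\hat Z_b$ is a well-defined $q$-series with integer coefficients converging on $|q|<1$; (ii) the resulting series is invariant under 3d Kirby moves; and (iii) the free chiral $\Phi_0$ together with the $b_1$-fold tower of dual-photon fermions produces the prefactor $(-i/\sqrt{2k})^{1-b_1}$ after Gaussian contour integration.

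Third, and this is the heart of the matching, I would set up the surgery integral for $\text{WRT}(M_3,k)$ as in the standard Reshetikhin--Turaev formula and expand the integrand around each (almost) abelian critical point using the Gauss-reciprocity/Stirling machinery. Each critical point $a$ contributes a phase $e^{2\pi i k\,\text{CS}(a)}$ times a fluctuation determinant times the asymptotic expansion of $\hat Z_b(q)|_{q=e^{2\pi i/k}}$. The numbers $\CS_{ab}$ would be extracted, à la Stokes matrices, as the coefficients relating the Lefschetz thimbles through $a$ to the integration cycles that compute $\hat Z_b$. For the genus-$1$ mapping tori and for $0$-surgeries on twist knots this reduces to an explicit Gauss-sum identity, which I would verify directly using the $U-{\bf 1}$ form \eqref{UquiverCS} and Lemma~\ref{thm:mirrormap} (which already supplies the matching of Chern--Simons values across $\CM_+$ and $\CM_-$).

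The main obstacle, and the reason this remains a conjecture, is in two places. Conceptually, one must prove that the almost abelian contributions are really necessary and sufficient, i.e.\ that truly non-abelian flat connections (like the Alexander-polynomial roots in Table~\ref{tab:knots}) do not carry independent WRT weight but are absorbed into the resurgent expansion around the abelian locus; this requires a delicate global Stokes analysis on the complexified Chern--Simons moduli space for $b_1>0$, where $\CM_{\text{flat}}$ has positive-dimensional components and the saddle-point method in its naive form breaks down. Technically, the precise normalization $\CS_{ab}$ and the $(-i/\sqrt{2k})^{1-b_1}$ prefactor must be pinned down through a careful treatment of the continuous family of abelian flat connections (moduli given by $(\C^*)^{b_1}$), which contributes a one-loop Ray--Singer torsion that has to be reinterpreted on the $T[M_3]$ side as the half-index of the dual photon multiplets. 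I would therefore treat the mapping-torus and $0$-surgery families as the proving ground, accumulating enough explicit $(k,M_3)$ checks to fix $\CS_{ab}$ uniquely before attempting the general statement.
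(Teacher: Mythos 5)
First, note that the statement you are addressing is presented in the paper as a \emph{conjecture}, and the paper does not prove it in general; what it supplies is evidence in two families. For genus-1 mapping tori with $\Tr U \neq \pm 2$ the specialization \eqref{WRTmappingtori}, with the explicit ingredients \eqref{Zpmmaptori}--\eqref{Spmmaptori}, is quoted as a classical result of Jeffrey. For plumbings with loops the paper \emph{derives} the decomposition by writing the Reshetikhin--Turaev surgery sum for the plumbing link and applying Gauss sum reciprocity, under the hypothesis that all twisted linking matrices $Q_s$ are weakly negative definite; this produces the blocks $\hat Z_b^{(s)}$ of \eqref{Zforplumbed} as principal-value contour integrals of theta functions, with $\CS_{ab}$ and $\mathrm{CS}(a)$ read off directly from $Q_s^{-1}$. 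Your proposal is honest that it does not close the argument either, so there is no false claim of proof; the question is whether your route matches or improves on the paper's evidence.

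Your general strategy --- interpret each $\hat Z_b$ as a Borel resummation around an (almost) abelian flat connection via the half-index \eqref{halfindex}, decompose the WRT contour integral into Lefschetz thimbles, and extract $\CS_{ab}$ as Stokes data --- is genuinely different from what the paper does. The paper never runs a saddle-point or thimble analysis for these examples; it goes in the opposite direction, starting from the finite Gauss sum that \emph{is} the WRT invariant of a plumbed manifold and reorganizing it exactly by reciprocity, so that the $q$-series and the S-matrix emerge as identities rather than as asymptotic matchings. The thimble picture is more conceptual and potentially more general, but for $b_1>0$ it runs into exactly the obstruction you name (positive-dimensional components of $\CM_{\text{flat}}$), whereas the reciprocity route sidesteps it entirely for plumbings. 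Two concrete ingredients of the paper's argument are absent from your plan and are worth flagging. First, the combinatorial origin of the ``almost abelian'' labels for a general plumbing with loops is the sum over $W=\Z_2$-flat connections $s\in\Z_2^{b_1(\Gamma)}$ on the plumbing graph, implemented by twisting the linking matrix $Q\mapsto Q_s$; your identification of the label set via $\pi_0\CM_{\pm}$ and \eqref{abviaU}--\eqref{almostabviaU} covers only the necklace graphs of genus-1 mapping tori and does not by itself determine what replaces $\CM_-$ for, say, the tadpole or double-loop examples. Second, the right-hand side of \eqref{WRTconj} is in general not a naive radial limit: the paper stresses that $q\to e^{2\pi i/k}$ must be taken as a zeta-regularized value (essential already for the double-loop example), so any thimble-by-thimble matching must be formulated against the regularized limit rather than the literal limit of the $q$-series.
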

\noindent
Note that sometimes the naive radial limit of the right-hand side in this conjecture is ill-defined, while the left-hand side makes perfect sense. That is why by $q \to e^{2 \pi i/k}$ limit we actually mean a zeta-regularized value.

For example, for genus-1 mapping tori \eqref{UviaST} with $\Tr U \ne \pm 2$ (all such $M_3$ have $b_1 = 1$), the labels $a$ and $b$ run over the union of the two sets \eqref{abviaU} and \eqref{almostabviaU} which are, respectievly, the sets of abelian and almost abelian flat connections. Correspondingly, the ``S-matrix'' $\CS$ that appears in \eqref{WRTconj} is block-diagonal,
\be
\CS \; = \;
\begin{blockarray}{*{2}{c} l}
	\begin{block}{*{2}{>{$\footnotesize}c<{$}} l}
		abelian & almost abelian &  \\
	\end{block}
	\begin{block}{({c}{c})>{$\footnotesize}l<{$}}
		\CS^{(+)} & 0 \bigstrut[t]& abelian \\
		0 & \CS^{(-)} & almost abelian \\
	\end{block}
\end{blockarray}
\label{SpSm}
\ee
so that \eqref{WRTconj} in this case takes the form
\begin{equation}
\text{WRT} (M_3,k) = 
\sum_{\text{abelian}} e^{2 \pi i k \text{CS} (a)} \CS^{(+)}_{ab} \hat Z^{(+)}_b \Big|_{q \to e^{\frac{2 \pi i}{k}}} + \sum_{\text{almost abelian}} e^{2 \pi i k \text{CS} (a)} \CS^{(-)}_{ab} \hat Z^{(-)}_b \Big|_{q \to e^{\frac{2 \pi i}{k}}}
\label{WRTmappingtori}
\end{equation}
As we will see through many examples below, this structure applies to a more general class of 3-manifolds described by plumbing graphs.

The invariants $\hat Z^{(\pm)}_b$ are quite simple for genus-1 mapping tori in question. Before we write down the explicit expressions, however, let us note that the set of abelian flat connections \eqref{abviaU} can be equivalently described as $\Z^n / Q \Z^n$, where $Q$ is the matrix introduced \eqref{Qmatrix}. This description, better adapted to \eqref{aaaloop}, was already used in section~\ref{sec:abelian}. The set of almost abelian flat connections \eqref{almostabviaU} has a similar description. Namely, we have
\be
\text{coker} \, Q_{\pm} \; \cong \; \text{coker} \, \left( \pm U - {\bf 1} \right)
\ee
where, in the notations \eqref{UviaST}--\eqref{aaaloop}, 
\be
Q_{\pm} \; = \;
\begin{pmatrix}
	a_1 & -1   & 0 & \cdots  & \mp 1 \\
	-1   & a_2 & -1        &   & \vdots \\
	0 & -1 &  & \ddots  &   0 \\
	\vdots  & & \ddots & ~\ddots~  &  ~-1~  \\
	\mp 1 &  \cdots  & 0 & -1 & a_n
\end{pmatrix}
\ee
Therefore, we can describe the sets of abelian and almost abelian flat connections, as well as other ingredients in \eqref{WRTmappingtori}, either in terms of $n \times n$ matrices $Q_{\pm}$ or in terms of $2 \times 2$ matrix $U = \bigl(\begin{smallmatrix} a & b \\ c & d \end{smallmatrix} \bigr)$. These two descriptions, of course, are compatible with each other. For example, we have
\be
|\det Q_{\pm}| \; = \; |\det (\pm U - {\bf 1})| \; = \; |2 \mp (a+d)| \,.
\ee
which express the total number of abelian and almost abelian flat connections in these two descriptions.

Let's denote by $\sigma^{(s)} = b_{+}^{(s)}-b_{-}^{(s)}$ the signature of $Q_{s}$. Now, we are ready to write
\begin{equation}
    \hat{Z}_b^{(\pm)} = \begin{cases} \frac{(-1)^{ b_+^{(+)}\pm 1}}{2} e^{\frac{\pi i}{4} (\sigma^{(\pm)} - \sigma^{(+)})}   q^{\frac{3 \sigma^{(+)}-\sum_v a_v}{4}} &\text{ if }b = 0\in \mathbb{Z}^n/Q_+\mathbb{Z}^n \text{ or } b = 0\in \mathbb{Z}^n/Q_-\mathbb{Z}^n\\ 
    0 &\text{ otherwise}\end{cases}
\label{Zpmmaptori}
\end{equation}
as well as the remaining ingredients of \eqref{WRTmappingtori}.
The Chern-Simons invariant of the abelian flat connection $a \in \Z^n / Q_+ \Z^n \cong \Z^2 / (U-{\bf 1}) \Z^2$ is
\be
\text{CS} (a) \; = \; - (a, Q_+^{-1} a)
\; = \; - ((U-{\bf 1})^{-1} a, S a)
\ee
and, similarly, for the almost abelian $a \in \Z^n / Q_- \Z^n \cong \Z^2 / (-U-{\bf 1}) \Z^2$ we have
\be
\text{CS} (a) \; = \; - (a, Q_-^{-1} a)
\; = \; ((U+{\bf 1})^{-1} a, S a)
\ee
For convenience, both expressions are written in terms of $n \times n$ matrices $Q_{\pm}$ and in terms of $2 \times 2$ matrix $U$. In the same way we can write the S-matrix \eqref{SpSm}:
\be
\CS_{ab}^{(m)}
\;  = \;  \frac{e^{4\pi i (a, Q_{\pm}^{-1} b)} + e^{- 4\pi i (a, Q_{\pm}^{-1} b)} }{|\text{Stab}_{\mathbb{Z}_2} (a)| \sqrt{|\det Q_{\pm}|}} \label{Spmmaptori}
\;  = \; \frac{e^{4\pi i ((\pm U-{\bf 1})^{-1} a, S b)} + e^{-4\pi i ((\pm U-{\bf 1})^{-1} a, S b)} }{|\text{Stab}_{\mathbb{Z}_2} (a)| \sqrt{|\det (\pm U - {\bf 1})|}}
\ee
where $\text{Stab}_{\mathbb{Z}_2} (a)$ is the stabiliser of $a$ under the action of the Weyl group: $a \to -a$. The fact that \eqref{WRTmappingtori} is satisfied with all these ingredients \eqref{Zpmmaptori}--\eqref{Spmmaptori} is a classic result of~\cite{MR1175494};
see also \cite{Andersen:2011zh,Andersen:2011xe,Andersen:2016vsm,Andersen:2017awz,Andersen:2018lty} for recent work on WRT invariants of mapping tori.

Aiming to understand the precise definition/characterization of almost abelian flat connections on general 3-manifolds, in the rest of this section we extend the notion of the two sets \eqref{almostabviaU} and \eqref{abviaU} to plumbed 3-manifolds with $b_1 > 0$ and 0-surgeries on some knots.

%%%%%%%%%%%%%%%%%%%%%%%%%%%%%%%%%%%%%%%%%%%%%%

\subsection{Plumbings with loops}
\label{sec:plumbings}

By a theorem of Lickorish and Wallace \cite{MR151948,MR125588}, any closed oriented 3-manifold can be obtained by performing an integral Dehn surgery on a link in $S^3$. Moreover, any two surgery descriptions of the same 3-manifold $M_3$ are related by a sequence of Kirby moves \cite{MR467753}. 
In this section, we focus on a class of 3-manifolds, plumbings with loops, each of which can be conveniently described by a decorated graph. 

For any graph $\Gamma$ whose vertices $v\in V(\Gamma)$ are labeled by some integers $a_v\in \mathbb{Z}$, we can associate a $3$-manifold $Y_\Gamma$ in the following way : 
\begin{enumerate}
	\item Embed the graph $\Gamma$ into $S^3$. 
	\item Add extra $0$-framed unknots along the meridians, one for each generator of $H_1(\Gamma)$.
	\item Replace each vertex $v\in V$ with an unknot $L_v$ with framing $a_v$. Link $L_v$ and $L_w$ when and only when $(v,w)\in E(\Gamma)$. 
	\item The surgery along the link gives us a 3-manifold $Y_\Gamma$. 
\end{enumerate}
\begin{figure}
    \centering
    \includegraphics[scale=0.5]{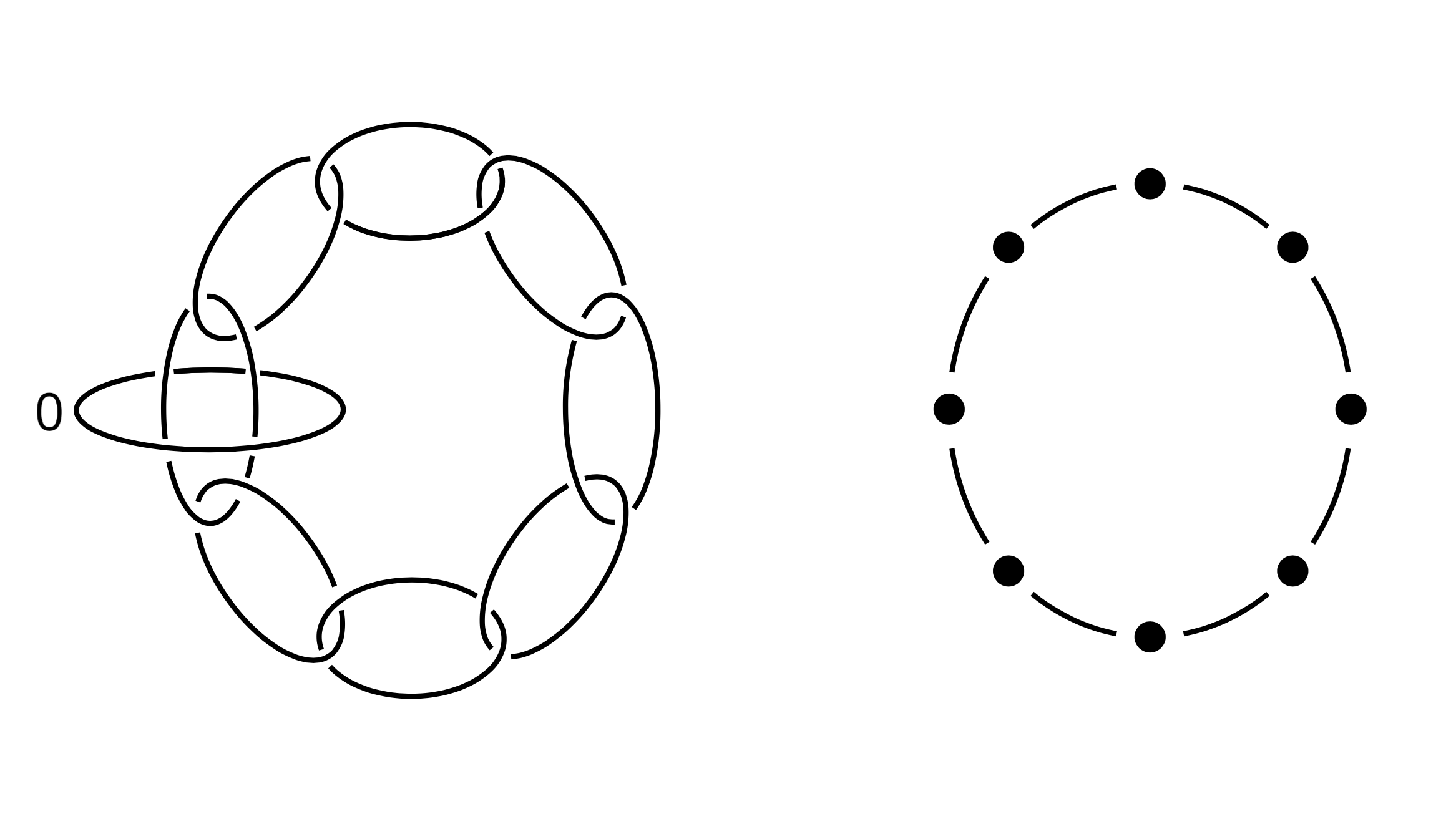}
    \caption{Additional 0-framed unknots for plumbing graph with loops.}
    \label{fig:add0surgery}
\end{figure}
Note that we add $b_1(\Gamma)$ number of $0$-framed unknots. The choice of generators of $H_1(\Gamma)$ doesn't matter, because they're all equivalent via handle slides. Another way to think about this is, rather than adding extra $b_1(\Gamma)$ $0$-framed unknots, we can think of putting the graph $\Gamma$ inside $\#^{b_1(\Gamma)}(S^2\times S^1)$ such that $H_1(\Gamma)\xrightarrow{\iota_*}H_1(\#^{b_1(\Gamma)}(S^2\times S^1))$ is an isomorphism. 

What is more important to note is that there are $2^{b_1(\Gamma)}$ different choices to make a surgery link out of $\Gamma$; they correspond to different choices of twisting each cycle of $\Gamma$. Hence, to make it clear, we should write $Y_{\Gamma,t}$ where $t$ is a twisting data for $\Gamma$. 
A twisting data $t$ of $\Gamma$ assigns to each generator $c\in H_1(\Gamma)$ a number in $t_c\in \frac{1}{4}\mathbb{Z}/\mathbb{Z}$ in such a way that if $c$ is a cycle of length $m$, then $t_c \in (\frac{1}{2}\mathbb{Z}+\frac{m}{4})/\mathbb{Z}$. It can be equivalently thought of as an equivalence class of assignment $t_e\in (\frac{1}{2}\mathbb{Z}+\frac{1}{4})/\mathbb{Z}$ to each edge such that for each cycle $c=[e_1,\cdots,e_m]$, $t_c=\sum_{i=1}^{m} t_{e_i}$.
Although the $3$-manifold $Y_{\Gamma,t}$ itself is dependent on the choice of $t$, its WRT invariant $\mathrm{WRT}(Y_{\Gamma})$ is independent of $t$. 

The first homology group $H_1(Y_\Gamma)$ can be easily computed from the plumbing data. 
\begin{equation}
	H_1(Y_{\Gamma,t})\cong\mathbb{Z}^{b_1(\Gamma)} \times \mathbb{Z}^{|V|}/Q \mathbb{Z}^{|V|}
\end{equation}
Here, the linking matrix $Q$ of $Y_{\Gamma,t}$ is a symmetric $|V|\times |V|$ matrix characterized by
\begin{itemize}
	\item For each $v\in V$, $(v,Qv) = a_v + \sum_{e\in E(v,v)}2(-1)^{2t_e-\frac{1}{2}}$
	\item For each $v\neq w\in V$, $(v,Qw) = \begin{cases} \sum_{e\in E(v,w)}(-1)^{2t_e-\frac{1}{2}} &\text{if }(v,w)\in E \\ 0 &\text{if }(v,w)\neq E \end{cases}$
\end{itemize}
where $E(v,w)$ denotes the set of edges between $v$ and $w$. 
For each $\mathbb{Z}_2$-flat connection on $\Gamma$, $s\in \mathrm{Hom}(\pi_1(\Gamma),\mathbb{Z}_2)\cong \mathbb{Z}_2^{b_1(\Gamma)}$, the corresponding \emph{twisted linking matrix} $Q_s$ is a symmetric $|V|\times |V|$ matrix characterized by the same bullet points, with $t_e$ replaced by $t_e + s_e$. 
In particular, $Q_0 = Q$. 

Assume that $Q_s$ is weakly negative definite for all $s\in \mathbb{Z}_2^{b_1(\Gamma)}$.  Then applying the Gauss sum reciprocity procedure similar to that of \cite{Gukov:2017kmk} we get\footnote{While in this paper we focus on $G=SU(2)$, the generalization to arbitrary gauge group $G$ is the subject of \cite{Chung:2018rea,Park}.}
\be
\text{WRT}(k) \sim \left( \frac{-i}{\sqrt{2k}}  \right)^{1-b_1} \sum_{\text{classes of }s} \sum_{\substack{a \in \Coker Q_{s}/\mathbb{Z}_2 \\ b \in (2 \Coker Q_{s} + \delta)/\mathbb{Z}_2} }    e^{2\pi i k\, CS(a)} \mathcal{S}_{a b}   \hat{Z}^{(s)}_b|_{q \to e^{\frac{2 \pi i}{k}}} 
\ee
where
\be
\delta_v = 2-\deg v, 
\ee
\be
CS^{(s)}(a) = -(a,Q_{s}^{-1}a), 
\ee
\be
\mathcal{S}^{(s)}_{ab} = \frac{e^{2 \pi i (a, Q_{s}^{-1} b)}+e^{- 2 \pi i (a, Q_{s}^{-1}b)}}{|\text{Stab}_{\mathbb{Z}_2}(a)|\sqrt{|\det(Q_{s})|}}.
\ee
Here $\hat{Z}_b$ is given by
\begin{multline}
\label{Zforplumbed}
\hat{Z}^{(s)}_b(q) = \frac{(-1)^{s + b_+^{(0)}}}{2^{b_1(\Gamma)}} e^{\frac{\pi i}{4} \left( \sigma^{(s)}-\sigma^{(0)} \right)}  q^{\frac{3 \sigma^{(0)}-\sum_v a_v}{4}}  \\ \times \text{v.p.} \oint_{|z_v|=1} \prod_v \frac{dz_v}{2 \pi i z_v} \left(z_v - \frac{1}{z_v} \right)^{2-\deg(v)} \Theta^{-Q_{s}}_b(\vec{z};q)
\end{multline}
where 
\be
\Theta^{-Q}_b(\vec{z};q) = \sum_{l \in 2 Q \mathbb{Z}^V + b} q^{-\frac{(l, Q^{-1} l)}{4}} \vec{z}^l.
\ee

We call the set of labels $a$ ``almost abelian'' in the following sense : 
Plumbed 3-manifolds $Y_\Gamma$ are naturally torus-fibered over the plumbing graph $\Gamma$. Each $s\in \mathbb{Z}_2^{b_1(\Gamma)}$ corresponds to a choice of a $W = \mathbb{Z}_2$-flat connection on the plumbing graph. Each label $a$ corresponds to an abelian flat connections on $Y_\Gamma$ twisted by the pull-back of the $\mathbb{Z}_2$-flat connection, hence the name ``almost abelian''. 

\subsubsection{Genus-1 mapping tori}
To illustrate how this works in practice, let's consider genus-1 mapping tori \eqref{UviaST} with $\Tr U \ne \pm 2$. They are special cases of plumbings with loops where the plumbing graph is necklace-shaped. For $U = ST^{a_1}\cdots ST^{a_n}$, the corresponding plumbing graph was described in (\ref{aaaloop}). 
Here let us explain how genus-1 mapping tori have such natural surgery presentations in $S^2\times S^1$. In the following we will think of the horizontal direction as $S^2$ and the vertical direction as $S^1$ (or an interval in $S^1$). Each element of $SL(2,\mathbb{Z})$ can then be expressed as a tangle (using Lickorish generators \cite{K}).
Such correspondence is summarized in the figure below:
\begin{center}
\includegraphics[scale=0.7]{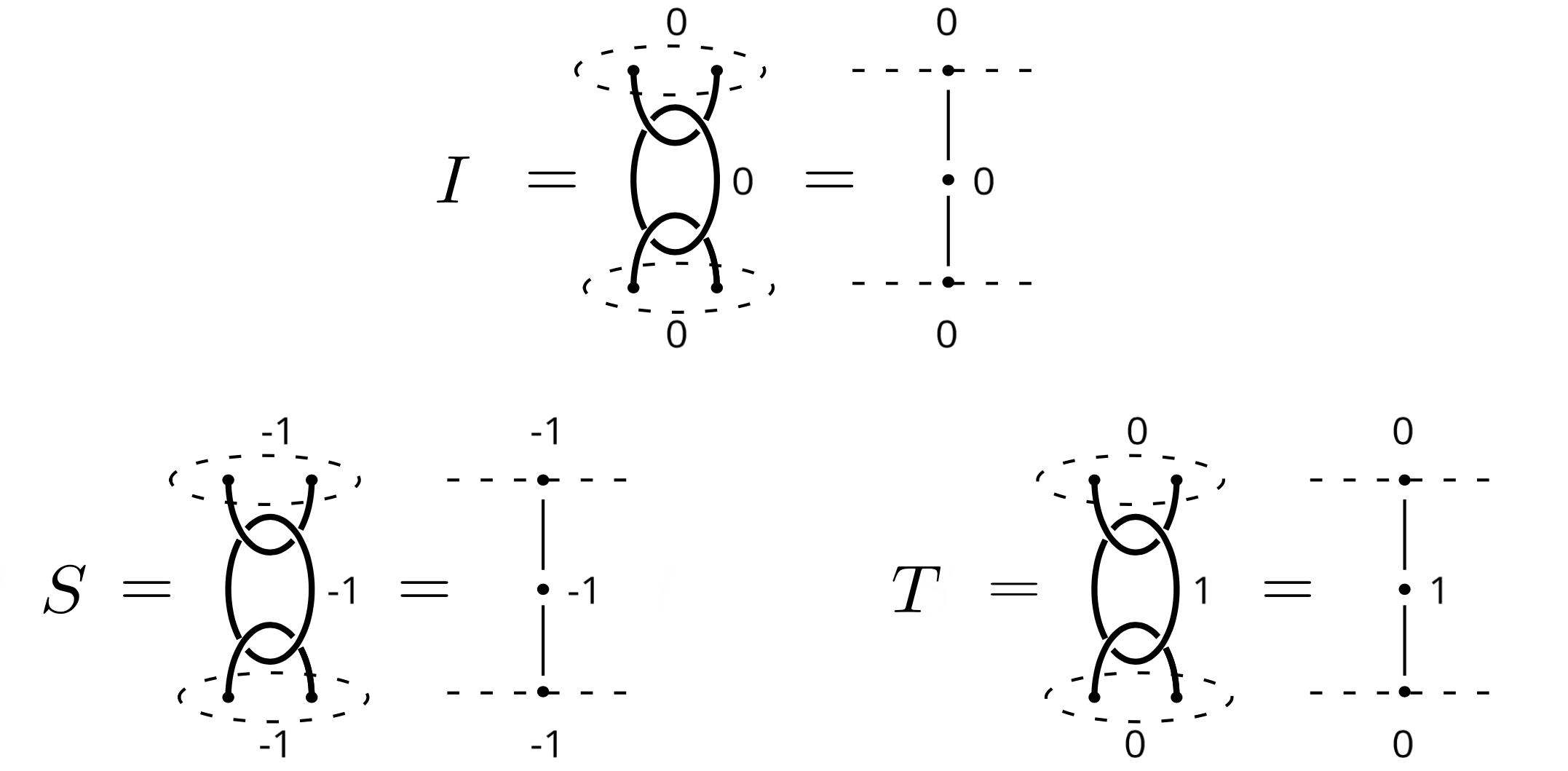}
\end{center}
It is a fun exercise to check that $S^4=I$ and $(S T)^3=S ^2$ under some sequence of 3d Kirby moves. It is good to keep track of twists. For each linking (i.e. an edge for plumbing graph), we assign a number in $\frac{1}{4}\mathbb{Z}/\mathbb{Z}$ as follows : 
\begin{center}
	\includegraphics[scale=0.4]{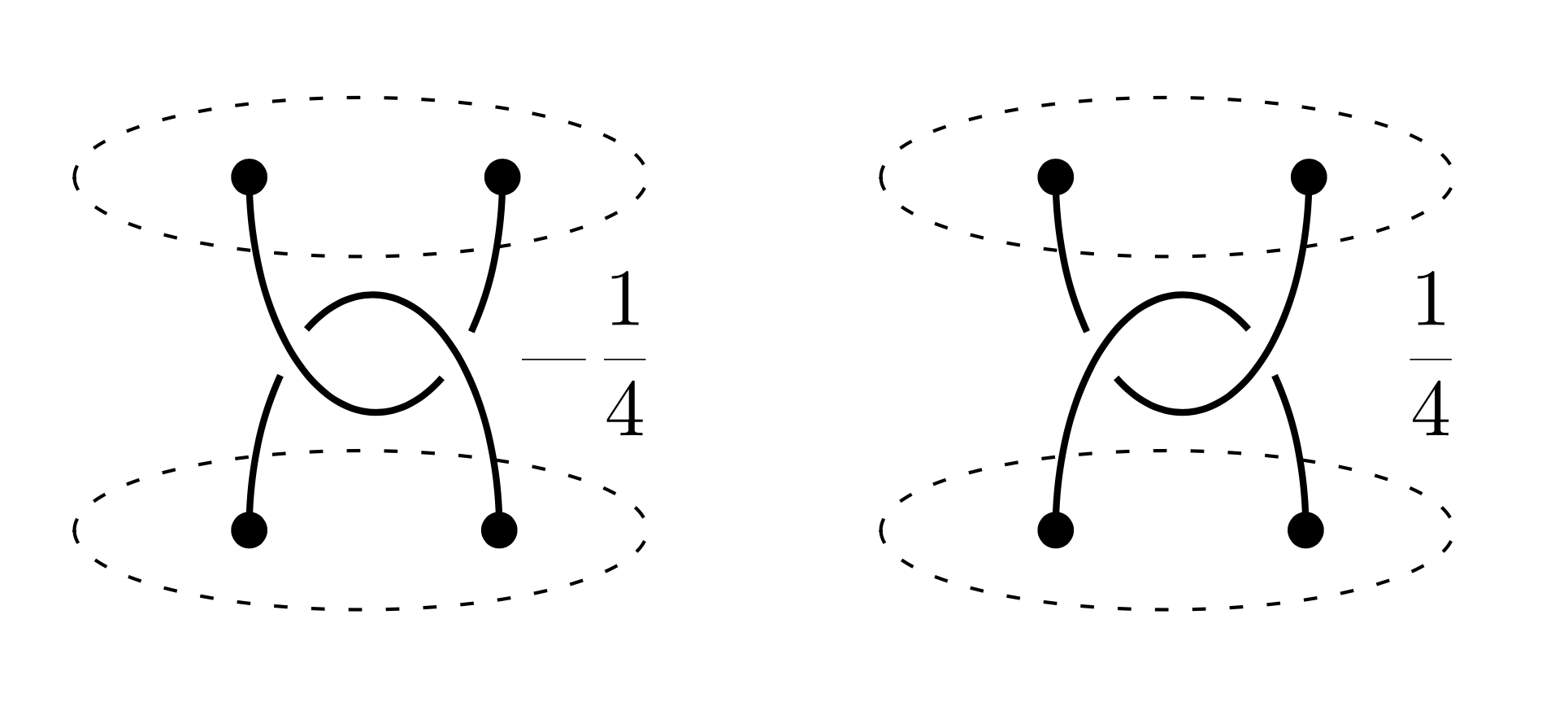}
\end{center}
Of course, for each tangle, only the overall twisting matters. It is easy to check that $S^2=(S T)^3$ is equivalent to the plumbing graph of $I$ but with a half-twist (meaning that it is $-I$). 

Using this dictionary, it is easy to see that the element $ST^{a_1}\cdots ST^{a_n}$ is equivalent to the following tangle : 
\begin{center}
	\includegraphics[scale=0.6]{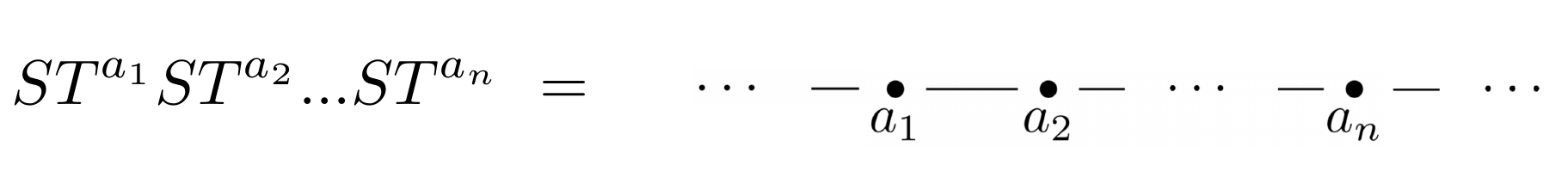}
\end{center}
with the total twist factor $-\frac{n}{4}$. Hence, the mapping torus $M_U$ for $U = ST^{a_1}\cdots ST^{a_n}$ is exactly the manifold obtained by a Dehn surgery on $S^2\times S^1$ along the closure of the above tangle (plumbing).\footnote{By closure, we mean identifying the top $S^2$ with the bottom one.} 
If one needs a surgery presentation inside $S^3$, of course it can be obtained by including an additional 0-surgery as shown on Figure \ref{fig:add0surgery}.
This simple dictionary between torus bundles and plumbing graphs was described in (\ref{mappingtoridict}).\footnote{This correspondence holds for $n\geq 2$, but for $n=0, 1$, we need to the previous (unsimplified) dictionary. For instance, when $U=t^p$, the surgery link in the above description is not a single vertex.}

From this plumbing graph description of $M_U$, it is straightforward to derive (\ref{Zpmmaptori}) and other formulas for genus 1 mapping tori given in subsection \ref{almostabelianflatconnections} from (\ref{Zforplumbed}). 

%%%%%%%%%%%%%%%%%%%%%%%%%%%%%%%%%%%%%%%%%%%%%%%%

\subsubsection{Example: a tadpole diagram}

\be
\Gamma \qquad = \qquad
- \!\!\!\!\!\!\!\!\!\!\!\!\!\!\!\! {\raisebox{-0.4cm}{\includegraphics[width=3.0cm]{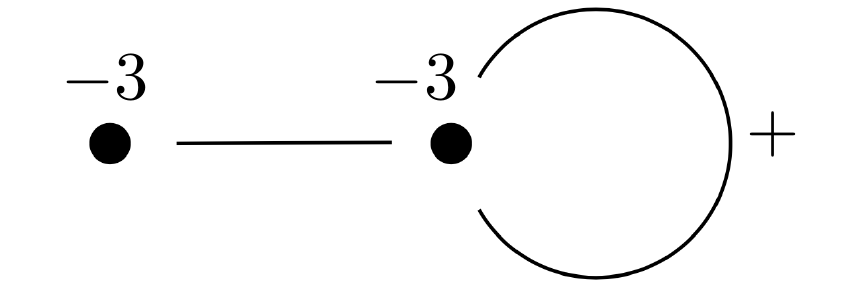}}\,}
\label{tadpoleexample}
\ee

This manifold is neither a genus 1 mapping torus nor a surgery on any knot. It has $H_1 (M_3) \cong \Z \times \Z_3$. In particular, for $G=SU(2)$ one finds two abelian flat connections and 14 almost abelian flat connections (after modding out by Weyl symmetry). The corresponding $q$-series invariants are 
\begin{align}
\hat Z^{(+)}_{0} & = - q^{1/12} \left[ \Psi_{24}^{(8)} - \Psi_{24}^{(16)} \right]
\\
\hat Z^{(+)}_{1} & = q^{1/12} \left[ \Psi_{24}^{(4)} - \Psi_{24}^{(20)} \right]
\nonumber
\end{align}
and
\begin{align}
\hat Z^{(-)}_{0} & = q^{1/12} \left[ \Psi_{168}^{(56)} - \Psi_{168}^{(112)} \right]
\nonumber \\
\hat Z^{(-)}_{1} & = - q^{1/12} \left[ \Psi_{168}^{(28)} - \Psi_{168}^{(140)} \right]
\nonumber \\
\hat{Z}^{(-)}_{2} = \hat{Z}^{(-)}_{3}
& = - \frac{1}{2} q^{1/12} \left[ \Psi_{168}^{(4)} - \Psi_{168}^{(164)} + \Psi_{168}^{(52)} - \Psi_{168}^{(116)} \right]
\nonumber \\
\hat{Z}^{(-)}_{4} = \hat{Z}^{(-)}_{5}
& = - \frac{1}{2} q^{1/12} \left[ \Psi_{168}^{(16)} - \Psi_{168}^{(152)} + \Psi_{168}^{(40)} - \Psi_{168}^{(128)} \right]
\\
\hat{Z}^{(-)}_{6} = \hat{Z}^{(-)}_{7}
& = \frac{1}{2} q^{1/12} \left[ \Psi_{168}^{(32)} - \Psi_{168}^{(136)} + \Psi_{168}^{(80)} - \Psi_{168}^{(88)} \right]
\nonumber \\
\hat{Z}^{(-)}_{8} = \hat{Z}^{(-)}_{9}
& = \frac{1}{2} q^{1/12} \left[ \Psi_{168}^{(44)} - \Psi_{168}^{(124)} + \Psi_{168}^{(68)} - \Psi_{168}^{(100)} \right]
\nonumber \\
\hat{Z}^{(-)}_{10} = \hat{Z}^{(-)}_{11}
& = \frac{1}{2} q^{1/12} \left[ \Psi_{168}^{(8)} - \Psi_{168}^{(160)} + \Psi_{168}^{(104)} - \Psi_{168}^{(64)} \right]
\nonumber \\
\hat{Z}^{(-)}_{12} = \hat{Z}^{(-)}_{13}
& = \frac{1}{2} q^{1/12} \left[ \Psi_{168}^{(20)} - \Psi_{168}^{(148)} + \Psi_{168}^{(92)} - \Psi_{168}^{(76)} \right]
\nonumber
\end{align}
written here in terms of the standard {\it false theta-functions},
\be
\Psi_p^{(a)} (q) \; := \;
\sum_{n \in 2 p \Z+a} \text{sign} (n) \, q^{\frac{n^2}{4 p}}
\label{falsetheta}
\ee

%%%%%%%%%%%%%%%%%%%%%%%%%%%%%%%%%%%%%%%%%%%%%%%%

\subsubsection{Example: double loop}

\be
\Gamma \qquad = \qquad
- \!\!\!\!\!\!\!\!\!\!\!\!\!\!\!\! {\raisebox{-0.65cm}{\includegraphics[width=3.0cm]{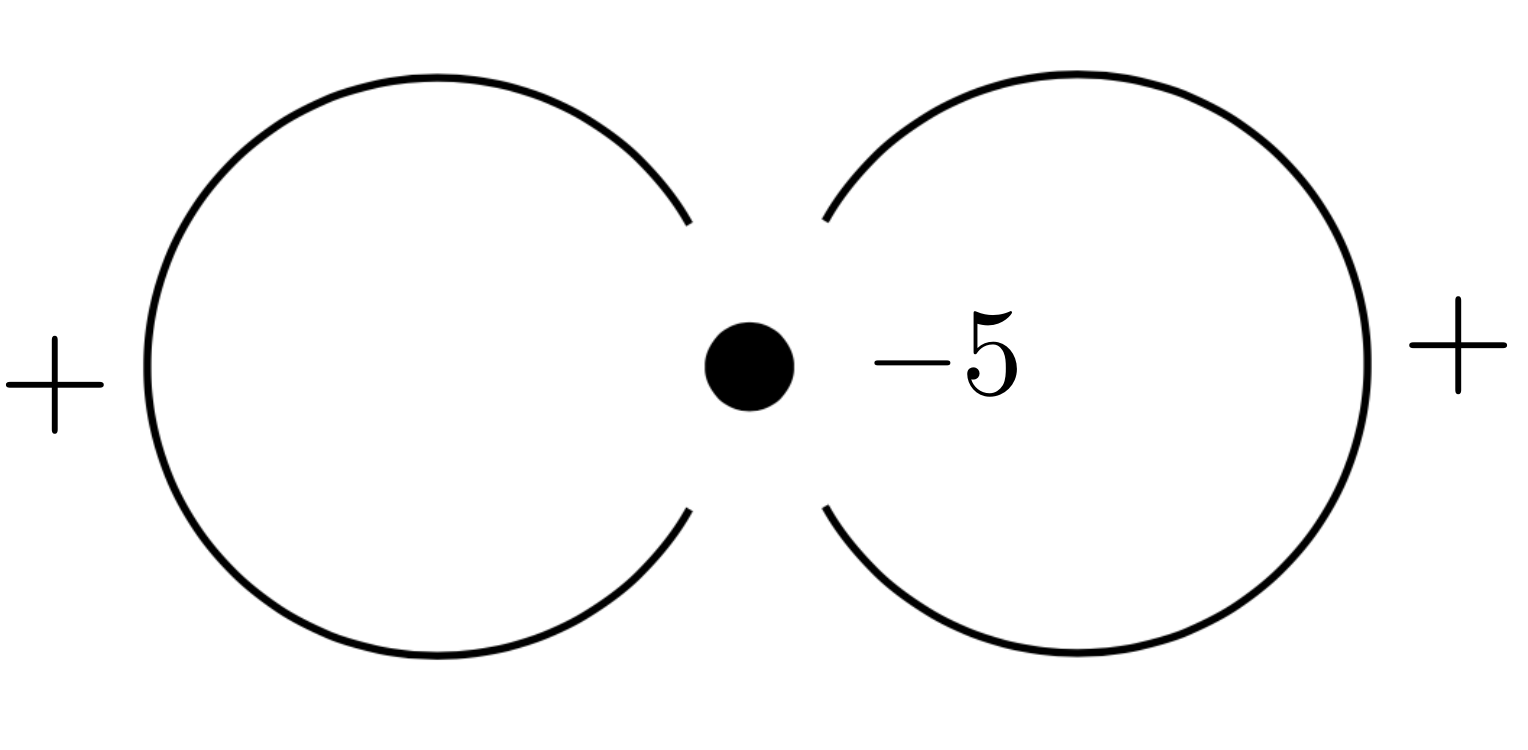}}\,}
\label{tadpoleexample}
\ee

As an example of a 3-manifold with $b_1=2$ we consider a graph with a single vertex and two loops. There are four different classes of abelian and almost abelian flat connections with the following $q$-series

\begin{align}
\hat Z^{(++)}_{(0)} & = \Phi^{(0)}_{1}
\\
\hat{Z}^{(+-)}_{(2r)} = \hat{Z}^{(-+)}_{(2r)}
& = \Phi^{(r)}_{5} \,\,\,\,\, \text{ for  } r=0,...,4 
\nonumber \\
\hat{Z}^{(--)}_{(2r)}
& = \Phi^{(r)}_{9} \,\,\,\,\, \text{ for  } r=0,...,8 
\nonumber \\
\end{align}
where
\be
\Phi^{(r)}_{p}(q) := \frac{q^{1/2}}{2} \sum_{n \in p \mathbb{Z} + r} |n| q^{n^2/p}
\ee
Using zeta-regularized values of $\Phi_{p}^{(r)}(q)$ at roots of unity, not just the radial limit of it, we were able to reproduce WRT invariant for this manifold from $\hat{Z}$.
As in \cite{Kucharski:2019fgh,Chung:2019jgw}, it would be interesting to study more carefully (and compare) the behavior near $q = e^{2 \pi i \tau}$, with different rational values of $\tau \in \mathbb{Q}$.

%%%%%%%%%%%%%%%%%%%%%%%%%%%%%%%%%%%%%%%%%%%%%%%%

\subsection{0-surgery on knots}

In this subsection we study 0-surgeries on knots, which are another class of 3-manifolds with $b_1>0$. Although our main examples, 0-surgery on double twist knots, are plumbings with loops, viewing them as 0-surgeries turns out to be fruitful and provides a different perspective. 

\subsubsection{0-surgery on double twist knots $K_{n,m}$}

If we replace in \eqref{tadpoleexample} both of the framing coefficients $-3$ by $2$, we obtain a plumbing diagram for the 0-surgery on ${\bf 5_2}$ knot, that was included in our list of examples in Table~\ref{tab:knots}. More generally, the 0-surgery on a double twist knot $K_{n,m}$ admits a plumbing presentation, illustrated in Figure~\ref{fig:doubletwist}. Our example $K = {\bf 5_2}$ is a special case of this infinite family, with $(n,m)=(2,1)$. While we mostly focus on ${\bf 5_2}$ knot as an example, all the discussions in section can be easily generalized to double twist knots $K = K_{n,m}$. 

\begin{figure}[ht]
	\centering
	\includegraphics[scale=0.6]{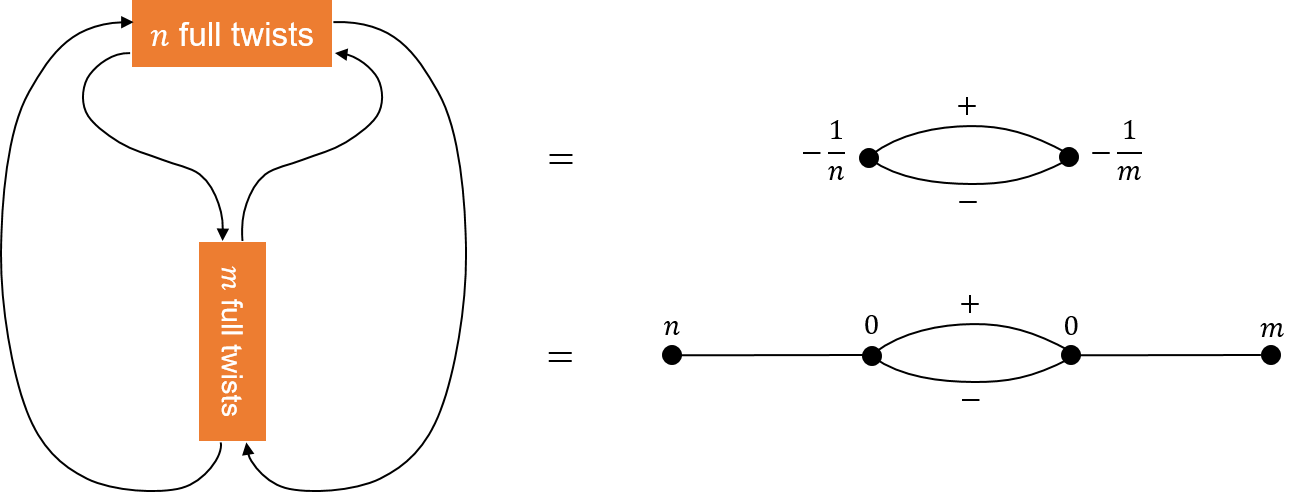}
	\caption{0-surgery on a double twist knot $K_{n,m}$ can be represented by a plumbing diagram with one loop. The ${\bf 5_2}$ knot is a double twist knot $K_{2,1}$.}
	\label{fig:doubletwist}
\end{figure}

As was also mentioned in Table~\ref{tab:knots}, for $G_{\C} = SL(2,\C)$ there is one abelian (trivial) flat connection on $M_3 = S^3_0 ({\bf 5_2})$ and four almost abelian flat connections.\footnote{In particular, since this plumbing has one loop there are only two values of $s$, $s=+$ and $s=-$, just like for genus-1 mapping tori with $b_1=1$.} Therefore, in this example, we expect one $q$-series invariant $\hat Z^{(+)}$ and four $q$-series invariants $\hat Z^{(-)}$. They can be easily computed from the plumbing graph presentation by using the general formula \eqref{Zforplumbed}. Let us start with the $q$-series invariant labeled by a single abelian flat connection. We find
\begin{equation}
\hat Z^{(+)}_{0} (q) \; = \; - q^{9/8} \, \Psi_{2}^{(1)}
\label{Zplus52}
\end{equation}
For the $q$-series invariants labeled by almost abelian flat connections, the direct application of \eqref{Zforplumbed} gives
\begin{align}
\hat Z^{(-)}_{0} & = - q^{9/8} \, \Psi_{14}^{(7)} \bigg\vert_{q \rightarrow q^{-1}}
\nonumber \\
\hat Z^{(-)}_{1} & = \frac{1}{2} q^{9/8} \left[ \Psi_{14}^{(1)} + \Psi_{14}^{(13)} \right] \bigg\vert_{q \rightarrow q^{-1}}
\\
\hat Z^{(-)}_{2} & = - \frac{1}{2} q^{9/8} \left[ \Psi_{14}^{(3)} + \Psi_{14}^{(11)} \right] \bigg\vert_{q \rightarrow q^{-1}}
\nonumber \\
\hat Z^{(-)}_{3} & = \frac{1}{2} q^{9/8} \left[ \Psi_{14}^{(5)} + \Psi_{14}^{(9)} \right] \bigg\vert_{q \rightarrow q^{-1}}
\label{Zminus52}
\end{align}
which, following the rules of \cite{Cheng:2018vpl}, should be further re-written as a $q$-series.

It is instructive to compare these results with the surgery formula based on a two-variable knot invariant $F_K (x,q)$ which is annihilated by the quantum A-polynomial and can be defined using Borel resummation (parametric resurgence) of the $n$-colored Jones polynomial \cite{Gukov:2019mnk}:
\be
J^K_n(e^{\hbar}) = \sum_{m=0}^{\infty} \sum_{j=0}^{m} c_{m,j} n^j \hbar^m
\; \;\;\stackrel{\text{Borel resum}}{=\joinrel=\joinrel=}\;\; \;
\frac{F_K(x,q)}{x^{1/2}-x^{-1/2}}
\label{JonesBorel}
\ee
This Borel-resummed invariant $F_K(x,q)$ looks quite different than the colored Jones polynomials themselves, but we conjecture that, analogously to Conjecture \ref{conjWRT}, it contains all the information of asymptotic expansions of colored Jones polynomials near each root of unity : 
\begin{conj}
    For any color $n$, the asymptotic expansion of $(q^{\frac{n}{2}} - q^{-\frac{n}{2}})J_n(q)$ and $F_K(q^n,q)$ near each root of unity agree.\footnote{This conjecture is not in contradiction with the volume conjecture, because $F_K(x,q)$ is not defined on $q=e^{2\pi i z}$ with $z\in \mathbb{R}\setminus\mathbb{Q}$. Hence the limit $\lim_{q\rightarrow 1} F_K(x,q) = \frac{x^{1/2}-x^{-1/2}}{\Delta_K(x)}$ and the limit taken in the volume conjecture are two limits of different nature. (The first one is radial, whereas the second one is along the unit circle.)} That is, for each root of unity $\zeta$, 
    \begin{equation}
        (\zeta^{\frac{n}{2}}e^{\frac{n\hbar}{2}}-\zeta^{-\frac{n}{2}}e^{-\frac{n\hbar}{2}})J_n(\zeta e^\hbar)\;\stackrel{\mathrm{perturbatively}}{=\joinrel=\joinrel=}F_K(\zeta^n e^{n\hbar},\zeta e^\hbar)\;
    \end{equation}
\end{conj}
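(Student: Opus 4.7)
The plan is to reduce the conjecture to the claim that both sides satisfy the same $q$-difference equation and share enough initial data at each root of unity. At $\zeta=1$ the statement is essentially tautological: by \eqref{JonesBorel}, the series $F_K(e^{n\hbar},e^{\hbar})/(e^{n\hbar/2}-e^{-n\hbar/2})$ is, by construction, the formal $\hbar$-expansion of $J_n(e^{\hbar})$. So the real content of the conjecture is how to propagate this identity to all other roots of unity $\zeta=e^{2\pi i p/k}$.

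The first step would be to exploit the quantum $A$-polynomial $\hat{A}(\hat{x},\hat{y};q)$ of $K$. Assuming the AJ conjecture holds for $K$, one has $\hat{A}\, J_n(q)=0$ with $\hat{x}J_n = q^n J_n$ and $\hat{y}J_n = J_{n+1}$; and, as proposed in \cite{Gukov:2019mnk} and verified in examples there, $F_K(x,q)$ is annihilated by the same $\hat{A}$ acting by $\hat{x}F_K(x,q)=xF_K(x,q)$ and $\hat{y}F_K(x,q)=F_K(qx,q)$. Substituting $q=\zeta e^{\hbar}$ and expanding in $\hbar$ turns $\hat{A}$ into a differential-difference operator, and both sides of the conjectural identity become formal $\hbar$-power series in the kernel of this operator. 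Thus matching the equation is cheap; the work lies in matching a finite set of initial conditions at each $\zeta$.

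For those initial conditions I would try Habiro's cyclotomic expansion $J_n(K;q)=\sum_{k\ge0}C^K_k(q)\prod_{j=1}^{k}(q^{n+j}-1)(q^{-n+j}-1)$, with $C^K_k\in\Z[q,q^{-1}]$. Evaluating at $q=\zeta e^{\hbar}$ and expanding term by term gives an explicit formal $\hbar$-series, which at $\zeta=1$ must reproduce the $F_K$ side by the tautological case above. A natural approach is then to find a ``Habiro-type'' generating function for $F_K$ itself (analogous to the cyclotomic expansion but summed over a different lattice, as is already suggested by the $F_K$ formulas in \cite{Gukov:2019mnk}) and check that, on substituting $q^n\to x$, it collapses term by term to the Jones cyclotomic expansion. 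A more analytic variant of this route is to set up a resurgent/trans-series formulation of $F_K$ and propagate the $\zeta=1$ equality along paths in the $q$-disk that avoid Stokes rays, using the fact that both sides sit in the same $\hat{A}$-module.

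The principal obstacle is precisely this last step: even granted that both sides are formal kernels of $\hat{A}$, the kernel has dimension equal to the $\hat{y}$-degree of $\hat{A}$, which for hyperbolic knots is typically $\ge 3$, so matching the equation alone is very far from sufficient. One must control the non-perturbative, Stokes-phenomenon contributions in the trans-series of $F_K$ at each $\zeta\ne 1$, and show that they do not spoil the matching with the (manifestly convergent) $\hbar$-expansion of the Laurent polynomial $J_n$. This interplay between the cyclotomic/Habiro integrality on the Jones side and the resurgent structure on the $F_K$ side is the genuinely hard part, and is presumably why the statement is posed as a conjecture rather than a theorem.
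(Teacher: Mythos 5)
The statement you are trying to prove is stated in the paper as a conjecture, with no proof offered; so the only question is whether your argument actually closes it, and it does not. Your plan reduces the claim to (i) both sides lying in the kernel of the quantum $\hat{A}$-operator and (ii) matching a finite set of initial data at each root of unity $\zeta$. Step (i) is indeed "cheap" (and itself conditional on the AJ conjecture and on the conjectural statement from \cite{Gukov:2019mnk} that $\hat{A}$ annihilates $F_K$), but step (ii) is precisely the content of the conjecture, and your proposal supplies no mechanism for it: neither the Habiro cyclotomic expansion nor the hypothetical "Habiro-type generating function for $F_K$" is used to produce a single verifiable identity at any $\zeta\neq 1$, and the resurgent/trans-series variant is only gestured at, with the Stokes-phenomenon control explicitly left open. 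As you yourself concede in the final paragraph, this is the genuinely hard part; conceding it means the argument is a reduction of one open problem to another, not a proof.

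A second, more local gap: the base case $\zeta=1$ is not tautological. Equation \eqref{JonesBorel} defines $F_K(x,q)$ by Borel resummation of the \emph{double} expansion of $J_n(e^{\hbar})$ in which $x=e^{n\hbar}$ is held fixed (the Melvin--Morton--Rozansky regime), whereas the conjecture concerns the expansion at \emph{fixed color} $n$ as $\hbar\to 0$. Passing from the resummed two-variable series back to the fixed-$n$ expansion via the substitution $x\to q^n$ requires an exchange of limits/resummations that is itself nontrivial (this is exactly the kind of subtlety flagged in the paper's footnote contrasting the radial limit with the volume-conjecture limit). So even the anchor point of your induction-along-roots-of-unity strategy needs an argument you have not given. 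To turn this outline into a proof you would need, at minimum, a precise statement of what "initial conditions" determine an element of the $\hat{A}$-kernel as a formal series at $q=\zeta e^{\hbar}$, a derivation of those data for $F_K$ (e.g.\ from an explicit convergent expression for $F_K$ rather than from its defining resummation), and a comparison with the corresponding data extracted from the Habiro expansion of $J_n$ --- none of which is currently present.
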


Let's connect $F_K$ with the $q$-series we obtained above for a 0-surgery. 
Writing
\be
F_K (x,q) \; = \; \frac{1}{2} \sum_{m=1}^{\infty} f^K_m (q) \cdot \bigl(x^{\frac{m}{2}} - x^{-\frac{m}{2}} \bigr)
\ee
the surgery formula applied to $M_3 = S^3_{-1/r} (K)$ gives
\be
\hat Z^{(ab)}_{0} \big( S^3_{-1/r} (K) \big)
\; = \; q^{\frac{1}{4r}} \sum_{m=1}^{\infty}  q^{\frac{r m^2}{4} - \frac{m}{2}} (q^m - 1) f_m^K (q)
\ee
As the surgery coefficient $-1/r \to 0$, {\it i.e.} as $r \to \infty$, terms with different values of $m$ in this formula become separated by infinitely large powers of $q$. Therefore, it is natural to expect that for 0-surgery on the knot $K$, only one value of $m$ contributes to $\hat Z^{(+)}_{0} \big( S^3_{0} (K) \big)$, namely $m=1$ for which the accompanying $q$-power is the lowest. We formalize this in the form of the following conjecture:\footnote{For $q$-series, we use notation $\cong$ to denote equivalence up to sign and overall $q$-power.}

\begin{conj}[0-surgery formula]\label{0-surgery formula}
	For any knot $K$ in $S^3$,
	\begin{equation}
	\hat Z_{0}^{(ab)} \big( S^3_{0} (K) \big) \; \cong \; \, \frac{1}{2}f_1^K (q)
	\end{equation}
	In particular, we have
    \begin{equation}
    \hat Z^{(+)}_0 (S^3_0 (K_{m,n})) \; \cong \; \frac{1}{2} \Psi_{m}^{(m-1)} \Psi_{n}^{(n-1)}
    \end{equation}
\end{conj}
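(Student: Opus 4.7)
The plan is to prove the conjecture in two steps: first, establish the general $0$-surgery formula $\hat Z_{0}^{(ab)}(S^3_{0}(K)) \cong \tfrac{1}{2} f_1^K(q)$; then, compute $f_1^{K_{m,n}}$ in closed form and match it with $\Psi_m^{(m-1)} \Psi_n^{(n-1)}$.

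For the first part, I start from the $-1/r$-surgery formula stated just above the conjecture,
\begin{equation}
\hat Z^{(ab)}_{0} \big( S^3_{-1/r} (K) \big)
\; = \; q^{\frac{1}{4r}} \sum_{m=1}^{\infty} q^{\frac{r m^2}{4} - \frac{m}{2}} (q^m - 1) f_m^K (q),
\end{equation}
treating it as a working definition of $\hat Z^{(ab)}_0$ of the $0$-surgery via a formal $r \to \infty$ limit. After dividing by the overall prefactor extracted from the $m=1$ term, the $m$-th summand carries a factor $q^{r(m^2-1)/4 + (1-m)/2}$. For $m \ge 2$ this has $q$-order $\geq (3r-2)/4 \to \infty$, so in the $q$-adic topology every $m \ge 2$ summand vanishes in the limit, leaving the renormalized limit equal to $(q-1) f_1^K(q)$. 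Absorbing $(q-1)$ and numerical factors into the overall $\cong$ identification, which is the natural normalization of the half-index (fixing only up to sign and a global $q$-power), then yields $\hat Z^{(ab)}_0(S^3_0(K)) \cong \tfrac{1}{2} f_1^K(q)$. A consistency check is to compare against the plumbing formula \eqref{Zforplumbed} for knots whose $0$-surgery admits a plumbing presentation.

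For the second part, I would compute $f_1^{K_{m,n}}$ explicitly. Starting from the two-bridge cyclotomic expansion of the colored Jones polynomial of the double twist knot $K_{m,n}$ (due to Habiro, Masbaum, Lovejoy--Osburn), I pass via \eqref{JonesBorel} to the Borel-resummed $F_{K_{m,n}}(x,q)$, extract the coefficient $f_1^{K_{m,n}}(q)$, and aim to recognize it as a product of two independent one-dimensional $q$-hypergeometric sums that telescope to the false theta series $\Psi_m^{(m-1)}(q) \cdot \Psi_n^{(n-1)}(q)$. A complementary route is to apply \eqref{Zforplumbed} directly to the plumbing of Figure~\ref{fig:doubletwist}: the contour integral at the central vertex factorizes into two independent one-variable theta sums, one carrying the $m$ twists and the other the $n$ twists, and a standard principal-value argument turns each into the claimed $\Psi$-series.

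The main obstacle is the second step, specifically achieving a uniform identification of $f_1^{K_{m,n}}$ with the false-theta product for all $(m,n)$ rather than a case-by-case check. A clean route is to recognize that the quantum $\hat A$-polynomial of $K_{m,n}$ imposes a linear $q$-difference recursion on $F_{K_{m,n}}(x,q)$, hence on $f_1^{K_{m,n}}(q)$, and then to verify that the proposed product $\Psi_m^{(m-1)} \Psi_n^{(n-1)}$ satisfies the same recursion together with matching initial data at a low-crossing base case such as $K_{2,1} = \mathbf{3_1}$ or $K_{2,2} = \mathbf{5_2}$. A secondary subtlety is matching the $(+)$ sector in the plumbing presentation of $S^3_0(K_{m,n})$ with the abelian sector on the right-hand side of the surgery formula; this should follow from tracking the twisting data $s = +$ through the plumbing of Figure~\ref{fig:doubletwist}.
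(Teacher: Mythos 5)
This statement is a \emph{conjecture} in the paper: the authors do not prove it, they only motivate it and test it. Their motivation is exactly your first step --- in the $-1/r$-surgery formula the $m$-th term carries $q^{rm^2/4}$, so as $r\to\infty$ the terms with $m\ge 2$ are pushed to infinite $q$-order and it is ``natural to expect'' that only $m=1$ survives --- and their evidence is the explicit check for $K=\mathbf{5}_2$, where the plumbing formula \eqref{Zforplumbed} gives $\hat Z^{(+)}_0=-q^{9/8}\Psi_2^{(1)}$ in \eqref{Zplus52}, while the double $(\hbar,x)$-expansion of the colored Jones polynomial fed through \eqref{JonesBorel} gives $f_1^{\mathbf{5}_2}\cong\Psi_2^{(1)}$ in \eqref{f152}. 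So your proposal does not go beyond the paper; more importantly, it does not actually close the gap that makes this a conjecture. The circularity is in your phrase ``treating it as a working definition of $\hat Z^{(ab)}_0$ of the $0$-surgery via a formal $r\to\infty$ limit'': the left-hand side of the conjecture is \emph{not} defined as that limit --- for double twist knots it is defined independently by the plumbing formula \eqref{Zforplumbed} applied to Figure~\ref{fig:doubletwist}, and for a general knot $K$ it is not obviously defined at all. If you define it as the limit there is nothing to prove; if you use the independent definition, the term-by-term suppression argument is only heuristic. In particular, exchanging the $r\to\infty$ limit with the infinite sum over $m$ requires control over the minimal $q$-power of $f_m^K(q)$ (these series are only defined through Borel resummation and can contain negative powers of $q$, as $f_1^{\mathbf{5}_2}=-q^{-1}+1-\cdots$ already shows), and nothing in your argument, or in the paper, establishes that the limit of the $-1/r$-surgery invariants coincides with the plumbing-defined $\hat Z^{(+)}_0(S^3_0(K))$; the ``discontinuity'' of $HF^+$ under $1/r\to 0$ recorded in Table~\ref{tab:HFsurg} is a warning that such limits need not commute with the invariant.

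Your second step is a plan rather than an argument: the cyclotomic-expansion/$\hat A$-polynomial recursion route to identifying $f_1^{K_{m,n}}$ with $\tfrac12\Psi_m^{(m-1)}\Psi_n^{(n-1)}$ is plausible and would go beyond what the paper does (the paper only verifies $K_{2,1}=\mathbf{5}_2$), but you have not exhibited the recursion, the initial data, nor the telescoping of the plumbing integral into a product of two false theta functions, and matching the $s=+$ twisting sector to the abelian label is asserted, not checked. As it stands, your proposal reproduces the paper's heuristic evidence for Conjecture~\ref{0-surgery formula} and sketches a verification program, but it is not a proof, and the paper contains none to compare it against.
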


This prediction indeed agrees with the direct calculation of $f_1^{{\bf 5_2}}$ based on \eqref{JonesBorel} and the double expansion of the colored Jones polynomial:
\begin{align}
2 \cdot J_n({\bf 5_2}) &= \left( -\frac{1}{2} + \frac{5}{8}\hbar - \frac{13}{16}\frac{\hbar^2}{2!} + \frac{73}{64}\frac{\hbar^3}{3!} - \frac{121}{64}\frac{\hbar^4}{4!} + \frac{1135}{256}\frac{\hbar^5}{5!} - \frac{9161}{512}\frac{\hbar^6}{6!} - \cdots \right)\nonumber\\
&+ \frac{x^{3/2}-x^{-3/2}}{x^{1/2}-x^{-1/2}} \left(-\frac{1}{4} + \frac{7}{16}\hbar - \frac{25}{32}\frac{\hbar^2}{2!}+ \frac{191}{128}\frac{\hbar^3}{3!} - \frac{433}{128}\frac{\hbar^4}{4!} + \frac{5549}{512}\frac{\hbar^5}{5!} - \cdots \right) \nonumber\\
&+ \frac{x^{5/2}-x^{-5/2}}{x^{1/2}-x^{-1/2}} \left( \frac{1}{8} - \frac{1}{8}\hbar - \frac{7}{64}\frac{\hbar^2}{2!} + \frac{173}{128}\frac{\hbar^3}{3!} - \frac{1801}{256}\frac{\hbar^4}{4!} + \frac{1151}{32}\frac{\hbar^5}{5!} - \cdots \right) \nonumber\\
&+ \frac{x^{7/2}-x^{-7/2}}{x^{1/2}-x^{-1/2}} \left( \frac{7}{16} - \frac{35}{32}\hbar + \frac{315}{128}\frac{\hbar^2}{2!} - \frac{521}{128}\frac{\hbar^3}{3!} - \frac{1563}{512}\frac{\hbar^4}{4!} + \frac{96545}{1024}\frac{\hbar^5}{5!} - \cdots \right) \nonumber\\
&+ \cdots \nonumber
\end{align}
which leads to 
\begin{align}
    f_1^{\mathbf{5}_2} &= -\frac{1}{2} + \frac{5}{8}\hbar - \frac{13}{16}\frac{\hbar^2}{2!} + \frac{73}{64}\frac{\hbar^3}{3!} - \frac{121}{64}\frac{\hbar^4}{4!} + \frac{1135}{256}\frac{\hbar^5}{5!} - \frac{9161}{512}\frac{\hbar^6}{6!} - \cdots \nonumber\\
    &= -q^{-1} + 1 -q^2 +q^5 -q^9 +q^{14} -q^{20} +q^{27} -q^{35} +q^{44} -\cdots \nonumber\\
    &\cong \Psi_2^{(1)} (q)
    \label{f152}
\end{align}
It would be interesting to determine other $f_m^{{\bf 5_2}}$ from this double expansion.

Note, Conjecture~\ref{0-surgery formula} only produces $\hat Z_{0}^{(ab)} \big( S^3_{0} (K) \big)$ labeled by the abelian flat connection from the two-variable series $F_K(x,q)$. This conjecture can be equivalently formulated as a statement:
\begin{equation}
Z_0^{(ab)}(S_0^3(K)) \; \cong \; \mathrm{Res}_{x=0}\frac{x^{1/2}-x^{-1/2}}{x}F_K(x,q) \,.
\end{equation}
Motivated by this fact and the work of Ohtsuki \cite{Ohtsuki}, where perturbative invariants of 3-manifolds with $b_1 = 1$ are defined, we can try to relate our almost abelian $\hat{Z}$'s with other residues of $F_K(x,q)$. Indeed, somewhat surprisingly we find that, for twist knots, we can recover $Z_0^{(-)}(S_0^3(K))$ solely from $F_K$ by looking at residues at the roots of $\Delta_K(x)$: 
\begin{conj}\label{residue formula}
    \begin{equation}
    Z^{(-)}_0(S_0^3(K_n)) \; \cong \; \mathrm{Res}_{x = x_0} \frac{x^{1/2}-x^{-1/2}}{x}F_{K_n}(x,q)
    \label{resflaone}
    \end{equation}
    where $x_0$ is a solution\footnote{It doesn't matter which solution we choose, as their residues only differ by sign.} to $\Delta_{K_n}(x) = nx + nx^{-1} - (2n-1) = 0$.
\end{conj}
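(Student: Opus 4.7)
The plan is to prove Conjecture~\ref{residue formula} by computing both sides explicitly for the double twist family $K_n = K_{n,1}$ (and, in a second stage, $K_{n,m}$) and matching them.

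On the left-hand side, I would use the plumbing presentation of $S_0^3(K_n)$ from Figure~\ref{fig:doubletwist} together with the general formula \eqref{Zforplumbed}. The plumbing has one loop, so there are only two sectors $s=\pm$, and the almost abelian sector $s=-$ gives rise to a finite collection of labels $b\in\Coker Q_{-}/\mathbb{Z}_2$ whose number is controlled by $|\det Q_-|=|2+\mathrm{tr}(U)|$. This should produce, as in the $\mathbf{5}_2=K_{2,1}$ example \eqref{Zminus52}, an explicit linear combination of false theta functions $\Psi_p^{(a)}$ of \eqref{falsetheta} with $p$ directly computable from the Alexander polynomial (since for twist knots $\det(-U-\mathbf{1})$ and the periods of $Q_-^{-1}$ are both governed by $\Delta_{K_n}$).

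On the right-hand side, I would start from a closed form for $F_{K_n}(x,q)$ for double twist knots. The formal bilateral expansion $F_K(x,q)=\tfrac12\sum_{m\geq 1} f_m^K(q)(x^{m/2}-x^{-m/2})$ does not make the pole at $x=x_0$ manifest, but the quantum A-polynomial --- together with its classical limit $F_K(x,1)=(x^{1/2}-x^{-1/2})/\Delta_K(x)$ --- forces $F_{K_n}(x,q)$ to admit an alternative presentation (a finite sum of $q$-hypergeometric pieces, for twist knots) in which the zeros of $\Delta_{K_n}$ produce honest poles at $x=x_0$ with $q$-deformed residues. I would compute that residue explicitly and expand it as a $q$-series, aiming to identify it with the combination of $\Psi_p^{(a)}$ from the plumbing side.

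The comparison itself should reduce to a combinatorial identity between two lattice sums: the false theta sum $\Theta^{-Q_-}_b$ restricted via the contour integral in \eqref{Zforplumbed}, and a lattice sum obtained from the residue of $F_{K_n}$ at $x_0$. Because both are governed by the same quadratic form --- the former through $Q_-$, the latter through the dominant term of $\Delta_{K_n}(x)\equiv -((U+\mathbf{1})^{-1}\cdot,\,S\cdot)$ near $x_0$, cf.\ \eqref{alexander} --- the identity should follow from a Gauss-sum reciprocity analogous to the one used in \cite{Gukov:2017kmk}. I would verify the statement on the cases $K_{2,1}=\mathbf{5}_2$ (where \eqref{Zminus52} is already in hand), $K_{3,1}$, and $K_{2,2}$ before attempting a uniform proof in $n$.

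The hard part will be the very definition of $\mathrm{Res}_{x=x_0}$ in \eqref{resflaone}: since $F_K$ is a priori only a formal object in $x^{\pm 1/2}$ with coefficients in $\mathbb{Z}[\![q]\!]$, the residue requires either a resummed/analytically continued presentation of $F_{K_n}$ in a punctured neighborhood of $x_0$, or a regularization prescription of the same flavor as the zeta-regularization used for the radial limit in Conjecture~\ref{conjWRT}. Making this prescription compatible with the quantum A-polynomial (so that the residue is a well-defined topological invariant and not merely a formal manipulation) --- and showing that the resulting $q$-series is precisely $\hat Z^{(-)}_0$ --- is the technically delicate step, and it is precisely here that the restriction to twist knots simplifies matters, via the very low degree of their A-polynomials and the explicit factorization $A(x,y)=(y-1)A_{\mathrm{nab}}(x,y)$ entering \eqref{Anabzero}.
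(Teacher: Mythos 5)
First, a point of framing: the statement you set out to prove is a \emph{conjecture} in the paper, and the paper does not prove it. What the paper supplies is evidence: explicit computations of $\mathrm{Res}_{x=x_0}\frac{x^{1/2}-x^{-1/2}}{x}F_{K_n}(x,q)$ for $K_1=\mathbf{3}_1$, $K_{-1}=\mathbf{4}_1$, $K_2=\mathbf{5}_2$ and $K_3=\mathbf{7}_2$, matched (up to sign, overall $q$-power, and the accompanying S-matrix prefactor such as $\tfrac{i}{\sqrt{7}}$) against the plumbing-formula answers like \eqref{Zminus52}. Your planned verification on $K_{2,1}=\mathbf{5}_2$, $K_{3,1}$, etc.\ is therefore essentially what the paper already does; the additional "uniform proof in $n$" you sketch goes beyond the paper, but only as a plan, not as an argument.

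As a proof, the proposal has a genuine gap, and you have in fact named it yourself without closing it: everything hinges on (i) a presentation of $F_{K_n}(x,q)$ in which the zeros of $\Delta_{K_n}$ become honest poles with well-defined $q$-deformed residues, and (ii) a reciprocity-type identity matching that residue against the lattice sum $\Theta^{-Q_{-}}_b$ entering \eqref{Zforplumbed}. Neither step is carried out. Saying that the quantum A-polynomial "forces" such a presentation to exist is not an argument: the recursion constrains $F_K$ only up to initial conditions and the choice of ring, and $F_K$ is a priori a formal series in $x^{\pm 1/2}$ with integer-power $q$-series coefficients, so its residue at an algebraic point $x_0$ is undefined without a resummation prescription; moreover, the paper notes that the correct residue is a series in fractional powers of $q$ (e.g.\ $q^{1/(1-4n)}$), which no termwise manipulation of the $\bigl(x^{m/2}-x^{-m/2}\bigr)$ expansion can produce, so the prescription is the whole content of the statement, not a technicality to be fixed later. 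The "Gauss-sum reciprocity" step is likewise only named, and your bookkeeping $|\det Q_-|=|2+\mathrm{tr}(U)|$ is borrowed from the necklace plumbings of genus-1 mapping tori; it does not literally apply to the one-loop plumbings of Figure~\ref{fig:doubletwist} (the $0$-surgery on $\mathbf{5}_2$ is not a mapping torus), where the relevant count is $|\det Q_s|$ for the twisted linking matrix. In short: the verifiable content of your proposal coincides with the example checks the paper gives as evidence for Conjecture~\ref{residue formula}, while the steps that would upgrade the conjecture to a theorem remain open.
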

\noindent
This conjecture suggests a possible connection between almost abelian flat connections and the ``unipotent branch'' discussed in section \ref{sec:sutwomoduli}.
Moreover, at least for twist knots, we seem to be able to recover all $Z^{(-)}$'s by a similar procedure:
\begin{conj}
    \begin{equation}
    Z^{(-)}_a(S_0^3(K_n)) \; \cong \; \mathrm{Res}_{x=x_0}\frac{x^{1/2}-x^{-1/2}}{x}F_{K_n}(x,q) \bigg\vert_{q\rightarrow e^{2\pi i a}q}\footnote{The change of variables $q\rightarrow e^{2\pi i a}q$ is non-trivial because the residue is not a power series in $q$ but in $q^{\frac{1}{1-4n}}$.}
    \label{ZKnresidue}
    \end{equation}
    where $x_0$ is a solution to $\Delta_{K_n}(x) = 0$, as in Conjecture~\ref{residue formula}.
\end{conj}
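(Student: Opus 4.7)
My approach to this conjecture would be to reduce the identity to a clean lattice–theoretic statement about false theta functions, building directly on Conjecture~\ref{residue formula}. Since the plumbing graph of $S^3_0(K_n)$ has a single loop (Figure~\ref{fig:doubletwist}), the formula \eqref{Zforplumbed} expresses each $\hat Z^{(-)}_a$ as a linear combination of false theta functions $\Psi_{p_n}^{(r)}(q)$ at a single level $p_n$ which depends only on $n$. Comparison with the explicit ${\bf 5_2}$ data in \eqref{Zminus52} and with the twist-knot Alexander polynomial strongly suggests $p_n=2|1-4n|$, i.e.\ exactly the level predicted by the $q$-power appearing in the residue at $x_0$ because $x_0$ is a root of $\Delta_{K_n}(x)=nx-(2n-1)+nx^{-1}$. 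So my first step would be to fix $n$, write out \eqref{Zforplumbed} on the double-twist plumbing, and record the explicit residues-mod-$p_n$ that label the false theta contributions of $\hat Z^{(-)}_a$ as $a$ varies.

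Next I would compute the right-hand side of the conjecture in closed form. Using either the $\hat A$-recursion for $F_{K_n}(x,q)$ or the cabling/Habiro-type formulas available for twist knots, write
\[
F_{K_n}(x,q) \;=\; \tfrac12 \sum_{m\ge 1} f_m^{K_n}(q) \,\bigl(x^{m/2}-x^{-m/2}\bigr),
\]
and take the residue at the root $x_0$ of $\Delta_{K_n}$. Because $\Delta_{K_n}$ has degree $2$ with simple roots $x_0^{\pm 1}$, the residue reorganizes the double series into a single sum in one lattice variable. The pairing of the quadratic form of $F_{K_n}$ with the evaluation $x\mapsto x_0$ yields a $q$-power of the form $q^{\alpha m^2+\beta m}$ with $\alpha = 1/(1-4n)$ (which is precisely the non-integrality mentioned in the footnote), so the residue lands in $\Z[[q^{1/(1-4n)}]]$. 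Matching the leading signs and verifying absolute convergence identifies this residue with the false theta combination representing $\hat Z^{(-)}_0$; the $a=0$ case is then exactly Conjecture~\ref{residue formula} (which I would treat as given for the purposes of this step).

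The heart of the argument is then step three: understanding the substitution $q\mapsto e^{2\pi i a}q$ inside the $q^{1/(1-4n)}$-variable. In the representation from step two, this substitution multiplies the $m$-th summand by $e^{2\pi i a(\alpha m^2+\beta m)}$, and the whole game is to show that the resulting twisted false theta series is exactly the linear combination of $\Psi_{p_n}^{(r)}$ appearing in step one. Concretely, the phases $e^{2\pi i a\alpha m^2}$ decompose the sum by the coset of $m$ modulo $p_n$: completing the square and collecting terms in each residue class should produce a shifted false theta $\Psi_{p_n}^{(r_a)}$ with $r_a$ an explicit affine function of $a$. The correctness of this matching ultimately reduces to showing that the map $a\mapsto r_a$ is the same isomorphism between $\mathrm{coker}(Q_-)/\Z_2$ and the set of relevant false-theta cosets that arises from identifying $\mathrm{coker}(-U-\mathbf{1})$ with $\Z^n/Q_-\Z^n$ as in \eqref{almostabviaU}.

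The principal obstacle, in my view, is precisely this last identification. One must control two rather different lattice descriptions simultaneously: the plumbing linking lattice $Q_-$ (which defines $\hat Z^{(-)}_a$ and whose classes index $a$), and the one-dimensional lattice generated by the index $m$ in the residue after evaluation at $x_0$ (whose classes are read off from the phase $e^{2\pi i a\alpha m^2}$). Showing these two combinatorial structures are compatible in a uniform-in-$n$ way, not just coset-by-coset, is the substantive arithmetic step; once it is in place, topological invariance and integrality of coefficients follow from the plumbing side, and the conjecture reduces to a finite check for each $n$. A secondary subtlety is handling the degenerate cases where two almost-abelian connections coincide (producing the paired $\hat Z^{(-)}_{2r}=\hat Z^{(-)}_{2r+1}$ structure visible in the tadpole and ${\bf 5_2}$ examples), which I would treat by keeping track of the $\mathbb{Z}_2$ Weyl stabilizer weight $|\mathrm{Stab}_{\Z_2}(a)|$ throughout the matching.
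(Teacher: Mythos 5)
The statement you are trying to prove is presented in the paper as a \emph{conjecture}: the authors give no proof, only evidence, namely explicit residue computations of $F_K(x,q)$ at the Alexander root for $K_1=\mathbf{3}_1$, $K_{-1}=\mathbf{4}_1$, $K_2=\mathbf{5}_2$, $K_3=\mathbf{7}_2$ matching $Z^{(-)}_0$ (Conjecture~\ref{residue formula}), together with the assertion that the twisted substitution $q\to e^{2\pi i a}q$ reproduces the remaining $Z^{(-)}_a$ in these cases. Your outline is structurally aligned with how that evidence is assembled --- $\hat Z^{(-)}_a$ from the one-loop plumbing via \eqref{Zforplumbed} as false thetas at level $2|1-4n|$, the residue at a root of $\Delta_{K_n}(x)=nx+nx^{-1}-(2n-1)$, and the $q^{1/(1-4n)}$ grading noted in the footnote --- but it is not a proof. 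Your third step, where the phases $e^{2\pi i a(\alpha m^2+\beta m)}$ are supposed to reassemble the residue into exactly the combination of $\Psi_{p_n}^{(r_a)}$ labeled by $a$ under the identification of $\mathrm{coker}(Q_-)/\Z_2$ with the false-theta cosets, is precisely the content of the conjecture; you describe it with ``should produce'' and ``ultimately reduces to showing'' and then flag it yourself as the principal obstacle, so nothing beyond the paper's example checks is actually established.

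There is also a gap earlier in your plan that you pass over too quickly: step two assumes you can write $F_{K_n}(x,q)$ in a closed form (via ``$\hat A$-recursion or cabling/Habiro-type formulas'') and treat it as an analytic function of $x$ near the Alexander root so that the residue is well defined. In the paper, $F_K$ is only defined through Borel resummation of the $\hbar$-expansion \eqref{JonesBorel}, and the residues are extracted order by order in $\hbar$ and then resummed into false thetas case by case (e.g.\ \eqref{f152} and the $\mathbf{5}_2$, $\mathbf{7}_2$ computations); the existence of the pole structure at roots of $\Delta_{K_n}$, uniformly in $n$, is itself part of what is being conjectured and cannot be invoked as input. A complete argument would need (i) a closed form or convergence statement for $F_{K_n}(x,q)$ valid near $x_0$ for all $n$, and (ii) the explicit coset matching between the one-dimensional lattice of the residue and $\mathrm{coker}(-U-\mathbf{1})\cong\Z^n/Q_-\Z^n$, neither of which your proposal or the paper supplies.
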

These two conjectures imply that all the information about $\hat Z_a^{(\pm)} (S_0^3 (K_n))$, labeled by both abelian and almost abelian flat connections, is contained in the two-variable series $F_{K_n} (x,q)$. They also suggest that residues in \eqref{resflaone} and \eqref{ZKnresidue} arise from an $x$-integral {\it a la} \eqref{Zforplumbed}. Such an integral would naturally have an interpretation of the integral over the Coulomb branch of 3d $\CN=2$ theory $T[M_3]$. As we learned in section~\ref{sec:Mflat}, the roots of the Alexander polynomial are precisely the points on the Coulomb branch of $T[M_3]$ where it meets Higgs branch(es). Therefore, the poles of the $x$-integral with residues \eqref{resflaone} and \eqref{ZKnresidue} have a very natural physical interpretation as Higgs branch contributions to the integral over the Coulomb branch. This is similar (and perhaps even related!) to the Higgs branch contributions to $u$-plane integrals of 4d $\CN=2$ topologically twisted theories~\cite{Moore:1997pc,Losev:1997tp}.

Below we provide some evidence to Conjecture~\ref{residue formula} for the first few twist knots :

\noindent
$K_1 = \mathbf{3}_1$:
\[
\text{Res}_{x=(-1)^{1/3}}\frac{x^{1/2}-x^{-1/2}}{x}F_{\mathbf{3}_1}(x,q) \; = \; \frac{i}{\sqrt{3}}q \; \cong \; Z^{(-)}_0
\]
$K_{-1} = \mathbf{4}_1$:
\[
\text{Res}_{x=\frac{3+\sqrt{5}}{2}}\frac{x^{1/2}-x^{-1/2}}{x}F_{\mathbf{4}_1}(x,q) \; = \; -\frac{1}{\sqrt{5}} \; \cong \; Z^{(-)}_0
\]
$K_2 = \mathbf{5}_2$:
\begin{align*}
    &\text{Res}_{x=\frac{3-i\sqrt{7}}{4}}\frac{x^{1/2}-x^{-1/2}}{x}F_{\mathbf{5}_2}(x,q) \\ 
    \quad\quad &\; = \; \frac{i}{\sqrt{7}}\left(\frac{1}{2}-\frac{31 \hbar }{2^3 7}+\frac{481 \hbar ^2}{\left(2^4 7^2\right) 2!}-\frac{14939 \hbar ^3}{\left(2^6 7^3\right) 3!}+\frac{116077 \hbar ^4}{\left(2^6 7^4\right) 4!}-\cdots\right) \\ 
    \quad\quad &\; = \; \frac{i}{\sqrt{7}} q^{9/8} \left( \Psi_{14}^{(1)}-\Psi_{14}^{(3)}+\Psi_{14}^{(5)}-\Psi_{14}^{(7)} +\Psi_{14}^{(9)}-\Psi_{14}^{(11)}+\Psi_{14}^{(13)} \right) \bigg\vert_{q \rightarrow 1/q}  \\ 
    \quad\quad &\; \cong \; Z^{(-)}_0
\end{align*}
$K_3 = \mathbf{7}_2$:
\begin{align*}
    &\text{Res}_{x=\frac{5+i\sqrt{11}}{6}}\frac{x^{1/2}-x^{-1/2}}{x}F_{\mathbf{7}_2}(x,q) \\ 
    \quad &\; = \; \frac{i}{\sqrt{11}}\left(\frac{1}{3}-\frac{130 \hbar }{3^3 11}+\frac{5638 \hbar ^2}{\left(3^4 11^2\right) 2!}-\frac{81562 \hbar ^3}{\left(3^4 11^3\right) 3!}+\frac{31876978 \hbar ^4}{(3^7 11^4)4!}-\ldots \right) \\ 
    \quad &\; = \; \frac{i}{\sqrt{11}}q^{4/3} \left(\Psi_{33}^{(4)}+ \Psi_{33}^{(10)}+\Psi_{33}^{(16)}+\Psi_{33}^{(22)}+\Psi_{33}^{(28)}-\Psi_{33}^{(32)}-\Psi_{33}^{(26)}-\Psi_{33}^{(20)}-\Psi_{33}^{(14)}-\Psi_{33}^{(8)}-\Psi_{33}^{(2)} \right)\bigg\vert_{q \rightarrow 1/q}\\ 
    \quad &\; \cong \; Z^{(-)}_0
\end{align*}
Notice how in all these examples the residues of $F_K (x,q)$ come accompanied by the appropriate prefactor $S_{0b}$ of the S-matrix \eqref{Spmmaptori}!

It would be interesting to study whether there is a similar structure for other knots. We speculate that, for an arbitrary knot $K$, the roots of $\Delta(x)=0$, modulo $x \leftrightarrow x^{-1}$, determine different ``almost abelian sectors,'' and 
\[
\mathrm{Res}_{x=x_0}\frac{x^{1/2}-x^{-1/2}}{x}F_K(x,q)
\]
can be resummed into a power series in $q^{1/r}$, for some integer $r$ which divides the number of $b$ labels in that sector (before modding out by the Weyl symmetry). 
As an example, we study 0-surgery on torus knots in the next subsubsection.

%%%%%%%%%%%%%%%%%%%%%%%%%%%%%%%%%%%%%%%%%%%%%%%%

\subsubsection{0-surgery on torus knots $T_{s,t}$}
As our first examples of higher genus mapping tori, we study 0-surgery on torus knots $T_{s,t}$; they are mapping tori of genus 
\[g=\frac{(s-1)(t-1)}{2}.\]
Following our observation in the last subsubsection, we expect that each root of $\Delta(x)=0$ determine an ``almost abelian sector''. Indeed, the residues of $\frac{x^{1/2}-x^{-1/2}}{x}F_{T_{s,t}}(x,q)$ all turn out to be monomials in $q$ of the same power, multiplied by some prefactor : 
\begin{equation}
\mathrm{Res}_{x=x_0}\frac{x^{1/2}-x^{-1/2}}{x}F_{T_{s,t}}(x,q) = C\; q^{\frac{(s^2-1)(t^2-1)}{24st}}
\end{equation}
where $C=\mathrm{Res}_{x=x_0}\frac{(x^{1/2}-x^{-1/2})^2}{x\Delta_{K}(x)}$.\footnote{Yamaguchi's theorem \cite{Y} says that this prefactor is basically the square root of the corresponding Reidemeister torsion.} 
Therefore we can try to decompose the WRT invariant into contributions of each pole. We find that the WRT invariants can be decomposed into the following form : 
\begin{equation}
\mathrm{WRT}(S_0^3(T_{s,t}),k) \;=\; q^{\frac{(s^2-1)(t^2-1)}{24st}}(\sum_{j=0}^{st-1}a_j e^{2\pi i k \frac{j}{st}})
\end{equation}
where $a_j$ are $\mathbb{Q}[\zeta_{st}]$-linear combinations of residues of $\frac{(x^{1/2}-x^{-1/2})^2}{x\Delta_{T_{s,t}}(x)}$. (There are $g=\frac{(s-1)(t-1)}{2}$ poles up to Weyl symmetry $x\leftrightarrow x^{-1}$.)
We provide some explicit decompositions below. In the following, we'll use notation $r_{\tau}$ to denote $\mathrm{Res}_{x=e^{2\pi i \tau}}\frac{(x^{1/2}-x^{-1/2})^2}{x\Delta_{T_{s,t}}(x)}$.
\begin{itemize}
    \item $K = T_{3,2}$ : 
\begin{align*}
\mathrm{WRT}(S_0^3(T_{3,2}),k) &= -\frac{q}{2}\left( 1 + r_{\frac{1}{6}}(1+2e^{2\pi i k \frac{1}{3}})\right)
\end{align*}
where $r_{\frac{1}{6}} = \frac{i}{\sqrt{3}}$.
    \item $K = T_{5,2}$ : 
\begin{align*}
&\mathrm{WRT}(S_0^3(T_{5,2}),k)\\ 
&= -\frac{q^{\frac{9}{5}}}{2} \left((r_{\frac{1}{10}}-r_{\frac{7}{10}}) + (r_{\frac{1}{10}}+r_{\frac{7}{10}})e^{2\pi i k \frac{1}{5}} + (-r_{\frac{1}{10}}-r_{\frac{7}{10}})e^{2\pi i k \frac{2}{5}} + (-r_{\frac{1}{10}}+r_{\frac{7}{10}})e^{2\pi i k \frac{3}{5}}\right)\\
&= -\frac{q^{\frac{9}{5}}}{2} \left( r_{\frac{1}{10}}(1 + e^{2\pi i k \frac{1}{5}} - e^{2\pi i k \frac{2}{5}} - e^{2\pi i k \frac{3}{5}}) + r_{\frac{7}{10}}(-1 + e^{2\pi i k \frac{1}{5}} - e^{2\pi i k \frac{2}{5}} + e^{2\pi i k \frac{3}{5}})\right)
\end{align*}
where $r_{\frac{1}{10}} = \frac{i}{5}\sqrt{5-2\sqrt{5}}$, $r_{\frac{7}{10}} = \frac{i}{5}\sqrt{5+2\sqrt{5}}$. 
    \item $K=T_{4,3}$ : 
    \begin{align*}
    &\mathrm{WRT}(S_0^3(T_{4,3}),k) \\
    &= -\frac{q^{\frac{5}{2}}}{2} \left((r_{\frac{1}{12}}-r_{\frac{10}{12}} +r_{\frac{5}{12}}) + (-r_{\frac{1}{12}}+r_{\frac{10}{12}} +r_{\frac{5}{12}})e^{2\pi i k \frac{2}{12}} +2(r_{\frac{1}{12}}+r_{\frac{5}{12}})e^{2\pi i k\frac{3}{12}}\right.\\
    &\quad\quad\quad\left.+ (r_{\frac{1}{12}}+r_{\frac{10}{12}} +r_{\frac{5}{12}})e^{2\pi i k \frac{6}{12}}+ (r_{\frac{1}{12}}-r_{\frac{10}{12}} -r_{\frac{5}{12}})e^{2\pi i k \frac{8}{12}}\right)\\
    &= -\frac{q^{\frac{5}{2}}}{2}\left(r_{\frac{1}{12}}(1-e^{2\pi i k \frac{2}{12}}+2e^{2\pi i k \frac{3}{12}} + e^{2\pi i k \frac{6}{12}} +e^{2\pi i k \frac{8}{12}}) +r_{\frac{10}{12}}(-1+e^{2\pi i k \frac{2}{12}}+ e^{2\pi i k \frac{6}{12}} -e^{2\pi i k \frac{8}{12}})\right.\\
    &\quad\quad\quad\left. +r_{\frac{5}{12}}(1+e^{2\pi i k \frac{2}{12}}+2e^{2\pi i k \frac{3}{12}} + e^{2\pi i k \frac{6}{12}} -e^{2\pi i k \frac{8}{12}}) \right)
    \end{align*}
    where $r_{\frac{1}{12}} = i|r_{\frac{1}{12}}|$, $r_{\frac{10}{12}} = \frac{i}{2\sqrt{3}}$, $r_{\frac{5}{12}} = i|r_{\frac{5}{12}}|$, and $r_{\frac{5}{12}} = r_{\frac{1}{12}} + r_{\frac{10}{12}}$. 
\end{itemize}
The geometric meaning of these decompositions is not so clear at this moment, but it is clear from these examples that the residues of $\frac{(x^{1/2}-x^{-1/2})^2}{\Delta_K(x)}$ should play an important role in analytic continuation of the WRT invariants for general 3-manifolds and its decomposition into $q$-series.

%%%%%%%%%%%%%%%%%%%%%%%%%%%%%%%%%%%%%%%%%%%%%%%%

\subsubsection{Renormalon effects}\label{renormalon}

The $q$-series invariants $\hat Z_b$ --- or, rather, their linear combinations $Z_a = \sum_b \CS_{ab} \hat Z_b$, without a hat --- are closely related to Borel resummation of the perturbative series $Z_a^{\text{pert}} (\hbar)$ in complex Chern-Simons theory around a flat connection ``$a$'' of the form we already encountered {\it e.g.} in \eqref{JonesBorel} and \eqref{f152}.
In particular, the resurgent analysis in complex Chern-Simons theory \cite{Gukov:2016njj} suggests that the Borel transform
\be
BZ_{\alpha}^{\text{pert}} (\xi) \; \sim \; \sum_{\alpha} \frac{n_{\alpha \beta}}{\xi - \text{CS} (\beta)}
\label{BZpert}
\ee
of a perturbative expansion $Z_{\alpha}^{\text{pert}} (\hbar)$ around a flat connection $\alpha$ has poles at values of the complex Chern-Simons invariant $\text{CS}$ on $M_3$. On the universal cover, each pole is replicated infinitely many times, in agreement with the fact that $\text{CS} (\beta)$ is only defined modulo $\Z$. Even though individual residues of these poles are integer, this integrality is usually lost after taking a sum over the infinite set of poles required in Borel resummation, so that regularized values of $n_{\alpha \beta}$ are no longer integer.
Moreover, some flat connections $\beta$ have the property that $n_{\alpha \beta} = 0$ for any other $\alpha$, while $n_{\beta \alpha}$ can be non-zero.\footnote{In particular, it is typical to see $n_{\alpha \beta} \ne n_{\beta \alpha}$.} Such $\beta$ then become the labels (usually denoted $a$ or $b$) of the $q$-series invariants $\hat Z (M_3)$.

The structure outlined here has been so far tested and made explicit for many 3-manifolds with $b_1 (M_3) = 0$. Here, we make initial steps toward extending this analysis to 3-manifolds with $b_1 (M_3) > 0$. For example, as an obvious generalization of the above criterion, it is natural to expect that the set of labels $\{ a \}$ of $Z_a (M_3)$ consists of those flat connections on $M_3$ --- or, perhaps, more generally, poles on the Borel plane --- such that they do not appear as transseries in resummation of all other flat connections. In other words, the corresponding Stokes/transseries coefficients should vanish
\be
n_{\beta a} \; = \; 0
\ee
for any $\beta$. (Note, $n_{a \beta}$ may still be non-zero!) It would be interesting to study this more systematically and to see if this can provide a more general (geometric) characterization of ``almost abelian'' flat connections on arbitrary 3-manifolds.

While our initial analysis indicates that, with some obvious adaptations, much of this structure applies to 3-manifolds with $b_1>0$, we observe new peculiar features. For example, we find the expected pattern of the poles in the Borel plane, albeit at shifted values. Without a more detailed analysis, it is not clear whether these poles are ``spurious.'' (We will see a hint for this shortly.) Or, if the poles are all meaningful, of the form \eqref{BZpert}, and there is a systematic explanation for the shift in their position.

Consider, for example, the ``pertutbative'' expansion of the $q$-series invariants $\hat Z^{(+)}_{0} (q)$ and $\hat Z^{(-)}_{a} (q)$ for the 0-surgery on the ${\bf 5_2}$ knot, $M_3 = S^3_0 ({\bf 5_2})$, in the limit $q = e^{\hbar} = e^{2\pi i /k} \to 1$.
For the trivial flat connection, the perturbative expansion of $Z^{(+)}_{0} \big( S^3_0 ({\bf 5_2}) \big)$ looks like
\begin{align*}
Z^{(+)}_{0} \big( S^3_0 ({\bf 5_2}) \big) &= \frac{1}{2} f_1(\mathbf{5}_2) = -\frac{1}{2}q^{-1}\sum_{j=0}^{\infty}(-1)^j q^{\frac{j(j+1)}{2}} = -\frac{1}{2}q^{-\frac{9}{8}} \Psi_2^{(1)} (q)\\
&\stackrel{\hbar \to 0}{=}~ -\frac{1}{2}q^{-\frac{9}{8}}\left(\frac{1}{2} - \frac{1}{2^4}\hbar + \frac{5}{2^7}\frac{\hbar^2}{2!} - \frac{61}{2^{10}}\frac{\hbar^3}{3!} + \frac{1385}{2^{13}}\frac{\hbar^4}{4!} - \frac{50521}{2^{16}}\frac{\hbar^5}{5!} + \cdots \right)\\ 
&~=~ -\frac{1}{4} + \frac{5}{16}\hbar - \frac{13}{32}\frac{\hbar^2}{2!} + \frac{73}{128}\frac{\hbar^3}{3!} - \frac{121}{128}\frac{\hbar^4}{4!} + \frac{1135}{512}\frac{\hbar^5}{5!} - \frac{9161}{1024}\frac{\hbar^6}{6!} + \cdots
\end{align*}
Multiplying by the factor $\frac{1}{\sqrt{k}}$, which plays an important role and also simplifies the Borel transform, we get
\[B \left( \frac{1}{\sqrt{k}} \Psi_2^{(1)} \right)(\xi) \; = \; \frac{1}{\sqrt{\pi \xi}}\frac{1}{e^{\sqrt{\pi i \xi}} + e^{-\sqrt{\pi i \xi}}}\]
This expression has poles at $\sqrt{\pi i \xi} = \pi i(n + \frac{1}{2})$, or
\be
\frac{\xi}{2\pi i} \; = \; \frac{(2n+1)^2}{8} \; \equiv \; \frac{1}{8} \text{ mod }\mathbb{Z}.
\label{52polesa}
\ee
Similarly, for $a=0,1,2,3$, we have:
%\begin{align*}
%\widetilde{Z}_0 &\cong -\frac{i}{2\sqrt{7}}[q^{3/8}(\Psi_{14,1} -\Psi_{14,3} +\Psi_{14,5} -\Psi_{14,7} +\Psi_{14,9} -\Psi_{14,11} +\Psi_{14,13})]\bigg|_{q\rightarrow q^{-1}}
%\end{align*}
\begin{align*}
Z^{(-)}_{a} \big( S^3_0 ({\bf 5_2}) \big) &\cong -\frac{i}{2\sqrt{7}} \Bigl[ q^{3/8}
\sum_{b=1, 3, 5, \cdots, 13}
e^{-2\pi i \frac{a b}{7}}
(-1)^{\frac{b-1}{2}} \Psi_{14}^{(b)} (q)
\Bigr]\bigg|_{q\rightarrow q^{-1}}\\
&= -\frac{i}{2\sqrt{7}}\Bigl[q^{3/8}\sum_{j=1,3,5,\cdots}q^{\frac{j^2}{56}}e^{-2\pi i \frac{a j}{7}} \Bigr]\bigg|_{q\rightarrow q^{-1}}
\end{align*}
where $(\ldots)\vert_{q \to 1/q}$ means that we should replace $q$ by $q^{-1}$ in all terms. At the level of the perturbative $\hbar$-expansion, this means replacing $\hbar \to - \hbar$. The Borel transform of the resulting expression is
$$
B \left( \frac{-i}{\sqrt{7k}}
\sum_{b=1, 3, 5, \cdots, 13}
e^{-2\pi i  \frac{ab}{7}}
(-1)^{\frac{b-1}{2}}
\Psi_{14}^{(b)} (-\frac{1}{k}) \right) (\xi)
= \frac{- (-7\pi\xi)^{-1/2}}{e^{-2\pi i \frac{a}{7}+\sqrt{-\pi i \xi/7}} + e^{2\pi i \frac{a}{7}-\sqrt{-\pi i \xi/7}}}
$$
It has poles at $-2\pi i \frac{a}{7}+\sqrt{-\pi i \xi/7} = \pi i(n + \frac{1}{2})$, or 
\be
\frac{\xi}{2\pi i} \; = \; -7\frac{(2n+1 + \frac{4}{7}a)^2}{8} \; \equiv \; \frac{1}{8}-\frac{2}{7}a^2 \text{ mod }\mathbb{Z}.
\label{52polesb}
\ee
Up to a somewhat mysterious shift by $\frac{1}{8}$, the values in \eqref{52polesa} and \eqref{52polesb} agree with the values of $\text{CS} (\alpha)$ listed in Table~\ref{tab:knots}. For the moment we don't understand the meaning of these poles or why there is a shift by $\frac{1}{8}$. We leave this as a question for the future.\footnote{One possible explanation for such ``spurious poles'' is that, while expanding the range of summation over colors to include $0$ and $k$, they can contribute to abelian and almost abelian part even if the total contribution is 0. As a small evidence in favor of this scenario, we note that the poles in the Borel transform indeed cancel out in the combination $Z_0^{(+)} - \frac{i}{\sqrt{7}} Z_0^{(-)}$, when one adds abelian and almost abelian contributions together. This, however, does not explain the poles in other $Z^{(-)}_{a}$, with $a \ne 0$, as well as the shift by $\frac{1}{8}$.}

%%%%%%%%%%%%%%%%%%%%%%%%%%%%%%%%%%%%%%%%%%%%%%%%

\subsubsection{0-surgery on the unknot : $M_3 = S^2 \times S^1$}\label{S2xS1}

For the 0-surgery on the unknot \eqref{SoneStwo} the plumbing diagram consists of only one vertex:
\be
\overset{\displaystyle{0}}{\bullet}
\quad = \quad
S^2 \times S^1
\ee
A naive application of the integral formula \eqref{Zforplumbed} --- or, rather, its refinement~\cite{Gukov:2017kmk} --- to this plumbing graph gives the {\it reduced} invariant
\begin{multline}
\hat Z (q,t) \; = \; \frac{1}{2} \int \frac{dz}{2\pi iz} \frac{(z^2;q)_{\infty} (z^{-2};q)_{\infty}}{(z^2 tq;q)_{\infty} (z^{-2} tq;q)_{\infty}}
\; = \; \frac{(q^2t;q)_{\infty} (qt;q)_{\infty}}{(q^2t^2;q)_{\infty} (q;q)_{\infty}} \; = \;
\\
\; = \; 1 + (1-t)q + (2-3t+t^2)q^2
+ \ldots
\label{ZhatS1S2}
\end{multline}
where, to avoid clutter, we omit the label $a \in \text{Tor} H_1 (M_3, \Z) =0$. This expression is precisely the half-index ({\it i.e.} a partition function on a solid torus~\cite{Gadde:2013wq}) of a simple 3d $\CN=2$ theory with $G=SU(2)$:
\be
T[S^2 \times S^1,G]
\quad \stackrel{?}{=} \quad
\boxed{\begin{array}{c}
		~\text{3d } \CN=2 \text{ gauge theory with gauge } \\
		~\text{group } G \text{ and an adjoint chiral }\Phi_0
\end{array}}
\label{TS1S2}
\ee
which is one of the natural candidates for $T[S^2 \times S^1]$ and was already mentioned in the end of section~\ref{sec:Mflat}. 
Indeed, the denominator of the integrand corresponds to the contribution of bosonic $(\partial_z^n \phi_0)$ operators coming from the adjoint chiral multiplet $\Phi_0$, while the numerator corresponds to the contribution of the derivatives of the gaugino field $(D_z^n \lambda_{-})$ coming from the vector multiplet. After the integration we are left with gauge invariant combinations of these fields describing meson fields. In computing \eqref{ZhatS1S2}, the R-charge assignment \eqref{RPhi0} of $\Phi_0$ was used, instead of a perhaps more natural (for a $\CN=4$ vector multiplet) value $R=1$. The variable $t$ in \eqref{ZhatS1S2} is the fugacity for $U(1)_{\beta}$ flavor symmetry\footnote{{\it cf.} the general discussion in \cite[sec.3.4]{Gukov:2016gkn}.}
\begin{equation}
\begin{array}{c|c|c}
& ~~U(1)_R~~ & ~~U(1)_{\beta} : \Phi_0 \to e^{i \theta} \Phi_0~~ \\
\hline
\Phi_0~~ & +2 & +1
\end{array}\,
\label{RFlavorPhi0}
\end{equation}

The {\it unreduced} invariant (with Cartan components of the adjoint chiral and gauge multiplet included) corresponding to~\eqref{ZhatS1S2} is
\begin{align}
\hat Z^{\text{unred}} (q,t)
& \; = \; \frac{(q^2t;q)_{\infty}}{(q^2t^2;q)_{\infty}} \notag \\
\hat Z^{\text{unred}} (1/q,1/t)
& \; = \; \frac{(q^{-1} t^{-2};q)_{\infty}}{(q^{-1} t^{-1}; q)_{\infty}} \label{S1S2unred} \\
\hat Z^{\text{unred}} (1/q,t)
& \; = \; \frac{(q^{-1} t^2;q)_{\infty}}{(q^{-1} t;q)_{\infty}} \notag
\end{align}
where the last two expressions can be obtained either with the help of the ``cyclotomic expansion'' of \cite{Gukov:2017kmk} or by using the familiar relations $(x;q^{-1})_{\infty} = \frac{1}{(xq;q)_{\infty}}$ and $(x;q)_n = \frac{1}{(x q^n ; q)_{-n}}$. Then, according to \cite{Gukov:2017kmk}, these building blocks determine the superconformal index of the theory \eqref{TS1S2}:
\be
\CI (q,t)
\; = \; \hat Z^{\text{unred}} (q,t) \, \hat Z^{\text{unred}} (1/q,1/t)
\; = \; \frac{(q^2t;q)_{\infty}}{(q^2t^2;q)_{\infty}}
\frac{(q^{-1} t^{-2};q)_{\infty}}{(q^{-1} t^{-1}; q)_{\infty}}
\ee
and its topologically twisted index on $S^1 \times S^2$:
\be
\CI_{\text{top}} (q,t)
\; = \; \hat Z^{\text{unred}} (q,t) \, \hat Z^{\text{unred}} (1/q,t)
\; = \; \frac{(q^{-1} t^2;q)_3}{(q^{-1} t;q)_3}
\label{ItopS1S2S1S2}
\ee
that will be discussed further in the next section.

The refinement by $t$, the fugacity for the flavor symmetry $U(1)_{\beta}$,
is closely related to the categorification of the $q$-series invariants $\hat Z$,
see {\it e.g.} \cite{Gukov:2016gkn,Gukov:2017kmk}.
In the ``unrefined'' limit $t \to 1$, the difference between reduced and unreduced
versions of $\hat Z$ is essentially in the extra factors of $(q;q)_{\infty} = q^{-\frac{1}{24}} \eta (q)$.
Although keeping such extra factors in the unreduced version of the invariants can make
the expressions look bulkier, physically it is more natural since this is what
the half-index of 3d $\CN=2$ theory $T[M_3]$ actually computes.
Moreover, these extra factors of $(q;q)_{\infty} = q^{-\frac{1}{24}} \eta (q)$
play an important role in {\it 3d Modularity Conjecture} \cite{Cheng:2018vpl}
that relates $\hat Z (M_3;q)$ to characters of chiral algebras. In relation to the WRT invariants of 3-manifolds, however, and their categorification,
we usually use the reduced version of $\hat Z (M_3;q)$, which in the present paper
means no additional label (unlike the unreduced version $\hat Z^{\text{unred}}$).

%%%%%%%%%%%%%%%%%%%%%%%%%%%%%%%%%%%%%%%%%%%%

\subsection{Continuous versus discrete labels}\label{contvsdiscr}

Throughout this paper we considered invariants $\hat Z_b (M_3)$ labeled by discrete
set of labels $b$ when $M_3$ is closed, even if $b_1 (M_3) > 0$.
The continuous variables $x_i$ only appeared at intermediate stages of building
closed $M_3$ from 3-manifolds with toral boundaries,
{\it cf.} surgery formulae \eqref{Zforplumbed} or \eqref{ZKnresidue}.
As in \cite{Gukov:2003na}, these $\C^*$-valued variables $x_i$ can be interpreted
as holonomy eigenvalues of complex $SL(2,\C)$ connections along ``meridian'' cycles
of the toral boundaries, and the gluing along $T^2$ boundaries gives rise to $x_i$-integrals.

However, for closed 3-manifolds with $b_1 > 0$, which are of central importance in this paper,
there are some clues suggesting that we might need to add $b_1$ continuous labels, such that
\be
\hat{Z}_b \; = \; \oint \prod_{1\leq i\leq b_1}\frac{dx_i}{2\pi i x_i} \hat{Z}_b(x_1,\cdots,x_{b_1})
\label{Zxintvsa}
\ee
Just as for link complements and more general 3-manifolds with toral boundaries considered earlier,
these additional variables $x_i \in \C^*$ should be viewed as holonomies of $SL(2,\C)$ flat connections on $M_3$.

Indeed, recall from section~\ref{sec:Mflat}, that the generic component of the moduli space of $SL(2,\C)$ flat connections on $M_3$ is $b_1$-dimensional. Similarly, in 3d $\CN=2$ theory $T[M_3]$, the continuous variables $x_i$ play the role of local coordinates on $\CM_{\text{vacua}}$; in particular, they parametrize the ``Coulomb branch'' of $T[M_3]$ and, from the viewpoint of resurgent analysis \cite{Gukov:2016njj}, appear on the same footing as discrete labels $b$ in $\hat Z_b$ and their close cousins (without a hat),
\be
Z_a \; = \; \sum_b \CS_{ab} \hat Z_b
\label{ZvsZhat}
\ee
that we already encountered a few times earlier.

Another reason why one may find natural to label $Z_a$ and $\hat Z_b$ by continuous variables $x_i$
is that their perturbative expansion in the limit $q = e^{\hbar} \to 1$ has the ``constant term'' (of the order $\hbar^0$) related to the Reidemeister-Turaev torsion \cite{Gukov:2016njj}. This property of $Z_a$ and $\hat Z_b$ is inherited directly from the analogous perturbative expansion of the Chern-Simons path integral \cite{witten1989quantum} which they repackage as a trans-series.\footnote{If ``$a$'' denotes an abelian flat connection, then the perturbative expansion in Chern-Simons theory with compact gauge group $G=SU(2)$ is identical to the perturbative expansion around this flat connection in Chern-Simons theory with complex gauge group $G_{\C} =SL(2,\C)$.}
And, since for 3-manifolds with $b_1 > 0$,
the torsion $\tau_{M_3}$ can be regarded as a function
of continuous variables $x_i$, $i = 1, \ldots, b_1$,
it is natural to expect $Z_a$ and $\hat Z_b$ to have this property too.
More precisely \cite{Turaev}, the torsion $\tau_{M_3}$ is a function of a $\text{Spin}^c$ structure $b$
with values in $Q(H)$, the ring of fractions of $\mathbb{Z}[H]$, where $H = H_1(M_3)$.
Elements of $Q(H)$, in turn, can be regarded as functions
\be
\text{Hom}(\text{Tor}(H),\mathbb{C}^*) \; \rightarrow \; \mathbb{C}(x_1,\cdots,x_{b_1}) \,.
\ee
For each character $\sigma_i \in \text{Hom}(\text{Tor}(H),\mathbb{C}^*)$, $\tau_{M_3} (b)(\sigma_i)\in F_i = C_{\sigma_i}(x_1,\cdots,x_{b_1})$ where $C_{\sigma_i}$ is the cyclomotic field induced by $\sigma_i$. Moreover, this map is $H$-equivariant, {\it i.e.} 
\be
\tau_{M_3} (h\cdot b)(\sigma_i) = \sigma_i(h)\tau_{M_3} (b)(\sigma_i) \,.
\ee
In the next section, we propose a precise relation between $\tau_{M_3}$ and the invariants $Z_a$ and $\hat Z_b$.

%%%%%%%%%%%%%%%%%%%%%%%%%%%%%%%%%%%%%%%%%%%%%%%%%%%%

\section{Twisted indices and Hilbert spaces}
%\section{Seiberg-Witten invariants and $HF^+ (M_3)$}
\label{sec:HF}

In general, a $d$-dimensional quantum field theory (QFT) with sufficient amount of supersymmetry that allows a partial topological twist~\cite{Bershadsky:1995vm} on an arbitrary $(d-1)$-manifold $M_{d-1}$ (perhaps equipped with an extra structure, {\it e.g.} Spin structure) admits a {\it topologically twisted index}, or {\it twisted index} for short,
\be
\CI_{\text{top}} \; = \; \Tr_{\CH (M_{d-1})} (-1)^F \; = \; \chi \left( \CH (M_{d-1}) \right)
\label{topindex}
\ee
where the space of supersymmetric states $\CH (M_{d-1})$ is a Floer-like homology of $M_{d-1}$. It can be defined as a $Q$-cohomology with respect to the supercharge $Q$ preserved by the partial topological twist on $\R \times M_{d-1}$. For example, in $d=4$ this gives a physical interpretation~\cite{Witten:1988ze} of the original Floer theory~\cite{MR956166} and a physical interpretation~\cite{Gukov:2016gkn} of the Heegaard Floer homology~\cite{MR2113019} that will be relevant in the rest of this section. When the QFT in question has global symmetries, the space of states $\CH (M_{d-1})$ is graded and, in such cases, one can introduced additional variables in the twisted index \eqref{topindex} that keep track of these additional gradings.

Equivalently, one can view the twisted index of a $d$-dimensional QFT as a partition function of a $(d-1)$-dimensional TQFT, with all Kaluza-Klein modes included. This approach to twisted indices and thinking about the underlying Floer-like homology $\CH (M_{d-1})$ is useful in both Lagrangian and non-Lagrangian theories. Even in Lagrangian theories, where \eqref{topindex} can often be computed by localization techniques, it offers useful insights into the structure of twisted indices: see \cite{Gukov:2016gkn,Dedushenko:2018bpp,Gukov:2018iiq} for the underlying algebraic structures in $d=3$, $d=4$, and $d=5$, respectively.

\subsection{Twisted Hilbert space on $F_g$}

In $d=3$, the first computation of the twisted index \eqref{topindex} and its application appeared in~\cite{Gukov:2015sna}. It is directly relevant to us here since the corresponding $3d$ $\CN=2$ theory is $T[M_3]$ with $M_3=L(k,1)$:

\be
T[L(k,1),G]
\quad = \quad
\overset{\displaystyle{k}}{\bullet}
\quad = \quad
\boxed{\begin{array}{c}
		~\text{3d } \CN=2 \text{ super-Chern-Simons} \\
		~\text{with } G_k \text{ and an adjoint chiral }\Phi_0
\end{array}}
\label{TLens}
\ee
where the charges (weights) of $\Phi_0$ under the R-symmetry and the flavor symmetry $U(1)_{\beta}$ are the same as in \eqref{RFlavorPhi0}. For a genus-$g$ Riemann surface $F_g$ (where ``$F$'' stands for ``fiber,'' as used in more general twisted compactifications), the twisted index of this theory on $S^1 \times F_g$ can be expressed as a $U(1)_{\beta}$-equivariant integral over the moduli space of Higgs bundles on~$F_g$,
\be
\CI_{\text{top}} (S^1 \times F_g) \; = \;
\int\limits_{\CM_H (F_g,G)} \text{Td} (\CM_H, \beta) \wedge e^{k \omega_I - k \beta \mu_I}
\ee
This integral makes sense even in the special case $k=0$, which corresponds to $M_3 = S^2 \times S^1$ and for which the naive specialization of the theory \eqref{TLens} leads to our earlier candidate~\eqref{TS1S2} for the theory $T[S^2 \times S^1]$. According to \cite{Gukov:2015sna,Nekrasov:2014xaa} (see also \cite{Benini:2015noa,Closset:2016arn,Closset:2017zgf,Gukov:2017zao}), twisted index of this theory, {\it i.e.} partition function on $S^1 \times F_g$, is given by
\be
Z_{T[S^2 \times S^1]} (S^1 \times F_g) \; = \;
\frac{1}{2} \sum_{\text{vacua} : \;  d \tilde \CW = 0} \,
Z_{\text{1-loop}} \vert_{{\frak m} = 0} \, \left( \tilde \CW'' \right)^{g-1}
\label{S1S2S1Fg}
\ee
where $\tilde \CW''$ denotes $\frac{\partial^2 \tilde \CW}{(\partial \log z)^2}$ and
$\tilde \CW' = \frac{\partial \tilde \CW}{\partial \log z} = \frac{ \partial \log Z_{\text{1-loop}} }{ \partial {\frak m} }$.
For the gauge theory \eqref{TS1S2} we have
$$
Z_{\text{1-loop}} \; = \;
\underbrace{ \left( 2 - \frac{1}{z^2} - z^2 \right)^{1-g} }_{SU(2) \text{ gauge}}
\underbrace{ \left( \frac{z t^{1/2}}{1 - z^2 t} \right)^{2 {\frak m} + g - 1}
	\left( \frac{t^{1/2}}{1 - t} \right)^{g - 1}
	\left( \frac{z^{-1} t^{1/2}}{1 - z^{-2} t } \right)^{ - 2 {\frak m} + g - 1} }_{\text{adjoint chiral } \Phi_0}
$$
The Bethe ansatz equation
\be
1 \; = \; \exp \left( \frac{\partial \tilde \CW}{\partial \log z} \right) \; = \;
\left( \frac{z^2 - t}{1 - z^2 t} \right)^2
\ee
has a total of four solutions $z = \{ \pm 1 , \pm i \}$, two of which (namely, $z = \pm 1$)
correspond to the points on the maximal torus of $G = SU(2)$ fixed by the Weyl group and, therefore, need to be discarded.
The sum over the other two roots of the Bethe ansatz equation gives \eqref{S1S2S1Fg}:
\be
Z_{T[S^2 \times S^1]} (S^1 \times F_g) \; = \;
\left( t^{1/2} + t^{-1/2} \right)^{3-3g}
\label{S2S1onFg}
\ee
which, for $g=0$, agrees with the limit $q \to 1$ of the topologically twisted index \eqref{ItopS1S2S1S2} computed earlier. And, for $g=1$, it also agrees with the Witten index $\text{Tr} (-1)^F$ of 3d $\CN=2$ adjoint SQCD with $N_f = 1$ \cite{Intriligator:2013lca}. (Note, the R-charge assignment does not play a role in that calculation.)

{}From the viewpoint of 3d $\CN=2$ super symmetryalgebra (see {\it e.g.} \cite{Intriligator:2013lca}),
\be
\begin{array}{rcl}
	\{ Q_{\pm} , \bar Q_{\pm} \} & = & P_1 \pm i P_2 \\
	\{ Q_{\pm} , \bar Q_{\mp} \} & = & - P_0
\end{array}
\qquad\qquad
\begin{array}{l@{\;}|@{\;}cccc}
	& ~Q_+~ & ~\bar Q_+~ & ~Q_-~ & ~\bar Q_-~ \\\hline
	U(1)_{E} & -1 & -1 & +1 & +1 \\
	U(1)_{R} & -1 & +1 & -1 & +1
\end{array}
\label{3dSUSY}
\ee
the twisted index $\CI_{\text{top}} (S^1 \times F_g) \equiv Z_{T[M_3]} (S^1 \times F_g)$ counts the states of 3d $\CN=2$ theory on $F_g$ that are in cohomology of the supercharge $Q_A = Q_- + \bar Q_+$. Indeed, partial topological twist along $F_g$ is a 3d version of the standard A-model twist in two dimensions, that replaces the little group $U(1)_E$ by the diagonal subgroup of $U(1)_E \times U(1)_R$. Under the latter, $Q_-$ and $\bar Q_+$ have zero spin, as one can easily see from \eqref{3dSUSY}.

In other words, as advertised in \eqref{topindex}, $\CI_{\text{top}} (S^1 \times F_g)$ computes the graded trace over the ``Floer homology'' $\CH (F_g)$ in 3d $\CN=2$ theory. Using modern terminology, one might call $\CH (F_g)$ a categorification of the A-model. A chiral multiplet of R-charge $R$ contributes to this $Q_A$-cohomology a boson $\phi$ and a fermion $\bar \psi$, which after the twist transform as holomorphic sections of $K_{F_g}^{R/2} \otimes \CL ({\frak m})$ and $K_{F_g}^{1-R/2} \otimes \CL (-{\frak m})$, respectively \cite{Bershadsky:1995vm,Gukov:2017zao,Dedushenko:2017osi,Bullimore:2018yyb}:
\be
\CH_{\text{chiral}} (F_g) \quad = \quad
\underbrace{ \CT^+ \; \otimes \quad \ldots \quad \otimes \; \CT^+ }_{\dim H^0 (K_{F_g}^{R/2} \otimes \CL ({\frak m}))}
\quad \otimes \quad
\underbrace{ \CF \; \otimes \quad \ldots \quad \otimes \; \CF }_{\dim H^0 (K_{F_g}^{1-R/2} \otimes \CL (-{\frak m}))}
\label{HchiralonFg}
\ee
Adopting notations from Heegaard Floer theory, here we denote a copy of bosonic Fock space by
\be
\CT^+ \; = \;
1 \oplus \phi \oplus \phi^2 \oplus \ldots
\; \cong \; H^*_{S^1} (\text{pt}) \; = \; H^* (\mathbb{C}{\bf P}^{\infty})
\label{Ttower}
\ee
and for a fermionic Fock space use
\be
\CF \; = \;
1 \oplus \bar \psi
\; \cong \; H^* (\mathbb{C}{\bf P}^1)
\label{FFockspace}
\ee

The Kaluza-Klein spectrum \eqref{HchiralonFg} can be easily computed directly from the topological reduction \cite{Bershadsky:1995vm} and admits a simple interpretation in the effective 1d quantum mechanics. Indeed, as explained below \eqref{3dSUSY}, the partial topological twist along $F_g$ breaks half of the supersymmetries, preserving only $Q_-$ and $\bar Q_+$ that transform as scalars on $F_g$. Moreover, it is also easy to see directly from \eqref{3dSUSY} that these two supercharges generate an algebra of $\CN=2$ superconformal quantum mechanics
\be
\{ Q , \bar Q \} \; = \; 2H
\,, \qquad
Q^2 = 0
\,, \qquad
\bar Q^2 = 0
\label{SQMalgebra}
\ee
which in the literature is often called $\CN=2B$ since it {\it also} appears in a reduction of a 2d $\CN=(0,2)$ theory on a circle, see {\it e.g.} \cite{Gibbons:1997iy,Okazaki:2015pfa}. In particular, on a circle of circumference $\beta \to 0$, the two standard matter multiplets in a 2d $\CN=(0,2)$ theory --- namely, chiral and Fermi --- become two types of supermultiplets in $\CN=2B$ quantum mechanics, usually called by the same names. Correspondingly, the indices of these supermultiplets in 1d quantum mechanics follow directly from the elliptic genera of 2d multiplets \cite{Gadde:2013wq}:
\be \begin{array}{l@{\qquad}cl}
	\text{2d $(0,2)$ Fermi:} & \frac{\theta (x,q)}{\eta (q)}
	& \xrightarrow[~~~~]{~~\beta \to 0~~} \quad x^{\frac{1}{2}} - x^{-\frac{1}{2}} \; = \; \chi \left( \CF \right) \\
	\text{2d $(0,2)$ chiral:} & \frac{\eta (q)}{\theta (x,q)}
	& \xrightarrow[~~~~]{~~\beta \to 0~~} \quad \frac{1}{x^{1/2} - x^{-1/2}} \; = \; \chi \left( \CT^+ \right)
\end{array}
\ee
where, in the last equality, we draw attention to the fact that these indinces match $\Tr (-1)^F x^{\text {flavor}}$ evaluated on \eqref{Ttower} and \eqref{FFockspace}. For this reason, the Floer homology \eqref{HchiralonFg} of A-twisted 3d $\CN=2$ theory on $F_g$ can also be interpreted as a field content of the effective $\CN=2B$ superconformal quantum mechanics, in such a way that each $\CT^+$ factor corresponds to a 1d chiral multiplet and each copy of $\CF$ corresponds to a 1d Fermi multiplet.

This language becomes especially convenient in describing the contribution of a 3d $\CN=2$ vector multiplet to the $Q_A$-cohomology $\CH (F_g)$. The naive analogue of~\eqref{HchiralonFg} reads
\be
\underbrace{ \CT^+ \; \otimes \quad \ldots \quad \otimes \; \CT^+ }_{g \cdot \dim G}
\quad \otimes \quad
\underbrace{ \CF \; \otimes \quad \ldots \quad \otimes \; \CF }_{\dim G}
\label{vectoronFgnaive}
\ee
where the bosonic and fermionic Fock spaces are generated by the zero-modes of the gauge connection and gluinos, respectively. However, only gauge-invariant combinations like mesons, baryons and glueballs can be part of $\CH (F_g)$, and this requires implementing a BRST-like reduction on the entire space of states, not just its gauge sector \eqref{vectoronFgnaive}. A convenient way to account for this is to interpret this operation as gauging in the effective 1d quantum mechanics, where \eqref{vectoronFgnaive} is simply a statement that topological reduction of a 3d $\CN=2$ vector multiplet on $F_g$ results in a 1d vector multiplet and $g$ copies of 1d chiral multiplet in the adjoint representation of $G$.

It is useful to note that the superconformal quantum mechanics \eqref{SQMalgebra} can be conveniently described in superspace $\R^{(1 \vert 2)}$ parametrized by two odd (Grassmann) coordinates $\theta, \bar \theta$ and the time coordinate $x^0$. Specifically, it is easy to check that the two supercharges\footnote{The covariant superspace derivatives
\be
D = i \frac{\partial}{\partial \theta} - \bar \theta \frac{\partial}{\partial x^0}
\,, \qquad
\bar D = i \frac{\partial}{\partial \bar \theta} - \theta \frac{\partial}{\partial x^0}
\ee
obey $\{ D , \bar D \} = -2i \partial_{x^0}$.}
\be
Q = i \frac{\partial}{\partial \theta} + \bar \theta \frac{\partial}{\partial x^0}
\,, \qquad
\bar Q = i \frac{\partial}{\partial \bar \theta} + \theta \frac{\partial}{\partial x^0}
\ee
satisfy the algebra \eqref{SQMalgebra} and can be distinguished by the R-charge
\be
[R, Q] \; = \; -Q
\,, \qquad
[R, \bar Q] \; = \; \bar Q
\,, \qquad
[R, H] \; = \; 0
\ee
which, in dimensional reduction from 2d $\CN=(0,2)$ algebra, simply comes from the reduction of 2d R-charge.

Note, in general, the space of states \eqref{HchiralonFg} is much larger than its contribution to the index, $\chi \Big( \CH_{\text{chiral}} \Big) = \left( x^{\frac{1}{2}} - x^{-\frac{1}{2}} \right)^{(R-1)(1-g) - {\frak m}}$, which appeared several times in the calculation of \eqref{S2S1onFg}.

\subsection{Twisted Hilbert space on $D^2$}

Compared to the Floer homology of $T[M_3]$ on $F_g$, the space of (supersymmetric) states $\CH (D^2)$ on a 2-disk $D^2$ has new interesting features:

\begin{itemize}

\item While $\CH (F_g)$ depends only on 3d theory, $\CH (D^2)$ depends on 3d theory {\it together} with a choice of 2d boundary condition $\CB$ at the disk boundary.

\item In the case of $\CH (F_g)$ it is natural to ask how this space depends on the genus $g$. In the case of $\CH (D^2, \CB)$, the analogous question involves introducing $\Z$-grading associated with $U(1)_E$ rotation symmetry of the disk and asking about graded components of~$\CH (D^2, \CB)$. As we explain shortly, this is equivalent to studying ``half-twisted'' theory~\cite{Silverstein:1995re,Katz:2004nn} (a.k.a. ``holomorphic twist'' \cite{Johansen:1994aw} of the theory) along 2d part of the 3d space-time or, equivalently, the Omega-background along $D^2$ (which, sometimes, is denoted $D^2_q$ or $\R^2_q$).

\end{itemize}

As for the choice of boundary conditions, following \cite{Gadde:2013wq,Gadde:2013sca}, we take $\CB$ to be invariant under 2d $\CN=(0,2)$ supersymmetry on the boundary. Then, several nice things happen. Perhaps the most important feature is that the hemispheres around local boundary operators (illustrated in Figure~\ref{fig:halfplane}) become twisted by $U(1)_R$ symmetry precisely as in the topological index \eqref{topindex}.
In other words, one can say that a holomorphic twist of a 3d $\CN=2$ theory along the boundary induces a topological A-twist along the 2-disks $D^2$ ``orthogonal'' to the boundary, {\it cf.} Figure~\ref{fig:halfplane}, and vice versa. From the viewpoint of the topological twist along $D^2$, the half-index of \cite{Gadde:2013wq} is simply a version of the twisted index \eqref{topindex} with $F_g = D^2$ refined\footnote{equivariant with respect to $U(1)_E$ in \eqref{3dSUSY}} by the rotation symmetry of the disk.

Conversely, the twisted index \eqref{topindex} with $F_g = D^2$ can be viewed as a limit of the $S^1 \times_q D^2$ partition function \eqref{halfindex}:
\be
\CI_{\text{top}} (S^1 \times D^2)
\; = \;
\lim_{q \to 1} \, \Tr_{\CH_{D^2}} (-1)^F q^{R/2 + J_3}
\label{q1limit}
\ee
Let us choose the conventions such that the $\CN=(0,2)$ supersymmetry preserved on the boundary is generated by $Q_+$ and $\bar Q_+$. According to \eqref{3dSUSY}, these satisfy the algebra \eqref{SQMalgebra} with the Hamiltonian $H_+$ that generates translations along the boundary. The states that contribute to the elliptic genus $\Tr (-1)^F q^{H_-}$ of the 2d $\CN=(0,2)$ boundary theory have $H_+ = 0$.
In the radial quantization, illustrated in Figure~\ref{fig:halfplane}, the solid torus $S^1 \times D^2$ is foliated by concentric disks embedded in the half-space $\R^2 \times \R_+$. At every point on the disk $D^2$, the anticommutator of the supercharges $Q$ and $\bar Q$ generates a translation orthogonal to the disk, which rotates as one goes from points at the boundary to the apex (center) of the disk, where $Q = Q_A$ and $\bar Q = Q_A^{\dagger}$. Put another way, translations orthogonal to the boundary of the half-space $\R^2 \times \R_+$ are exact in $\bar Q_+$-cohomology, and so is the anti-holomorphic dependence on $\bar z$ along the $\R^2$ parametrized by $z$ and $\bar z$.

In the context of 3d-3d correspondence, both sides of \eqref{q1limit} have a simple meaning. The right-hand side is basically our friend $\hat Z (M_3)$ in the limit $q \to 1$. The left-hand side, on the other hand, is a twisted partition function of $T[M_3]$ on $S^1 \times F_g$ with a special choice of $F_g = D^2$. For general $F_g$, this twisted partition function is basically determined \cite{Gukov:2016gkn} by the Reidemeister-Turaev torsion $\tau_{M_3}$, a close cousin of the Seiberg-Witten invariants $\text{SW}(M_3)$. More specifically, the information about all genus-$g$ twisted partition functions is completely contained in $\text{MTC} [M_3]$, in fact, in the $S$ and $T$ matrices:
\be
\CI_{\text{top}} (S^1 \times F_g) \; = \; \sum_{\lambda} \left( S_{0 \lambda} \right)^{\chi(F_g)}
\ee
The case most directly related to SW invariants / Turaev torsion has $\chi (F_g) = -1$, whereas the one in \eqref{q1limit} obviously has $\chi (D^2) = +1$.
Therefore, one should expect that \eqref{q1limit} is roughly inverse of the $\tau_{M_3}$.
This expectation agrees with the perturbative expansion of $Z_a$, related to $\hat Z_b$ via \eqref{ZvsZhat}, where $\tau_{M_3}$ originates from one-loop term in (complex) Chern-Simons theory.\footnote{In 3d $\CN=2$ theory $T[M_3]$, the relation between $Z_a$ and $\hat Z_b = \sum_{a} \CS_{ab} Z_a$ can be understood in terms of exchanging the roles of A- and B-cycles of $T^2 = \partial (S^1 \times D^2)$, see \cite{Gukov:2016gkn} for details.}

Moreover, taking the limit $q \to 1$ directly in the integral formula \eqref{Zforplumbed} for the plumbed manifolds leads to
\be
\sum_{m_i \in \Z} \int \prod_{i \in \text{vertices}}
\frac{dx_i}{2\pi i x_i} x_i^{\sum_j Q^{ij} m_j}
(1 - x_i)^{2 - \text{deg} (i)} x_i^{b_i}
\label{torintsum}
\ee
where $b \in \text{coker} \, Q$ can be interpreted as the label of a Wilson line (that runs orthogonal to the boundary of the half-space $\R^2 \times \R_+$ shown in Figure~\ref{fig:halfplane}). Then, following the same manipulations as in \cite{Gukov:2016gkn}, one can evaluate this infinite sum and the integral. The result is a finite sum over Bethe vacua, solutions to $\prod_j x_j^{Q^{ij}} = 1$, which can be identified with characters $\sigma \in \hat H = \text{Hom} (H,\C^*)$ in the Pontryagin dual of $H = H_1 (M_3)$. This sum has exactly the same form as what one finds for the Turaev torsion of plumbed 3-manifolds \cite{Nicolaescu}, except the exponent $\text{deg} (i) - 2$ is replaced by $2 - \text{deg} (i)$ in \eqref{torintsum}.

Therefore, we can summarize this discussion by saying that the Fourier transform of $\tau_{M_3} (b)$, viewed as a function of $\hat H$, is the inverse of \eqref{q1limit}. Equivalently, since the {\it unfolded} $S$-matrix $\CS_{ab}$ is independent of $q$ and implements the Fourier transform with respect to $b \in H$ and $\sigma \in \hat H$, we can state this as a relation between $\tau_{M_3} (b)$ and $Z (M_3)$:
\begin{conj}
\begin{equation}
\frac{1}{\tau_{M_3} (b)(a)} \; = \; e^{2\pi i(b,Q^{-1}a)} Z_a \bigg|_{q \rightarrow 1}
\label{Zvstorsion}
\end{equation}
\end{conj}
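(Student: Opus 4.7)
\medskip

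\noindent
\textbf{Proof proposal.} The plan is to prove the conjecture first for plumbed 3-manifolds (where both sides admit explicit formulas), and then, to the extent possible, leverage surgery/gluing to extend to the general case. For a plumbed manifold $M_3 = Y_{\Gamma,t}$ with weakly negative definite linking matrix $Q$, we start from the integral formula \eqref{Zforplumbed} together with the S-matrix transform \eqref{ZvsZhat}. Carrying the factor $\mathcal{S}_{ab}$ inside the theta-sum and rewriting the sum over $l \in 2Q\Z^V + b$ as a sum over $m \in \Z^V$ with shift by $b$, the $q \to 1$ limit turns the theta-function into a delta-comb on characters, yielding precisely \eqref{torintsum} with the exponent $x_i^{b_i}$ replaced by an overall character of $H = H_1(M_3)$ evaluated on $a \in \hat H$ (via the Fourier pairing $e^{2\pi i(b,Q^{-1}a)}$). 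This is the step where the prefactor $e^{2\pi i(b,Q^{-1}a)}$ in \eqref{Zvstorsion} arises, as a Poisson resummation artifact.

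Next, I would evaluate the integral-sum in \eqref{torintsum} by the same Bethe-vacuum manipulation that appears in \cite{Gukov:2016gkn}: the summation over $m_i$ collapses $\prod_i x_i^{\sum_j Q^{ij} m_j}$ to a delta-function localized on solutions of $\prod_j x_j^{Q^{ij}} = 1$, i.e.\ on characters $\sigma \in \hat H$, while the remaining residue integrals distribute the Jacobian factors $(1-x_i)^{2-\deg(i)}$ over the vertices. The outcome is a finite sum over characters $\sigma$ of a product $\prod_i (1 - \sigma(g_i))^{2-\deg(i)}$, where $g_i$ denotes the meridian class of vertex $i$ in $H$. Comparing with the Nicolaescu formula (\cite{Nicolaescu}) for the Reidemeister--Turaev torsion of a plumbed 3-manifold,
\[
\tau_{M_3}(b)(\sigma) \; = \; \sigma(b) \prod_i (1-\sigma(g_i))^{\deg(i)-2},
\]
one sees that the $Z_a$ side reproduces the inverse of $\tau_{M_3}(b)$, because the exponent flips sign. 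This is the content of \eqref{Zvstorsion}, and the equivariance property $\tau_{M_3}(h\cdot b)(\sigma) = \sigma(h)\tau_{M_3}(b)(\sigma)$ is matched precisely by the explicit $e^{2\pi i(b,Q^{-1}a)}$ prefactor.

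The two main obstacles I expect are: (i) making the $q \to 1$ limit of \eqref{Zforplumbed} rigorous, since individual $\hat Z_b$ may only have the limit as a zeta-regularized value, and the swap of the integral with the limit needs justification on the nose (in practice one would perform the integral first, then take $q \to 1$); and (ii) separating cleanly the free from the torsion part of $H_1(M_3)$ in the Fourier-transform step, since for $b_1 > 0$ the character $\sigma$ ranges over $(\C^*)^{b_1} \times \widehat{\mathrm{Tor}\,H}$ rather than a finite set, and the meaning of $\tau_{M_3}(b)(\sigma) \in Q(H)$ as a rational function of the continuous variables $x_1,\ldots,x_{b_1}$ must be matched with the residual $x$-integrals discussed around \eqref{Zxintvsa}. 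Once (i) and (ii) are handled, the plumbed case follows.

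For general closed $M_3$, the strategy is to invoke Lickorish--Wallace to represent $M_3$ by integral surgery on a link in $S^3$ and propagate the identity through the surgery formulas (as in Conjecture~\ref{0-surgery formula} and \eqref{Zxintvsa}); the residue prescription in \eqref{resflaone}, \eqref{ZKnresidue} is compatible with the multiplicativity of $\tau_{M_3}$ along tori, which provides a path to reduce the general statement to the plumbed case plus $F_K$-data. The harder step here is independence of the surgery presentation, i.e.\ invariance under Kirby moves, which for the torsion side is classical but for $Z_a$ must be checked analytically in the $q \to 1$ limit.
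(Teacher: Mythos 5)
Your route is essentially the paper's own: the statement appears there only as a conjecture, and its supporting derivation is exactly your plumbed-manifold computation --- the $q \to 1$ limit of \eqref{Zforplumbed} producing \eqref{torintsum}, the Bethe-vacuum evaluation following \cite{Gukov:2016gkn}, and the comparison with Nicolaescu's torsion formula for plumbed manifolds \cite{Nicolaescu} with the exponent $\deg(i)-2$ flipped to $2-\deg(i)$, the prefactor in \eqref{Zvstorsion} being fixed by $H$-equivariance. The gaps you flag (the regularized $q\to1$ limit, the treatment of the continuous $(\C^*)^{b_1}$ part of $\hat H$, and surgery/Kirby-move invariance beyond plumbings) are precisely what the paper also leaves open --- it only adds the checks for knot complements and $S^2\times S^1$ --- which is why the statement remains a conjecture rather than a theorem.
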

\noindent
The prefactor on the right-hand side of this relation is introduced to make $H$-equivariance of the Turaev's refined torsion manifest, {\it cf.} the discussion in the end of section~\ref{sec:Zhat}.
Also, note that, since the set of labels $a$ and $b$ on $Z$ and $\hat Z$ is {\it folded} with respect
to the Weyl symmetry $\Z_2 : a \to - a$, one must either implicitly
undo this folding on the right-hand side of \eqref{Zvstorsion} or average over the $\Z_2$ orbit
on the left-hand side.

For example, we can check \eqref{Zvstorsion} for knot complements. In this case, $Z$ and $\hat Z$ coincide, and the limit \eqref{q1limit} gives
\be
\frac{x^{1/2} - x^{-1/2}}{\Delta_K (x)}
\ee
This is precisely $\frac{1}{\tau_{M_3} (b)}$ for $M_3 = S^3 \setminus K$, up to an overall factor $x^b$ that can be interpreted as a defect (Wilson line) labeled by $b$.

A similar case of 0-surgeries on knots suggests that one may want to add a continuous label $x \in \C^*$ to $Z (S^3_0 (K))$ and $\hat Z (S^3_0 (K))$, such that expressions discussed in this paper are obtained by integrating over $x$, as in \eqref{Zxintvsa}. For example, for $M_3 = S^2 \times S^1 = S^3_0 (\text{unknot})$, we have $\tau_{S^2 \times S^1} (x) = \frac{1}{(x^{1/2} - x^{-1/2})^2}$. This is precisely the inverse of the integrand in \eqref{ZhatS1S2}, in the unrefined limit $t \to 1$.
It would be interesting to study the relation \eqref{Zvstorsion} further, in particular by checking it for other 3-manifolds.

%%%%%%%%%%%%%%%%%%%%%%%%%%%%%%%%%%%%%%%%%%%%%%%%%%%%%%%%%
\subsection{Twisted Hilbert space on $D^2$ with impurity}

The intriguing relation between $\tau_{M_3}$ and $\hat Z (M_3)$ discussed so far is based on \eqref{q1limit}, which involves partition function (or space of states) on $D^2 \, \cong \, S^2 \setminus \text{pt}$ and $G=SU(2)$.
It is instructive to compare this with a much more familiar appearance of $\tau_{M_3}$ in a similar problem \cite{Gukov:2016gkn} that involves $G=U(1)$ and an extra impurity $\CS_+$ on $D^2 \, \cong \, S^2 \setminus \text{pt}$ or, equivalently, the space of states on $S^2 \setminus \{ p_1,p_2 \}$ or, better yet, $S^2 \setminus \{ p_1,p_2,p_3 \}$.

Let is consider our basic example \eqref{SoneStwo} of a genus-0 mapping torus, $M_3 = S^2 \times S^1$, which also is a 0-surgery on the unknot. Unlike its non-abelian counterpart \eqref{TS1S2}, the abelian version of the theory $T[M_3,G]$ with $G=U(1)$ in this case is completely clear. It is simply a 3d $\CN=2$ super-Maxwell theory and a free chiral multiplet $\Phi_0$. The impurity $\CS_+$ is a hypermultiplet charged under $G=U(1)$ localized at a point on $D^2$ or, in $\CN=2$ language, a pair of localized chiral multiplets with charges $+1$ and $-1$. Therefore, the contribution of $\CS_+$ to the twisted Hilbert space consists of two copies of \eqref{HchiralonFg}, and the total index is
\be
\underbrace{\sum_{h \in \Z}
\int \frac{dx}{2\pi i x} q^{-h}}_{T[M_3,U(1)]} x^{h} \underbrace{\frac{t}{(1-xt)(1-x^{-1}t)}}_{\text{impurity}~\CS_+}
\label{S1S2Alex}
\ee
Note that the integrand at $t=1$, {\it i.e.} the contribution of the impurity $\CS_+$ in this expression, can also be interpreted as a contribution of a single vertex in a plumbing graph with Euler number 0,
\be
\overset{\displaystyle{0}}{\circ}
\quad = \quad
\frac{x}{(1-x)^2}
\label{TTvertex}
\ee
Here, following the conventions of \cite{Gukov:2019mnk}, we denote by ``$\circ$'' an {\it unintegrated} vertex / unfilled torus boundary (as opposed to ``$\bullet$,'' which denotes Dehn filled torus boundary / {\it integrated} vertex). In these conventions, the complement of the trefoil knot in $S^3$ evaluates to
\be
\begin{array}{cccc}
& \overset{\displaystyle{-2}}{\bullet} & \\
& \vline & \\
\overset{\displaystyle{-3}}{\bullet}
\frac{\phantom{xxxx}}{\phantom{xxxx}}
& \underset{\displaystyle{-1}}{\bullet} &
\frac{\phantom{xxxx}}{\phantom{xxxx}}
\overset{\displaystyle{0}}{\circ}
\end{array}
\qquad = \qquad
\frac{x \cdot \Delta_{{\bf 3_1}} (x)}{(1-x)^2}
\label{TTtref}
\ee
As in \eqref{S1S2Alex}, performing a Dehn filling that yields 0-surgery means integrating over $x$ and summing over Spin$^c$ structures $h$. This sum and the integral have the combined effect of replacing $x$ by $q$:
\be
\sum_{h \in \Z} \int \frac{dx}{2\pi i x} q^{-h} x^{h}
\frac{\Delta_K (x)}{(1-x)(1-x^{-1})}
\; = \; \frac{\Delta_K (q)}{(1-q) (1-q^{-1})}
\ee
This is indeed the Turaev-Milnor torsion of $M_3 = S^3_0 (K)$. Note, since the Alexander polynomial does not distinguish mirror knots, $\Delta_K (x) = \Delta_{\bar K} (x)$, 3-manifolds $M_3 = S^3_0 (K)$ and $M_3 = S^3_0 (\bar K)$ have the same Turaev torsion, which in turn equals the Euler characteristic of the Heegaard Floer homology $HF^+$. As illustrated in Table~\ref{tab:HFsurg}, though, the Heegaard Floer homology groups themselves can be quite different. From the viewpoint of 4d TQFT, this can be understood as a consequence of $M_4 \cong M_4'$, where $M_4 = S^1 \times M_3 = S^1 \times (S^3 \setminus K)$ and $M_4' = S^1 \times M_3' = S^1 \times (S^3 \setminus \bar K)$, which holds even when $M_3 \not\cong M_3'$.

\begin{table}[htb]
	\centering
	\renewcommand{\arraystretch}{1.3}
	\begin{tabular}{|@{~~}c@{~~}|@{~~}c@{~~}|@{~~}c@{~~}|@{~~}c@{~~}|}
		\hline {\bf knot}~$K$  & ~~$HF^+ \left( S^3_{1/r} (K) \right)$ & ~~$HF^+ \left( S^3_{-1/r} (K) \right)$ & ~~$HF^+ \left( S^3_{0} (K) \right)$
		\\
		\hline
		\hline
		${\bf 3_1^r}$ & $\CT^+_{-2} \oplus \Z_{-2}^{r-1}$ & $\CT^+_{0} \oplus \Z_{-1}^{r}$ & $\CT^+_{-1/2} \oplus \CT^+_{-3/2}$ \\
		\hline
		${\bf 3_1^{\ell}}$ & $\CT^+_{0} \oplus \Z_{0}^{r}$ & $\CT^+_{2} \oplus \Z_{1}^{r-1}$ & $\CT^+_{3/2} \oplus \CT^+_{1/2}$ \\
		\hline
		${\bf 4_1}$ & $\CT^+_0 \oplus \Z_{-1}^r$ & $\CT^+_0 \oplus \Z_{0}^r$ & ~~$\CT^+_{1/2} \oplus \CT^+_{-1/2} \oplus \Z_{-1/2}$ \\
		\hline
	\end{tabular}
	\caption{Heegaard Floer homology for $1/r$-surgeries  and $0$-surgeries behaves ``discontinuously'' as $1/r \to 0$, {\it i.e.} $HF^+ \left( S^3_{0} (K) \right)$ can not be viewed as a limit of $HF^+ \left( S^3_{1/r} (K) \right)$ with~$r \to \infty$.}
	\label{tab:HFsurg}
\end{table}

As in the case of $q$-series invariants $\hat Z (M_3)$ computed by the half-index of $T[M_3]$, integrating over $x$ in \eqref{TTvertex}--\eqref{TTtref} or, equivalently, filling all the hollow vertices in the plumbing diagram corresponds to performing a surgery.
For example, the following family of small surgeries on the trefoil knot\footnote{Our orientation conventions agree with \cite{MR1957829,MR2461862,MR3591644,Gukov:2019mnk}. Namely, $S^3_{+1} ({\bf 3_1^r}) = - \Sigma (2,3,5)$.}
\be
S^3_{-1/r} ({\bf 3_1^{\ell}})
\; = \; - S^3_{1/r} ({\bf 3_1^r})
\; = \; \Sigma (2,3,6r-1)
\ee
can be represented by a negative-definite plumbing graph
$$
\begin{array}{ccccccccc}
& & \overset{\displaystyle{-2}}{\bullet} & & & & & & \\
& & \vline & & & & & & \\
\overset{\displaystyle{-2}}{\bullet}
\frac{\phantom{xxxx}}{\phantom{xxxx}}
& \overset{\displaystyle{-2}}{\bullet}
\frac{\phantom{xxxx}}{\phantom{xxxx}}
& \underset{\displaystyle{-2}}{\bullet} &
\frac{\phantom{xxxx}}{\phantom{xxxx}}
\overset{\displaystyle{-2}}{\bullet} &
\frac{\phantom{xxxx}}{\phantom{xxxx}}
\overset{\displaystyle{-2}}{\bullet} &
\frac{\phantom{xxxx}}{\phantom{xxxx}}
\overset{\displaystyle{-2}}{\bullet} &
\frac{\phantom{xxxx}}{\phantom{xxxx}}
\overset{\displaystyle{-2}}{\bullet} &
\frac{\phantom{xxxx}}{\phantom{xxxx}}
\overset{\displaystyle{-3}}{\bullet} &
\frac{\phantom{xxxx}}{\phantom{xxxx}}
\underbrace{\overset{\displaystyle{-2}}{\bullet}
	\frac{\phantom{xxxx}}{\phantom{xxxx}}
	~\cdots 
	\frac{\phantom{xxxx}}{\phantom{xxxx}}
	\overset{\displaystyle{-2}}{\bullet}}_{r-2~\text{times}}
\end{array}
$$
Its Heegaard Floer homology, summarized in Table~\ref{tab:HFsurg}, is the twisted Hilbert space of $T[M_3,U(1)]$ on a disk $D^2$ with impurity $\CS_+$ representing a charged hypermultiplet localized at a point on $D^2$. In particular, it has the ``correction term'' (a.k.a. $d$-invariant) $\Delta = 2$ and $HF_{\text{red}} \big( S^3_{-1/r} ({\bf 3_1^{\ell}}) \big) \cong \Z^{r-1}$ in degree~1, so that the (regularized) Euler characteristic in this case is $\lambda (M_3) + \frac{\Delta (M_3)}{2} = - r +1$, where we used the Casson's surgery formula to compute $\lambda \left( S^3_{-1/r} ({\bf 3_1^{\ell}}) \right) = -r$, see {\it e.g.}~\cite{MR2172483}.

Tracing the surgery exact sequence \cite{MR1957829,MR2113020}, one can see that $HF^+ \left( S^3_{0} (K) \right)$ contains extra terms which are in the kernel of maps to $HF^+ \big( S^3_{1/r} (K) \big)$ for any finite $r$. This is similar to how $\hat Z \big( S^3_{0} (K) \big)$ compares to $\hat Z \big( S^3_{1/r} (K) \big)$. Based on this behavior, it is natural to conjecture that homology $\CH (M_3)$ categorifying $\hat Z (M_3)$ enjoys a surgery exact sequence:
\begin{conj}
	If $K \subset Y$ is a knot in an integral homology 3-sphere $Y$, then we have a long exact sequence
	\be
	\cdots \; \longrightarrow \; \CH \left( Y \right) \; \longrightarrow \; \CH \left( Y_0 (K) \right) \; \longrightarrow \; \CH \left( Y_{+1} (K) \right) \; \longrightarrow \; \CH \left( Y \right) \; \longrightarrow \; \cdots
	\ee
\end{conj}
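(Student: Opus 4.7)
The plan is to categorify the corresponding relation at the level of $\hat Z$ and then lift it to chain complexes using the structure of $T[M_3]$ together with the cobordisms between the three 3-manifolds, paralleling Ozsv\'ath--Szab\'o's proof of the surgery exact triangle for $HF^+$.

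First, I would verify the decategorified statement: the three manifolds $Y$, $Y_0(K)$, $Y_{+1}(K)$ fit into a ``triad'' in which consecutive pairs are related by a single 2-handle attachment, and the three cobordisms compose to a cobordism from $Y$ to $Y$ representing the identity up to a 1-handle/3-handle pair. Combined with Conjecture~\ref{conjWRT} and the integral expressions underlying \eqref{Zforplumbed} and Conjecture~\ref{0-surgery formula}, this should produce a three-term Euler-characteristic relation among $\hat Z(Y)$, $\hat Z(Y_0(K))$, $\hat Z(Y_{+1}(K))$. Concretely, 2-handle attachment corresponds in $T[M_3]$ to coupling an auxiliary $U(1)_b$ gauge multiplet with prescribed Chern--Simons level and FI term (the surgery coefficient), so the three surgery presentations differ by integrating out a single $U(1)_b$ with level $\infty$, $0$, or $+1$. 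This gives residue/difference identities among the integrands of \eqref{Zforplumbed} and \eqref{ZKnresidue} that are the decategorified shadow of the conjectured triangle.

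Next, I would promote these surgery operations to chain-level maps. From the viewpoint of section~\ref{sec:HF}, $\CH(M_3)$ should be the $Q_A$-cohomology of the space of states of $T[M_3]$ on $D^2$ with the special Neumann-type boundary conditions labeled by abelian and almost abelian flat connections. A cobordism $W$ realizing 2-handle attachment along $K$ gives a 4d domain wall in the corresponding family of 3d theories, and hence induces a chain map $\CH(Y)\to\CH(Y')$. Arranging the three cobordisms into the standard ``pair-of-pants'' triangle, one then attempts to verify exactness by the usual two-step argument: (i) show that successive compositions are null-homotopic, with the null-homotopy realized by a count of BPS configurations on the composite cobordism (the physical analogue of Ozsv\'ath--Szab\'o's triangle count of pseudo-holomorphic Whitney triangles); and (ii) deduce that the resulting filtered complex has trivial associated spectral sequence past the $E_2$-page, so that the mapping cone of any two consecutive maps is quasi-isomorphic to $\CH$ of the third manifold.

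The hard part will be twofold. First, $\CH(M_3)$ has not been constructed rigorously, and even the Hilbert space of $T[M_3]$ on $D^2$ is only specified up to the physical data described in section~\ref{sec:HF}; so a proof requires fixing a concrete model (for example, a Koszul-type resolution built from the Coulomb-branch algebra of $T[M_3]$) before exactness can even be formulated. Second, and more substantively, one must produce the null-homotopy on the nose rather than merely at the level of Euler characteristics; this is precisely the step that in Heegaard Floer theory requires a detailed holomorphic triangle count, and here should correspond to counting BPS solitons interpolating between vacua of the effective sigma-model target $\CM_{\text{vacua}}(T[M_3,SU(2)])$ of \eqref{MSU2} on the triangle cobordism. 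A manageable test case is $K=\text{unknot}$ in $Y=S^3$, where $Y_0(K)=S^2\times S^1$ and $Y_{+1}(K)=S^3$, for which the three invariants are computed explicitly (using \eqref{ZhatS1S2} and its analogues in section~\ref{S2xS1}); checking that the mapping cone structure reproduces $\CH(S^2\times S^1)$ would already be strong evidence and would serve as a template for the general argument.
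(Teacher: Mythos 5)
The statement you are trying to prove is not proved in the paper at all: it appears only as a conjecture, motivated by the analogy between the surgery exact triangle for $HF^+$ and the observed ``kernel'' behavior of $\hat Z\big(S^3_0(K)\big)$ relative to $\hat Z\big(S^3_{1/r}(K)\big)$ (cf.\ Table~\ref{tab:HFsurg} and the surrounding discussion). So there is no argument of the authors to compare yours against; your text has to stand on its own as a proof attempt, and as such it is a research program rather than a proof, with gaps at every stage that matters.

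Concretely: (i) $\CH(M_3)$ has no mathematical definition --- you acknowledge this, but then exactness cannot even be formulated, and choosing ``a concrete model'' is precisely the open problem, not a preliminary step. (ii) Your decategorified check is not actually available. It leans on Conjecture~\ref{conjWRT} and the $0$-surgery formula, which are themselves unproven in the paper, and more importantly there is no established three-term relation among the $\hat Z_b$ blocks: the label sets jump discontinuously between $Y$, $Y_{+1}(K)$ (where $b_1=0$ and the labels are abelian flat connections) and $Y_0(K)$ (where $b_1=1$ and one must include the almost abelian labels), which is exactly the discontinuity as $1/r\to 0$ that the paper emphasizes; a residue identity among the integrands of \eqref{Zforplumbed} relating the three manifolds is an open claim, not a shadow you can simply read off. (iii) Your exactness step is incorrect as homological algebra: showing that consecutive compositions are null-homotopic, plus a vague degeneration of a spectral sequence past $E_2$, does not yield an exact triangle. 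In Ozsv\'ath--Szab\'o's proof one needs the triangle-detection lemma, i.e.\ a specific quadratic relation among the chain maps and the \emph{chosen} null-homotopies, verified by a count on higher polygons (quadrilaterals), and you do not identify what BPS/physical data would supply that higher-composition structure for $T[M_3]$. Finally, the unknot test case is too degenerate to serve as a template ($Y_{+1}(\text{unknot})=S^3$), so even the proposed sanity check gives little traction on the general mechanism.
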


%%%%%%%%%%%%%%%%%%%%%%%%%%%%%%%%%%%%%%%%%%%%%%%%

\section{Generalizations and future directions}
\label{sec:discuss}

The results presented here lead to a variety of natural generalizations and questions for further study:

\begin{itemize}

\item {\bf 3d $\CN=2$ gauge theories with non-linear matter:} In some of our examples, we found it useful to describe $T[M_3]$ as a gauge theory with chiral matter multiplets valued in the complex group manifold $G_{\C}$. This might be a promising avenue for constructing $T[M_3]$ in general as well as exploring 3d $\CN=2$ physics on its own right. It plays an important role for describing the full set of vacua \eqref{Mvacua} based on the {\it multiplicative}, not {\it additive}, version of the character variety \eqref{Mflat}. Pursuing this direction may make contact with interesting recent work \cite{Elliott:2018yqm}, which might also be relevant to higher-genus mapping tori and Heegaard splittings.

\item {\bf $\CN=2$ S and T walls:}
As discussed in section~\ref{sec:Mflat} (and illustrated in Figure~\ref{fig:BPSwall}), the S and T walls relevant to $T[M_3]$ for genus-1 mapping tori are $\frac{1}{4}$-BPS (not $\frac{1}{2}$-BPS), {\it i.e.} they preserve only 3d $\CN=2$ supersymmetry.\footnote{Recall one of the lessons from section~\ref{sec:Mflat}: ``when in doubt, think of $G=U(1)$.''}
It would be interesting to understand such $\CN=2$ walls and, possibly, make contact with interesting recent work \cite{Eckhard:2019jgg}, where a particular choice was used. Arrangements of walls should obey equivalence relations (Kirby moves) represented by either known or new dualities of 3d $\CN=2$ theories.

\item {\bf Heegaard boundary conditions:} Close cousins of walls and interfaces are 3d $\CN=2$ boundary conditions associated with handlebodies in a Heegaard decomposition of $M_3$. It would be interesting to understand such boundary conditions, especially in higher genus, and produce a way to compute $\hat Z (M_3)$ using this approach. For example, for $g=2$ one should be able to reproduce the answer for the Poincar\'e sphere.

\item {\bf Bethe vacua and Coulomb branch superpotential:} In Kaluza-Klein reduction on $M_3$, one can think of 3d $\CN=2$ theories with infinitely many fields that correspond to {\it all} $G_{\C}$ connections on $M_3$. Integrating out most of these fields, one finds a collection of SCFTs with finitely many fields, namely $T[M_3]$. On its ``Coulomb branch,'' however, one could study the effective twisted superpotential function $\tilde \CW_{\text{eff}} (x_i)$, obtained by integrating out matter fields. The critical points of this function, called Bethe vacua, are related \cite{Gukov:2016gkn} to simple objects in the category of line operators in 3d $\CN=2$ theory $T[M_3]$. It would be interesting to study this aspect of 3d-3d correspondence further and, in particular, to understand {\it almost abelian flat connections} from this perspective.

\item {\bf General 3-manifolds with $b_1=1$:}
Any 3-manifold with $b_1=1$ can be obtained by a 0-surgery on a null-homologous knot in a rational homology sphere. In particular, the Ohtsuki's perturbative series \cite{Ohtsuki} makes sense. We expect that it is possible to resum the perturbative series into a $q^{1/r}$-series for some $r$. This should be a natural starting point to study $\hat{Z}$ for more general 3-manifolds with $b_1=1$, such as higher genus mapping tori with $b_1=1$.

\item {\bf Borromean rings and $M_3 = T^3$:}
0-surgeries on double twist knots studied in section~\ref{sec:Zhat} are special cases of surgeries on the Borromean rings. This larger class of examples is the next natural family to consider; it includes a 3-torus $M_3 = T^3$ as a prominent member, given by a surgery on the Borromean rings woth all surgery coefficients equal to 0. This example is especially interesting since, currently, there is no robust proposal for what $\hat Z_b (T^3)$ should be.

\item {\bf DAHA and toroidal algebras:} It is well known \cite{witten1989quantum,reshetikhin1991invariants,Elitzur:1989nr} that Chern-Simons TQFT with compact gauge group $G$ has finitely many states on $\Sigma = T^2$ which are in one-to-one correspondence with integrable representations of the affine Kac-Moody algebra at level $k$. For example, in $SU(2)_k$ Chern-Simons theory, gluing along $T^2$ involves summing over $k+1$ states. More generally, the sum runs over elements of the weight lattice $\Lambda$, with level $k$ playing the role of a ``cut-off.'' This has to be compared with the infinite-dimensional space of states in 3d TQFT $\hat Z$ on a torus, $\Sigma = T^2$. In fact, these states can be labeled by elements of {\it two} copies of the lattice, modulo the Weyl group. The fact that there is no cut-off is not surprising since the TQFT $\hat Z$ is supposed to be the ``complex Chern-Simons theory,'' whose Hilbert space on $\Sigma$ is a quantization of a (non-compact!) phase space $\CM_{\text{flat}} (\Sigma, G_{\C})$. However, the fact that gluing along $\Sigma=T^2$ involves summing over elements of $\frac{\Lambda \times \Lambda^{\vee}}{\text{Weyl}}$ suggests that the role of the affine Kac-Moody algebra is replaced by a toroidal algebra of some sort, possibly by a double affine Hecke algebra (DAHA). Further evidence for the latter can be inferred from the Hilbert space of 6d fivebrane theory on $\R \times \Sigma \times S^1 \times_q D^2$ which, for $\Sigma = T^2$, turns out to be an infinite-dimensional representation of spherical DAHA (called the functional representation).

\item {\bf A category associated to $\Sigma$:}
One of the main motivations for studying the $\hat Z$-TQFT is that, based on its physical origin, it is expected to have a categorification, {\it i.e.} a 4d TQFT that associates graded vector spaces to 3-manifolds and a category $\CC_{\Sigma}$ to $\Sigma$. Describing this category, either algebraically or geometrically, would be a major step toward constructing 4d TQFT categorifying $\hat Z$. Note, for $\Sigma = T^2$, the Grothendieck group of this category is the infinite-dimensional space of states on a 2-torus discussed in the previous bullet point. In the language of 3d-3d correspondence, it asserts
\be
\text{Gr} (\CC_{\Sigma}) \; = \; \CH_{T[\Sigma \times S^1]} (D^2)
\ee
More generally, a homological invariant of a mapping torus \eqref{mappingtori} in this 4d TQFT should be given by a categorical trace, or $\Tr_{\varphi} \, \CC_{\Sigma} = \CH_{T[M_3]} (D^2)$.

\item {\bf 3d Modularity:}
To shed light on the algebraic structures mentioned in the previous two bullet points, it would be useful to know what kind of functions $\hat Z_a (M_3;q)$ are, either in general or for particular classes of 3-manifolds.
For 3-manifolds with $b_1=0$, an important step in this direction was recently made in \cite{Bringmann:2018ddv,BringmannMM}.
A natural question, then, is whether a similar analysis can be carried out in the case of 3-manifolds with $b_1 \ge 1$ considered here.

\end{itemize}

\acknowledgments{It is pleasure to thank Ian Agol, Francesco Benini, Miranda Cheng, Francesca Ferrari, Michael Freedman, Sarah Harrison, Jeremy Lovejoy, Ciprian Manolescu, Satoshi Nawata, Du Pei, Pavel Putrov, Larry Rolen, Nathan Seiberg, Cumrun Vafa, and Christian Zickert for help and suggestions.
The work of S.C. was supported by the US Department of Energy under grant DE-SC0010008. The work of S.G. is supported by the U.S. Department of Energy, Office of Science, Office of High Energy Physics, under Award No. DE-SC0011632, and by the National Science Foundation under Grant No. NSF DMS 1664240. The work of S.P. is supported by Kwanjeong Educational Foundation. N.S. gratefully acknowledges the support of the Dominic Orr Graduate Fellowship at Caltech.}

\bibliography{3d3d}
\bibliographystyle{JHEP}
\end{document}